\begin{document}
%\begin{frontmatter}
\title{From 0D to 1D spatial models using \OCMAT}
\author[1]{D.~Gra\ss}
\affil[1]{\myadress. E-mail: \myemail}
\maketitle
\begin{abstract}
We show that the standard class of optimal control models in \OCMAT{}\footnote{\OCMAT{} is a \MATL{} package that provides tools for the numerical analysis of (non-distributed) \OCPROs, specifically over an infinite time horizon. It can be downloaded from \httpOCMAT.}can be used to analyze 1D spatial distributed systems. This approach is an intermediate step on the way to the FEM discretization approach presented in \citet{grassuecker2015}. Therefore, the spatial distributed model is transformed into a standard model by a finite difference discretization. This (high dimensional) standard model is then analyzed using \OCMAT. As an example we apply this method to the spatial distributed shallow lake model formulated in \citet{brockxepapadeas2008}. The results are then compared with those of the FEM discretization in \citet{grassuecker2015}. 
\end{abstract}
%
%\begin{keyword}
\keywords{spatial distributed optimal control model, finite difference discretization, shallow lake model, patterned indifference threshold point}
%\end{keyword}
%\end{frontmatter}
\section{Introduction}
\label{sec:introduction}
The analysis of spatial distributed optimal control problems (over an infinite time horizon) become an important issue in economic modeling, see e.g., \citet{brockxepapadeas2008,brocketal2014}. In technical terms this means that the time evolution of the space distributed states are described by parabolic \PDE s. In \citet{brockxepapadeas2008} the authors provide a local stability analysis of the equilibria, i.e. the solutions of the elliptic \PDE s associated to the derived canonical system. They derive a set of conditions that cause a Turing instability of these equilibria and call the bifurcating patterned equilibria \hoss{} solutions, i.e.~Patterned/Heterogeneous Optimal Steady States in contrast to Homogeneous or Flat Optimal Steady States (\foss).

But, as is shown in \citet{grassuecker2015}, a local stability analysis is not sufficient to prove the optimality of heterogeneous equilibria solutions. To answer the question of optimality the objective values of the paths that converge to the different equilibria have to be compared. This analysis also sheds new light on the discussion about indifference threshold and threshold points, cf.~\citet{kiselevawagener2008a,kiseleva2011}. We therefore introduce a new terminology of defective and non-defective equilibria. This properties distinguishes between optimal equilibria that are stable or unstable.

Since in general the \PDEs{} cannot be solved analytically we have to resort to numerical methods. In \citet{grassuecker2015} we present a numerical procedure relying on a \FEM{} discretization of the derived canonical system combined with a continuation strategy, analogous to the approach in \OCMAT, cf.~\citet{grass2012}. As an example we used the distributed shallow lake model \cref{eq:sldiffusion_model}. 

This allowed us to identify parameter regions where non-defective \hoss{} exist, and where defective \hoss{} and the according stable manifolds separates the regions of attractions of \foss{} and \hoss{} (threshold points/manifolds). Additionally we were able to show the existence of homogeneous/heterogeneous indifference threshold (Skiba) points and calculated the connecting manifold of indifference threshold points.

In this paper we take an intermediate step to the full \FEM{} approximation of the canonical system, to demonstrate the capabilities of the standard optimal control class of \OCMAT, i.e. infinite time horizon problems, where the time evolution of the states is described by \ODE s. To apply the \OCMAT{} processes for the initialization and file generation directly we discretize the \PDE s of the state equations by a finite difference scheme (FD). This yields a number of \ODE s that can be handled by \OCMAT. The numerical results of this approach are then compared with the analogous results of \citet{grassuecker2015}.

The article is structured in the following way. We start with a discussion and introduction of general terms and shortly present the finite difference discretization in \cref{sec:GeneralDefinitions}. In \cref{sec:sl_model_withoutdiff} we summarize important properties and results of the 0D (non-distributed) shallow lake model, introduced and analyzed in \citet{scheffer1998,maeleretal2003,carpenterbrock2004} and \citet{wagener2003}. In the next \cref{sec:sl_model} the 1D spatial distributed shallow lake model is formulated together with its discretized counterpart. The latter model is then numerically analyzed in detail.

\section{General Definitions}
\label{sec:GeneralDefinitions}

\subsection{Models of spatial dimension 0 (0D model) }
\label{sec:ModelsOfSpatialDimension00DModel}

\begin{subequations}
\label{eq:gen_0Dmodel}
\begin{align}
&\max_{u(\cdot)}\int_0^\infty\E^{-\rho t}g(x(t),u(t))\!\Dt\label{eq:gen_0Dmodel_obj}\\
\text{s.t.}\quad&\dot x(t)=f(x(t),u(t))\label{eq:gen_0Dmodel_dyn}\\
&x(0)=x_0\in\R^n.\label{eq:gen_0Dmodel_ic}
\end{align}
\end{subequations}
with $f\in\Cset[2](\R^n\times\R^m,\R^n),\ g\in\Cset[2](\R^n\times\R^m,\R)$.

Let $(x^\opt(\cdot),u^\opt(\cdot))$ be an optimal solution of \cref{eq:gen_0Dmodel}. Then there exists $\lambda(\cdot)$ such that $(x^\opt(\cdot),u^\opt(\cdot),\lambda(\cdot))$ is a solution of the canonical system
\begin{subequations}
\label{eq:gen_0Dmodel_cansys}
\begin{align}
&\dot x(t)=f(x^\opt(t),u(t))\label{eq:gen_0Dmodel_cansys1}\\
&\dot\lambda(t)=\rho\lambda(t)-\Ha(x^\opt(t),u^\opt(t),\lambda(t),\lambda_0)\label{eq:gen_0Dmodel_cansys2}\\
&x^\opt(0)=x_0\label{eq:gen_0Dmodel_cansys4}
\shortintertext{with}
&u^\opt(t)=\argmax_{u}\Ha(x^\opt(t),u,\lambda(t),\lambda_0)\label{eq:gen_0Dmodel_cansys5}
\shortintertext{and}
&\Ha(x,u,\lambda,\lambda_0)\defin\lambda_0 g(x,u)+\lambda^\transp f(x,u),
\end{align}
\end{subequations}
To ease the notation we make the following assumptions
\begin{assumption}
\label{ass:normal}
Problem \cref{eq:gen_0Dmodel} is normal, i.e., $\lambda_0=1$. Therefore, we omit the argument $\lambda_0$.
\end{assumption}
\begin{assumption}
\label{ass:analyticcontrl}
Let $(x^\opt(\cdot),u^\opt(\cdot))$ be an optimal solution and $\lambda(\cdot)$ the according costate. Then, there exists an explicit function 
\begin{align*}
	&u^\ext(x,\lambda)\in\Cset[2]\left(\R^n\times\R^n,\R^m\right),
	\shortintertext{such that for every $t$}
	&\Ha(x^\opt(t),u^\ext(x^\opt(t),\lambda(t)),\lambda(t))=\max_{u}\Ha(x^\opt(t),u,\lambda(t)).
\end{align*}
\end{assumption}
Then an optimal solution $(x^\opt(\cdot),u^\opt(\cdot))$ can be found among the solutions satisfying
\begin{subequations}
\label{eq:cansys0D}
\begin{align}
&\dot x(t)=\bar f(x(t),\lambda(t))\label{eq:cansys0D1}\\
&\dot\lambda(t)=\rho\lambda(t)-\bar\Ha_x(x(t),\lambda(t))\label{eq:cansys0D2}\\
&x(0)=x_0.\label{eq:cansys0D3}
\shortintertext{with}
&\bar f(x(t),\lambda(t))\defin f(x(t),u^\ext(x(t),\lambda(t)))\notag
\shortintertext{and}
&\bar\Ha(x(t),\lambda(t))\defin\Ha(x(t),u^\ext(x(t),\lambda(t)),\lambda(t),1).\label{eq:cansys0D5}
\end{align}
\end{subequations}
Subsequently we will omit the bar sign.
\begin{definition}[\oss]
\label{def:oss}
Let $(x^\opt(\cdot),u^\opt(\cdot))$ with $x^\opt(\cdot)\equiv\hat x\in\R^n$ and $u^\opt(\cdot)\equiv\hat u\in\R^m$ be an optimal solution of problem \labelcref{eq:gen_0Dmodel} with $x(0)=\hat x$ . Then $(\hat x,\hat u)$ is called an \emph{optimal equilibrium} and denoted as \oss.
\end{definition}
Let $(\hat x,\hat\lambda)\in\R^{2n}$ be an equilibrium of the canonical system \cref{eq:cansys0D}. Then $(\hat x,\hat\lambda)\in\R^{2n}$ is denoted as \css{} and 
\begin{equation}
\label{eq:cansys0Djac}
	J(\hat x,\hat\lambda)\defin\evalat{\begin{pmatrix}
	\dfrac{\D\!f(x,\lambda)}{\Dx} &\dfrac{\D\!f(x,\lambda)}{\Dx[\lambda]}\\
	-\dfrac{\D\!\Ha^\ext(x,\lambda)}{\Dx} &r-\dfrac{\D\!\Ha^\ext(x,\lambda)}{\Dx[\lambda]}
\end{pmatrix}}_{(\hat x,\hat\lambda)}
\end{equation}
is the according \emph{Jacobian matrix}, and if there is no ambiguity we simply write $\hat J$. The eigenspaces corresponding to $J(\hat x,\hat\lambda)$ are denoted as
\begin{subequations}
\label{eq:hatjeigspc}
\begin{align}
	&\eigspace{s}(\hat x,\hat\lambda)\defin\{\eigval\in\C:\ J(\hat x,\hat\lambda)\eigvec=\eigval\eigvec\ \text{with}\ \Re\eigval<0\},\ n_s\defin\dim\eigspace{s}(\hat x,\hat\lambda)\\
	&\eigspace{u}(\hat x,\hat\lambda)\defin\{\eigval\in\C:\ J(\hat x,\hat\lambda)\eigvec=\eigval\eigvec\ \text{with}\ \Re\eigval>0\},\ n_u\defin\dim\eigspace{u}(\hat x,\hat\lambda)\\
	&\eigspace{c}(\hat x,\hat\lambda)\defin\{\eigval\in\C:\ J(\hat x,\hat\lambda)\eigvec=0\},\ n_c\defin\dim\eigspace{c}(\hat x,\hat\lambda).
\end{align}
\end{subequations}
\begin{definition}[Saddle Point Property]
\label{def:spp}
Let $(\hat x,\hat\lambda)\in\R^{2n}$ be an equilibrium of \cref{eq:cansys0D}. If 
\begin{equation}
\label{eq:sppdim}
	\dim\eigspace{s}(\hat x,\hat\lambda)=n
\end{equation}
then it is said, that the equilibrium satisfies the \emph{saddle point property (\spp)}. The equilibrium $(\hat x,\hat\lambda)$ is denoted as \cssspp{}. Otherwise it is denoted as \cssnspp{}. The number
\begin{equation}
\label{eq:sppdefect}
	\defect(\hat x,\hat\lambda)\defin n_s-n_u-n_c
\end{equation}
is called the \emph{defect} of $(\hat x,\hat\lambda)$. An equilibrium with defect $\defect(\hat x,\hat\lambda)<0$ is called \emph{defective}, otherwise it is called \emph{non-defective}. If $(\hat x,\hat\lambda)$ is defective and the according $(\hat x,u^\ext(\hat x,\hat\lambda))$ is \oss, then $(\hat x,u^\ext(\hat x,\hat\lambda))$ is called \emph{defective} otherwise it is called \emph{non-defective}.
\end{definition}
\begin{proposition}
\label{prop:equivspp}
Let $(\hat x,\hat\lambda)\in\R^{2n}$ be an equilibrium of \cref{eq:cansys0D} and $\rho>0$. Then $(\hat x,\hat\lambda)$ satisfies the saddle point property $\iff$ every eigenvalue $\eigval$ of the according Jacobian $J(\hat x,\hat\lambda)$ satisfies
\begin{equation}
\label{eq:spp}
	\norm{\Re\eigval-\frac{\rho}{2}}>\frac{\rho}{2}
\end{equation}
\end{proposition}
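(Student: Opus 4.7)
The plan is to exploit the Hamiltonian block structure of $J(\hat x,\hat\lambda)$ in order to show that its spectrum is symmetric about the vertical line $\{\Re\eigval=\rho/2\}$ in $\C$, and then to read off the saddle point property from a simple pair count. Using the envelope identity $\Ha_\lambda=f$, which follows from the first order condition defining $u^\ext$, I first rewrite the Jacobian as
\[
J=\begin{pmatrix} A & B \\ C & \rho I-A^\transp \end{pmatrix},
\]
where $A=\Ha_{\lambda x}$, while $B=\Ha_{\lambda\lambda}$ and $C=-\Ha_{xx}$ are symmetric as Hessian blocks of $\Ha$.

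Next I shift by $\rho/2$, setting $M\defin J-(\rho/2)I$. Its diagonal blocks become $\tilde A\defin A-(\rho/2)I$ and $-\tilde A^\transp$, while the off-diagonals remain the symmetric $B$ and $C$. The heart of the argument is the conjugation of $M$ by the standard symplectic matrix $S=\begin{pmatrix} 0 & I \\ -I & 0 \end{pmatrix}$: a direct $2\times 2$ block multiplication, using the symmetry of $B$ and $C$, yields $S^{-1}MS=-M^\transp$. Similar matrices share a characteristic polynomial, and $M$ and $M^\transp$ have the same spectrum, so the eigenvalues of $M$ (counted with algebraic multiplicity) come in pairs $(\nu,-\nu)$. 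Translating back by $+\rho/2$ shows that the $2n$ eigenvalues of $J$ decompose into $n$ pairs $(\mu,\rho-\mu)$.

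With the pair structure in hand, the final step is a counting argument. Because $\rho>0$, each pair $(\mu,\rho-\mu)$ contains at most one element with $\Re<0$: if $\Re\mu<0$ then $\Re(\rho-\mu)=\rho-\Re\mu>\rho>0$. Summing over the $n$ pairs gives $n_s\le n$, with equality precisely when every pair has one element with $\Re<0$ and the other with $\Re>0$, i.e.~when no eigenvalue of $J$ has real part in the closed interval $[0,\rho]$. This last statement is exactly $\norm{\Re\eigval-\rho/2}>\rho/2$ for every eigenvalue $\eigval$ of $J$, which is \cref{eq:spp}.

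I expect the similarity $S^{-1}MS=-M^\transp$ to be the main obstacle: the envelope identity is needed to put $J$ into the right block form, and the conjugation matrix $S$ has to be chosen so that the symmetry of $B$ and $C$ swaps the off-diagonal blocks correctly. Once this identity is in place the rest is essentially bookkeeping, with a small amount of care needed to handle generalized eigenvalues and pairs of the form $(\rho/2,\rho/2)$, both of which are absorbed by working with the characteristic polynomial throughout.
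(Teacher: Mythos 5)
Your proof is correct, and it follows the same overall strategy as the paper --- establish that the spectrum of $J(\hat x,\hat\lambda)$ is symmetric about the line $\Re\eigval=\rho/2$, then count pairs --- but with one substantive difference: the paper simply cites \citet{grassetal2008} for the existence of the $n$ pairs $(\rho/2+\bar\eigval,\rho/2-\bar\eigval)$, whereas you prove that symmetry from scratch. Your derivation is sound: the envelope identity $\bar f=\bar\Ha_\lambda$ puts the Jacobian into the block form $\begin{pmatrix} A & B\\ C & \rho I-A^\transp\end{pmatrix}$ with $B=\bar\Ha_{\lambda\lambda}$ and $C=-\bar\Ha_{xx}$ symmetric, and the identity $S^{-1}MS=-M^\transp$ for $M=J-(\rho/2)I$ checks out by direct block multiplication, giving $\det(\nu I-M)=\det(-\nu I-M)$ and hence the pairing with correct algebraic multiplicities. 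Your closing count is also slightly more careful than the paper's: the published proof really only argues one implication explicitly (from the pairing plus \spp{} to \cref{eq:spp}), while you make both directions of the equivalence explicit, including the observation that the partner of an eigenvalue with $\Re\eigval<0$ has real part exceeding $\rho$, not merely positive, and that a root at $\rho/2$ is excluded by both sides of the equivalence. What your route buys is self-containedness (no appeal to the external symmetry lemma) at the cost of needing the interior first-order condition $\Ha_u=0$ to justify the envelope identity --- which is implicit in \cref{ass:analyticcontrl} but worth stating when you invoke it.
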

\begin{proof}
In \citet{grassetal2008} it is proved that there exist $n$ (not necessarily distinct) complex numbers $\bar\eigval\in\C$ with $\Re\bar\eigval\ge0$ such that any eigenvalue $\eigval$ of the according Jacobian, satisfies 
\begin{align*}
	&\eigval=\frac{\rho}{2}+\bar\eigval\quad\text{or}\quad\eigval=\frac{\rho}{2}-\bar\eigval.
	\intertext{This symmetry together with \spp{} yields}
	&\norm{\Re\eigval-\frac{\rho}{2}}=\Re\bar\eigval>\frac{\rho}{2}
	\shortintertext{and the last inequality is identical to}
	&\Re\bar\eigval>\frac{\rho}{2}\quad\iff\quad\Re\eigval<0.
\end{align*}
Therefore, $n$ eigenvalues have a negative real part finishing the proof.
\end{proof}
\begin{remark}
\cref{prop:equivspp} allows us to formulate the \spp{} by the equivalent \cref{eq:spp}. A definition of \spp{} claiming \cref{eq:spp} has the advantage that it also can be applied to equilibria of distributed systems. Where a definition relying on the dimension of the stable manifold fails.
\end{remark}
\subsection{Models of spatial dimension 1 (1D model) }
\label{sec:ModelsOfSpatialDimension11DModel}
We assume that the spatially distributed model is derived by the introduction of a diffusion term, whereas the functions $f$ and $g$ are the same as for model \labelcref{eq:gen_0Dmodel}. This yields
\begin{align*}
&\max_{u(\cdot,\cdot)}\int_0^\infty\int_{-L}^L\E^{-\rho t}g(x(z,t),u(z,t))\!\Dx[z]\!\Dt\\
\text{s.t.}\quad&\frac{\partial}{\partial t}x(z,t)=f(x(z,t),u(z,t))+D\frac{\partial^2x(z,t)}{\partial z^2}\\
&\evalat{\frac{\partial x(z,t)}{\partial z}}_{\pm L}=0\\
&x(z,0)=x_0(z),\quad z\in[-L,L].
\end{align*}
Or transforming $[-L,L]$ into $[0,1]$ yields
\begin{subequations}
\label{eq:gen_1Dmodel}
\begin{align}
&\max_{u(\cdot,\cdot)}\int_0^\infty\int_0^1\E^{-\rho t}g(x(z,t),u(z,t))\!\Dx[z]\!\Dt\label{eq:gen_1Dmodel_obj}\\
\text{s.t.}\quad&\frac{\partial}{\partial t}x(z,t)=f(x(z,t),u(z,t))+\frac{D}{(2L)^2}\frac{\partial^2x(z,t)}{\partial z^2}\label{eq:gen_1Dmodel_dyn}\\
&\evalat{\frac{\partial x(z,t)}{\partial z}}_{0,1}=0\label{eq:gen_1Dmodel_bc}\\
&x(z,0)=x_0(z),\quad z\in[0,1].\label{eq:gen_1Dmodel_ic}
\end{align}
\end{subequations}
Applying \PMAXP\ for \PDEs, see e.g., \citet{troeltzsch2009}, we can derive, analogous to \cref{eq:cansys0D}, the canonical system for \labelcref{eq:gen_1Dmodel} as
\begin{subequations}
\label{eq:gen_1Dmodel_model_cansys}
\begin{align}
&\frac{\partial}{\partial t} x(z,t)=f(x(z,t),\lambda(z,t))+D\frac{\partial^2x(z,t)}{\partial x^2}\label{eq:gen_1Dmodel_model_cansys1}\\
&\frac{\partial}{\partial t}\lambda(z,t)=\rho\lambda(z,t)-\frac{\partial\Ha(x(z,t),\lambda(z,t))}{\partial x}-D\frac{\partial^2\lambda(z,t)}{\partial x^2}\label{eq:gen_1Dmodel_model_cansys2}\\
&\evalat{\partial_n x(z,t)}_{0,1}=0\label{eq:gen_1Dmodel_model_cansys3}\\
&\evalat{\partial_n\lambda(z,t)}_{0,1}=0\label{eq:gen_1Dmodel_model_cansys4}\\
&x(z,0)=x_0(z),\quad z\in[0,1].\label{eq:gen_1Dmodel_model_cansys5}
\end{align}
\end{subequations}
For the numerical analysis we can than e.g. use a finite element method (\FEM) for the discretization of \cref{eq:gen_1Dmodel_model_cansys}. For the distributed shallow lake model (cf.~\cref{sec:sl_model}) this has been carried out in \citet{grassuecker2015}. In this article we want to demonstrate the capabilities of \OCMAT{} and transform it into a high dimensional 0D model. 

For the \FDM\ discretization we use an equidistant grid of length $h$ with $Nh=1$ and $N\in\N$.
\begin{subequations}
\label{eq:discretization}
\begin{align}
	&\evalat{\frac{\D}{\Dx[z]}x(z)}_{z_i}\approx\frac{x(z_i+h)-x(z_i-h)}{2h}\\
	&\evalat{\frac{\D^2}{\Dx[z]^2}x(z)}_{z_i}\approx\frac{x(z_i-h)-2x(z_i)+x(z_i+h)}{h^2}\\
	&\int_0^1g(z)\!\Dx\approx h\left(\sum_{i=1}^{N-1}g(z_i)+\frac{g(z_0)+g(z_N)}{2}\right)
\end{align}
\end{subequations}
Thus, model \labelcref{eq:gen_1Dmodel} is approximated by
\begin{subequations}
\label{eq:dgen_1Dmodel}
\begin{align}
&\max_{u_0(\cdot),\ldots,u_N(\cdot)}\frac{1}{N}\left\{\int_0^\infty\E^{-\rho t}\left(\sum_{i=1}^{N-1}g(x_i(t),u_i(t))+\frac{g(x_0(t),u_0(t))+g(x_N(t),u_N(t))}{2}\right)\!\Dt\right\}\label{eq:dgen_1Dmodel_obj}\\
\text{s.t.}\quad&\dot x_i(t)=f(x_i(t),u_i(t))+\frac{DN^2}{(2L)^2}\left(x_{i-1}(t)-2x_i(t)+x_{i+1}(t)\right)\label{eq:dgen_1Dmodel_dyn}\\
&x_1(t)-x_{-1}(t)=x_{N+1}(t)-x_{N-1}(t)=0,\quad t\ge0\label{eq:dgen_1Dmodel_bc}\\
&x_i(0)=x_{i,0}.\label{eq:dgen_1Dmodel_ic}
\shortintertext{with}
&z_i=\frac{i}{N},\quad i=0,\ldots,N\notag\\
&x_i(t)\defin x(z_i,t),\quad u_i(t)\defin u(z_i,t)\notag\\
&x_i\defin x(z_i,\cdot),\quad u_i\defin u(z_i,\cdot)\notag
\end{align}
\end{subequations}
Note that we do not consider the problem under which conditions model \labelcref{eq:dgen_1Dmodel} approximates \labelcref{eq:gen_1Dmodel}. We rather take, in the specific case of the spatial shallow lake model, model \labelcref{eq:dgen_1Dmodel} for granted to see if \OCMAT\ can handle such a problem and compare the results with those from \citet{grassuecker2015}.

The canonical system for model \labelcref{eq:dgen_1Dmodel} becomes
\begin{subequations}
\label{eq:dcansys1D}
\begin{align}
&\dot x_i(t)=f(x_i(t),\lambda_i(t))+\diffcoeff^{(x)}_i(t)\label{eq:dcansys1D1}\\
&\dot\lambda_i(t)=\rho\lambda_i(t)-\Ha_x(x_i(t),\lambda_i(t))-\diffcoeff^{(\lambda)}_i(t)\label{eq:dcansys1D2}\\
&x_i(0)=x_{i,0}.\label{eq:dcansys1D3}
\shortintertext{and}
&u_i^\ext=u_i(x_i,\lambda_i)\notag\\
&\tilde D\defin\frac{DN^2}{(2L)^2}\notag\\
&\diffcoeff^{(x)}_i\defin\begin{cases}
	2\tilde D(x_1-x_0) & i=0\\
	\tilde D(x_{i-1}-2x_i+x_{i+1}) & i=1,\ldots,N-1\\
	2\tilde D(x_{N-1}-x_N) & i=N
\end{cases}\notag\\
&\diffcoeff^{(\lambda)}_i\defin\begin{cases}
	\tilde D(\lambda_1-2\lambda_0) & i=0\\
	\tilde D(2\lambda_0-2\lambda_1+\lambda_2) & i=1\\
	\tilde D(\lambda_{i-1}-2\lambda_i+\lambda_{i+1}) & i=2,\ldots,N-2\\
	\tilde D(\lambda_{N-2}-2\lambda_{N-1}+2\lambda_N) & i=N-1\\
	\tilde D(\lambda_{N-1}-2\lambda_N & i=N)
\end{cases}\notag
\end{align}
\end{subequations}
To abbreviate notation we introduce
\begin{align*}
	&x^d\defin(x_0^\transp,\ldots,x_N^\transp)^\transp\in\R^{n(N+1)}\\
	&\lambda^d\defin(\lambda_0^\transp,\ldots,\lambda_N^\transp)^\transp\in\R^{n(N+1)}\\
	&u^d\defin(u_0^\transp,\ldots,u_N^\transp)^\transp\in\R^{m(N+1)}.
\end{align*}
\begin{definition}[\foss{} and \hoss]
\label{def:foss}
Let $(x^{d,\opt}(\cdot),u^{d,\opt}(\cdot))$ with $x^{d,\opt}(\cdot)\equiv\hat x^d\in\R^{n(N+1)}$ and $u^{d,\opt}(\cdot)\equiv\hat u^d\in\R^{m(N+1)}$ be an optimal solution of problem \labelcref{eq:gen_0Dmodel} with $x^d(0)=\hat x^d$ . If 
\begin{equation*}
	\hat x_0=\hat x_1=\cdots=\hat x_N=\hat x\in\R^n
\end{equation*}
then $(\hat x^d,\hat u^d)$ is called a \emph{flat (homogeneous) optimal steady state (\foss)}, otherwise it is called an \emph{patterned (heterogeneous) optimal steady state (\hoss)}.
\end{definition}
\begin{definition}[\fcss{} and \hcss]
\label{def:fssnhss}
Let $(\hat x^d, \hat\lambda^d)\in\R^{2n(N+1)}$ be an equilibrium of the canonical system \cref{eq:dcansys1D}. Then $(\hat x^d, \hat\lambda^d)$ is called a \emph{flat (homogeneous) steady state (\fcss)} $\iff$ 
\begin{equation}
\label{eq:fss}
	\hat x_0=\hat x_1=\cdots=\hat x_N=\hat x
\end{equation}
otherwise it is called a \emph{patterned (heterogeneous) steady state (\hcss)}. If a \fcss{} (\hcss) satisfies \spp{} it is denoted as \fcssspp{} (\hcssspp) otherwise it is denoted as \fcssnspp{} (\hcssnspp).
\end{definition}
\begin{definition}[State-Costate space]
\label{def:statecostatespace}
Let $(x^d(\cdot), \lambda^d(\cdot))$ be a solution of the canonical system \cref{eq:dcansys1D}. Then the representation $(\Norm{x^d}(\cdot), \Norm{\lambda^d}(\cdot))$ with
\begin{equation}
\label{eq:normx}
	\Norm{y^d_j}(t)\defin\frac{1}{N}\left(\sum_{i=1}^{N-1}\Norm{y^i_j(t)}+\frac{\Norm{y^0_j(t)}+\Norm{y^N_j(t)}}{2}\right),\quad y=x,\ \text{or}\ y=\lambda,\ j=1,\ldots,n.
\end{equation}
is called a \emph{solution path in the state-costate space}.
\end{definition}
\begin{remark}
In \citet{brockxepapadeas2008} \foss{} (\hoss) were introduced as \emph{flat (patterned) optimal equilibria} of the canonical system satisfying \spp. We enhanced this terminology for two reasons
\begin{enumerate}
	\item For better clearness we decided to make a further distinction between the canonical and optimal system. 
	\item There exist optimal equilibria that do not satisfy \spp, cf.~\cref{sec:HOSSNotSatisfyingSPP}.
\end{enumerate}
\end{remark}

\section{The shallow lake model without spatial diffusion}
\label{sec:sl_model_withoutdiff}
A well known version of the shallow lake model, see e.g.~\citet{wagener2003}, can be formulated as
\begin{subequations}
\label{eq:sl_model}
\begin{align}
&\max_{u(\cdot)}\int_0^\infty\E^{-\rho t}\left(\ln(u(t))-cP(t)^2\right)\!\Dt\label{eq:sl_model_obj}\\
\text{s.t.}\quad&\dot P(t)=u(t)-bP(t)+\frac{P(t)^2}{1+P(t)^2}\label{eq:sl_model_dyn}\\
&P(0)=P_0>0.\label{eq:sl_model_ic}
\end{align}
\end{subequations}
We sometimes refer to \modelcref{eq:sl_model} as \emph{0D shallow lake model}. 

By \PMAXP\ we find the canonical system
\begin{subequations}
\label{eq:sl_model_cansys}
\begin{align}
&\dot P(t)=u^\ext(t)-bP(t)+\frac{P(t)^2}{1+P(t)^2}\label{eq:sl_model_cansys1}\\
&\dot \lambda(t)=2cP(t)+\lambda(t)\left(\rho+b-\frac{2P(t)}{\left(1+P(t)^2\right)^2}\right)\label{eq:sl_model_cansys2}\\
&P(0)=P_0\label{eq:sl_model_cansys4}
\shortintertext{with}
&u^\ext(t)=-\frac{1}{\lambda(t)}.
\end{align}
\end{subequations}
Let $(P^\opt(\cdot),u^\opt(\cdot))$ be the optimal solution of \cref{eq:sl_model}; then $P^\opt(\cdot)$ is the unique solution of the \IVP
\begin{subequations}
\label{eq:sl_model_optsys}
\begin{align}
&\dot P(t)=u^\opt(t)-bP(t)+\frac{P(t)^2}{1+P(t)^2}\label{eq:sl_model_optsys1}\\
&P(0)=P_0.\label{eq:sl_model_optsys2}
\end{align}
\end{subequations}
The \odecref{eq:sl_model_optsys1} is called the \emph{optimal system} for $u^\opt(t)$. In \citet{wagener2003} it is proved that every optimal solution $(P^\opt(\cdot),u^\opt(\cdot))$ for arbitrary $P_0>0$ converges to an equilibrium $(\hat P,\hat u)$ of the optimal system with $\hat u>0$, usually depending on $P_0$.

A detailed bifurcation analysis in the parameter space $(b,c)$, see, e.g.,~\citet{wagener2003}, reveals the existence of regions in the parameter space where the optimal system consists of
\begin{itemize}
	\item One globally stable optimal equilibrium $(\hat P,\hat u)$.
	\item Two locally stable equilibria $(\hat P_o,\hat u_o)$ and $(\hat P_e,\hat u_e)$. These are separated by one of the following state values
	\begin{itemize}
		\item The state value $\hat P_u$ of an unstable optimal equilibrium $(\hat P_u,\hat u_u)$.
		\item An indifference threshold point $P_I$ also called Skiba point. 
	\end{itemize}
\end{itemize}
Thus, we can give a full classification of the optimal solutions. In the case that three optimal equilibria exist we choose $\hat P_o$ and $\hat P_e$ such that $\hat P_o<\hat P_u<\hat P_e$. There are intermediate cases (bifurcation cases) where equality holds that are not specifically mentioned. We also refer to $\hat P_e$ as the \emph{eutrophic} and to $\hat P_o$ as the \emph{oligotrophic} equilibrium.
\begin{description}
	\item[Global stable:] For any initial state $P_0>0$ there exists a unique solution $(P^\opt(\cdot),u^\opt(\cdot))$ that converges to $(\hat P,\hat u)$, which is independent of $P_0$. See \cref{fig:sl_multipleequc}.
	\item[Local stable I:] For any initial state $0<P_0<\hat P_u$ there exists a unique solution $(P^\opt(\cdot),u^\opt(\cdot))$ that converges to $(\hat P_o,\hat u_1)$, which is independent of $P_0$. For $P_0>\hat P_u$ there exists a unique solution $(P^\opt(\cdot),u^\opt(\cdot))$ that converges to $(\hat P_e,\hat u_2)$, which is independent of $P_0$. For $P_0=\hat P_u$ the optimal solution is $(P^\opt(\cdot),u^\opt(\cdot))\equiv(\hat P_u,\hat u_u)$. See \cref{fig:sl_optmultipleequc}. $\hat P_u$ is called a \emph{threshold point}.
	\item[Local stable II:] The first two statements of the previous case remain true, replacing $\hat P_u$ by $P_I$. For $P_0=P_I$ there exist two optimal solutions $(P_i^\opt(\cdot),u_i^\opt(\cdot)),\ i=1,2$ that converge to $(\hat P_o,\hat u_1)$ and $(\hat P_e,\hat u_2)$ with $\hat P_o<P_I<\hat P_e$. See \cref{fig:sl_optmultipleuniqueequc}. $P_I$ is called an \emph{indifference threshold point} or \emph{Skiba point}.
\end{description}
From the perspective of optimal control theory the last two cases are of specific interest. The first case is often referred to as \emph{history dependence}, i.e., the optimal solution depends on its initial starting point. In the second case we additionally observe the non-uniqueness of the optimal solution, i.e., \emph{indifference}. For a detailed discussion and description of the underlying bifurcations we refer to \citet{kiselevawagener2008a,kiseleva2011}.

The parameter values for these two prototypical cases are specified in \cref{tb:base_par}. In the first case we find an indifference threshold point and in the second case a threshold point.

\begin{table}[tbp] \centering%
\begin{tabular}{c|c|c|c|c|c|c|c|}
\multicolumn{4}{c}{Model \labelcref{eq:sl_model}} &\multicolumn{3}{|c}{Model \labelcref{eq:sldiffusion_model} specific} \\\hline
Scenario & $\rho$ & $c$ & $b$ & $D$ & $L$ & $N$\\ \hline
I & $0.03$ & $0.5$ & $0.65^\ast$ & 0.5 & $2\pi/0.44$ & 50\\ \hline
II & $0.3$ & $3.5^\ast$ & $0.55$ & 0.5 & $2\pi/0.44$ & 50\\ \hline
\end{tabular}%
\caption{The parameter values for the two considered scenarios for the 0D and 1D model. The values with the superscript $^\ast$ denote the free parameter.}
\label{tb:base_par}%
\end{table}%

\paragraph{Bifurcation-analysis}
\label{sec:BifurcationAnalysis}
Anyhow, this classification is the result of an intensive numerical analysis of the canonical system \cref{eq:sl_model_cansys}. This analysis covers a bifurcation analysis of its equilibria (see \cref{fig:sl0D_bifan}) and the calculation of the related stable paths and their objective value. The numerical computation is necessary since the local properties of an equilibrium $(\hat P,\hat\lambda)$ of \cref{eq:sl_model_cansys} let us not deduce that the corresponding equilibrium $(\hat P,\hat u)$ with $\hat u=1/\hat\lambda$ is an optimal equilibrium of the optimal system \cref{eq:sl_model_optsys1}. This is specifically true in the case that multiple equilibria of the canonical system exist. Therefore, there is no one-to-one correspondence between the bifurcations of the optimal and canonical system.

\paragraph{Unique optimal equilibrium}
\label{sec:UniqueOptimalEquilibrium}
The importance of a numerical analysis of the stable paths comes specifically clear in scenario I with $b=0.75$. In \cref{fig:sl_multipleequ} we see that there exist three equilibria in the canonical system (two saddles and one unstable focus), whereas the optimal system only consists of one globally stable equilibrium. 

\paragraph{Indifference threshold point}
\label{sec:IndifferenceThresholdPoint}
For $b=0.65$ the number and properties of the equilibria of the canonical system remain the same, but the optimal system consists of two locally stable equilibria separated by an indifference threshold point $P_I$, cf.~\cref{fig:sl_optmultipleequ}. Thus, a local stability analysis of the equilibria has to be supported by the global analysis the according stable manifolds. Even if the first task can be realized analytically, the stable paths can only be calculated analytically in very rare cases. Usually and specifically in our case we have to resort to numerical methods to solve the latter task.

\paragraph{Indifference point}
\label{sec:IndifferencePoint}
For scenario II with $c=3.5$ the canonical system exhibits two saddles and one unstable node (cf.~\cref{fig:sl_optmultipleuniqueequa}). Calculating the stable paths and comparing the objective values we find that the unstable equilibrium is a threshold point (cf.~\cref{fig:sl_optmultipleuniqueequb}). This unstable equilibrium separates the regions of attraction for the two locally stable equilibria corresponding to the two saddles(cf.~\cref{fig:sl_optmultipleuniqueequc}). The second case with $c=3.0825$ yields qualitatively the same result and is not depicted.
\begin{figure}
\centering
	\subfloat[Equilibria bifurcation diagram]{\includegraphics[width=0.45\textwidth]{./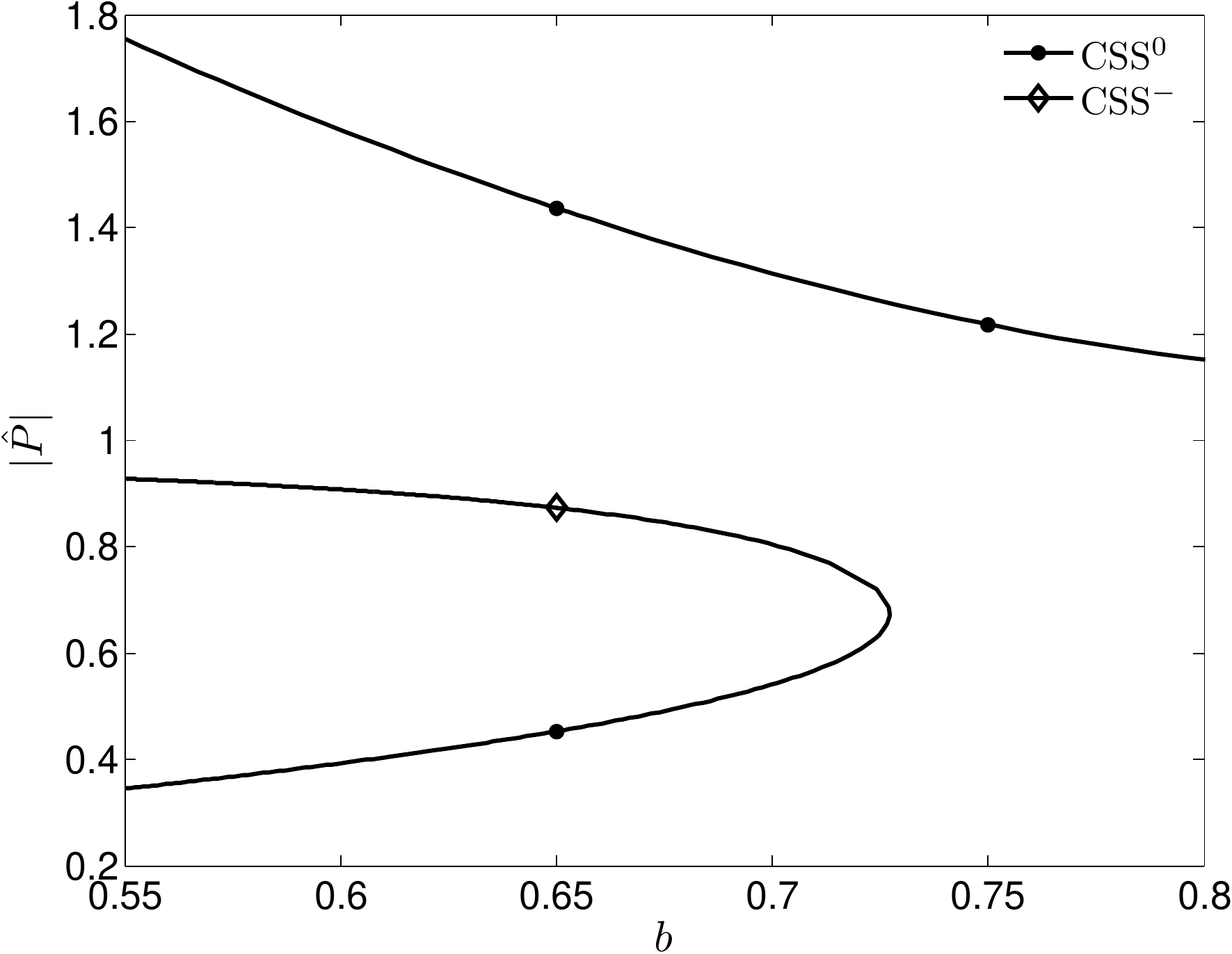}\label{fig:sl0D_bifan1}}\hfill
	\subfloat[Equilibria bifurcation diagram]{\includegraphics[width=0.45\textwidth]{./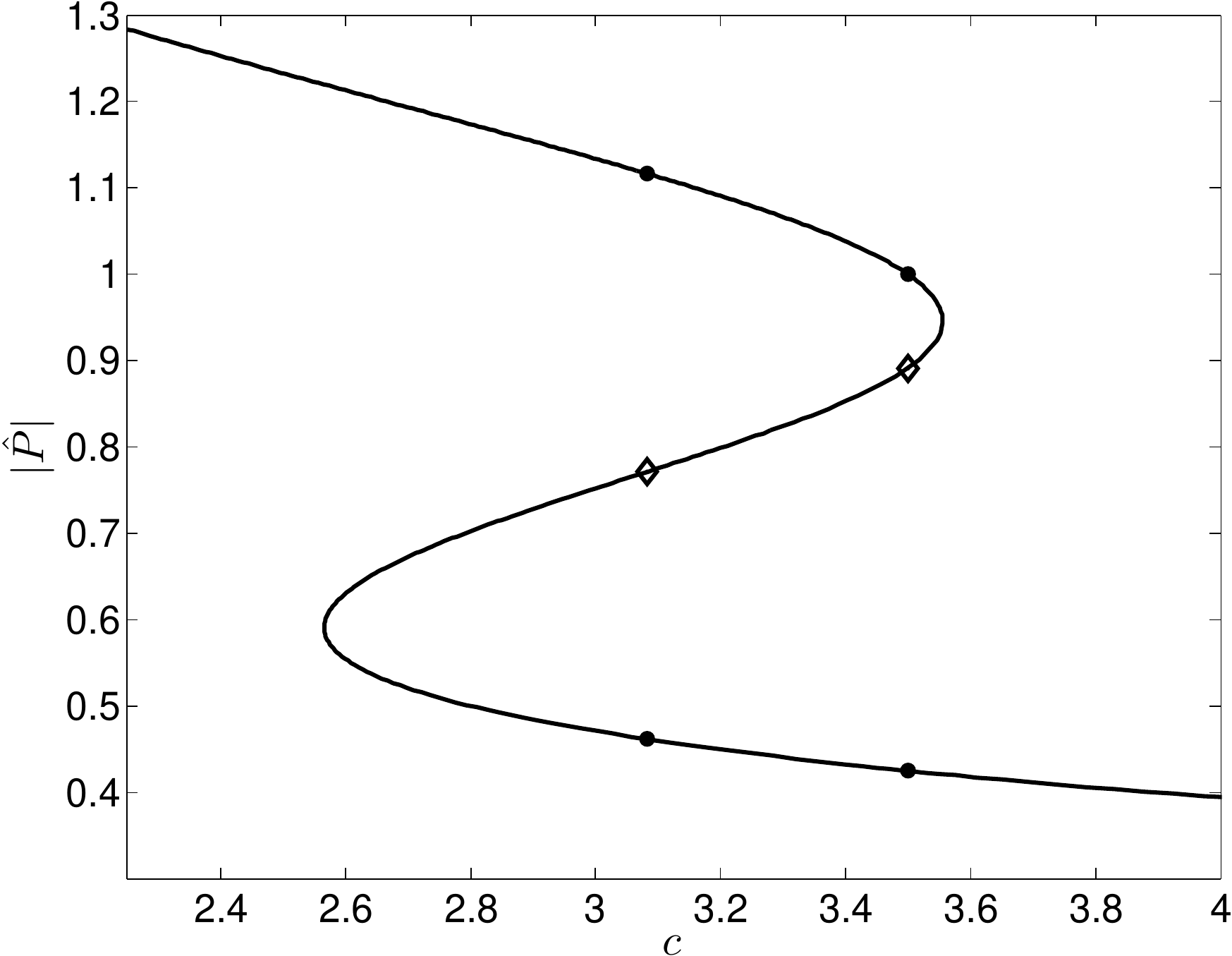}\label{fig:sl0D_bifan2}}%\\
\caption[]{The $\bullet$ denote equilibria satisfying the \spp{} and $\diamond$ denote equilibria not satisfying \spp. The figures in the first row show the bifurcation parameter versus the (absolute) state value. In the second row the norm of the equilibrium is plotted versus the bifurcation parameter. Panel \subref{fig:sl0D_bifan1} ($\rho=0.03, c=0.5$ and varying $b$) show the existence of two separated branches of equilibria. In the interval $[0, 0.727]$ there exist three equilibria. Panel \subref{fig:sl0D_bifan2}  ($\rho=0.3, b=0.55$ and varying $c$) shows the existence of one branch of equilibria. In the interval $[2.566, 3.556]$ there exist three equilibria.}
\label{fig:sl0D_bifan}
\end{figure}

\begin{figure}
\centering
	\subfloat[State-Costate]{\includegraphics[width=0.33\textwidth]{./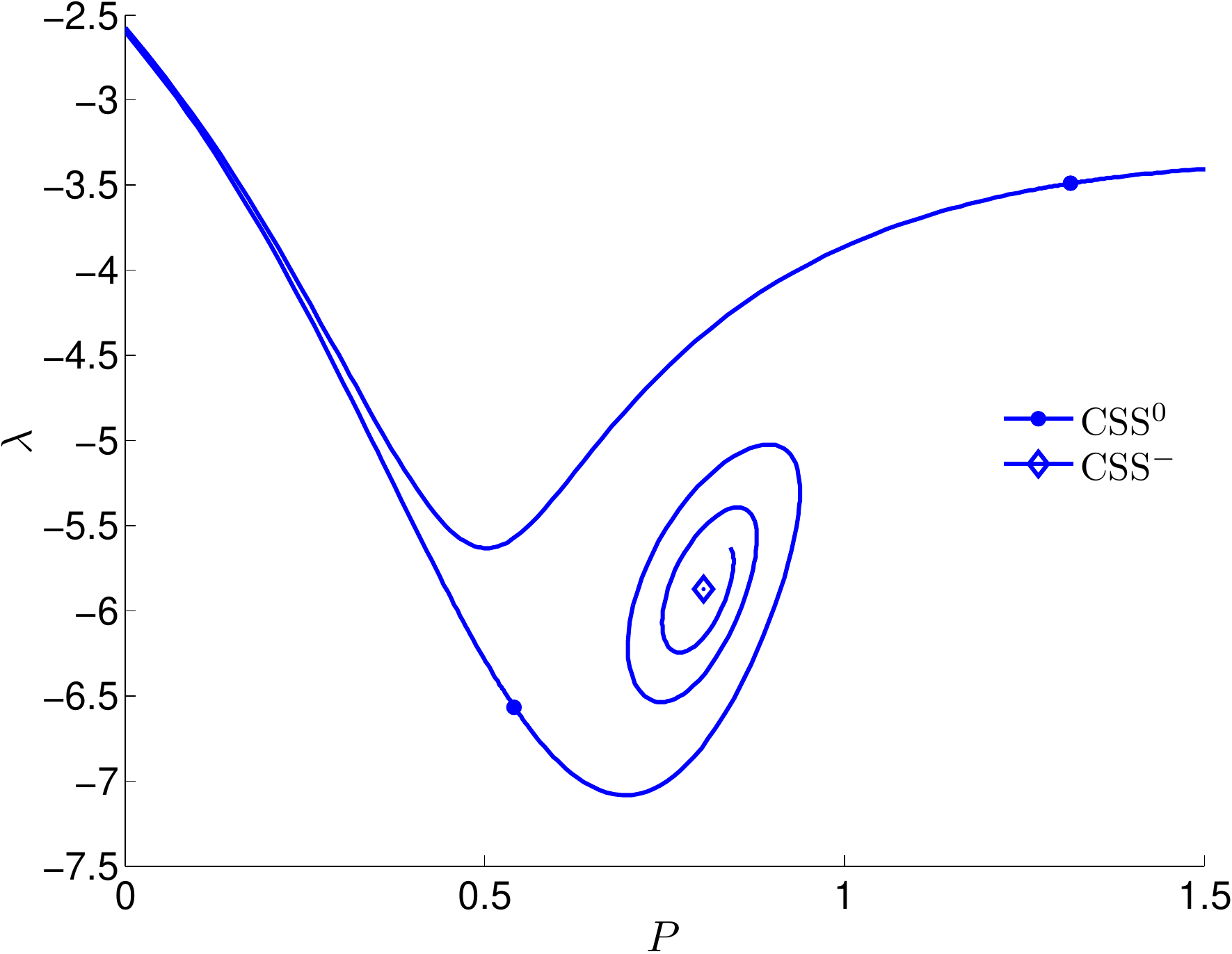}\label{fig:sl_multipleequa}}
	\subfloat[State-Objectivevalue]{\includegraphics[width=0.33\textwidth]{./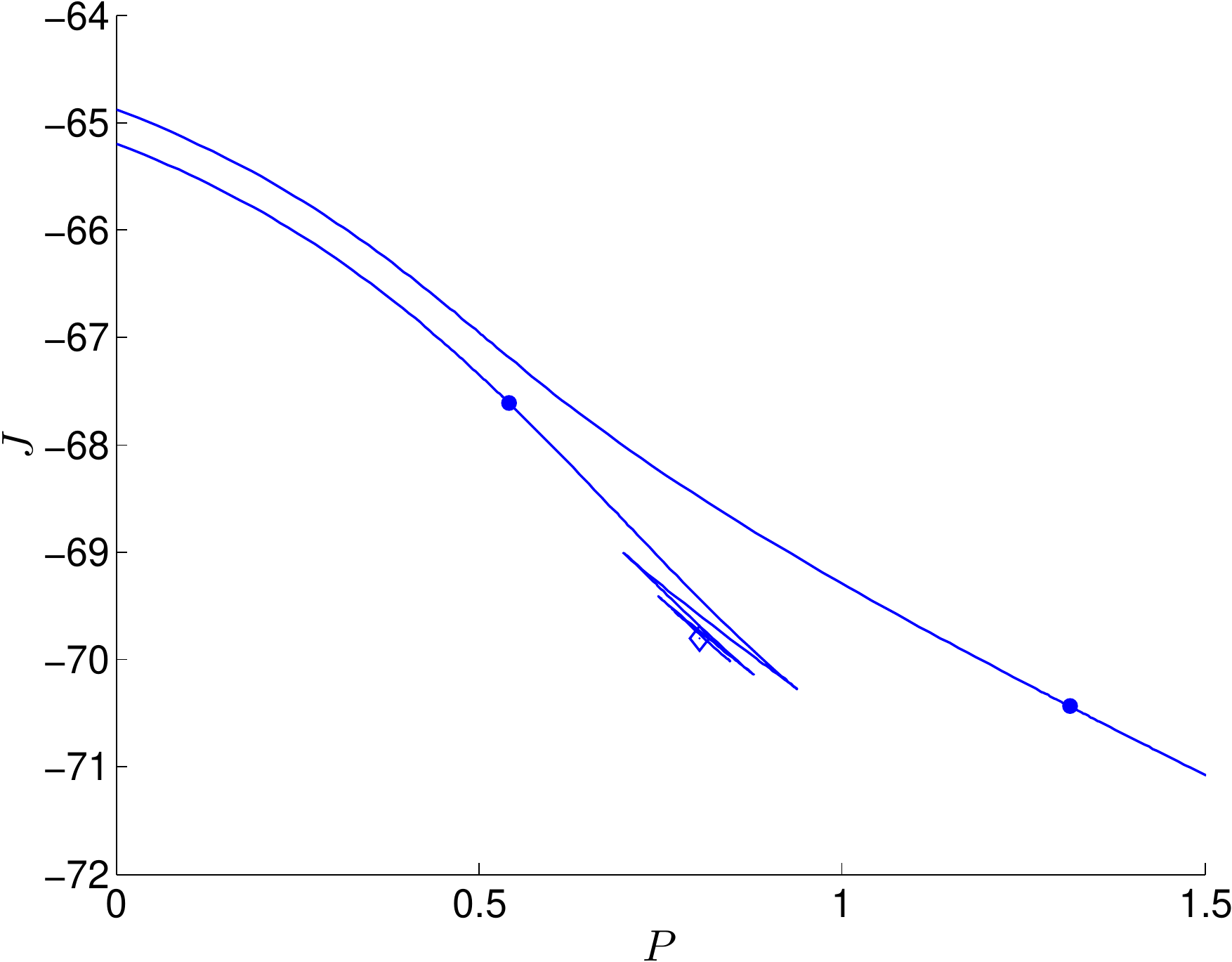}\label{fig:sl_multipleequb}}
	\subfloat[Optimal System]{\includegraphics[width=0.33\textwidth]{./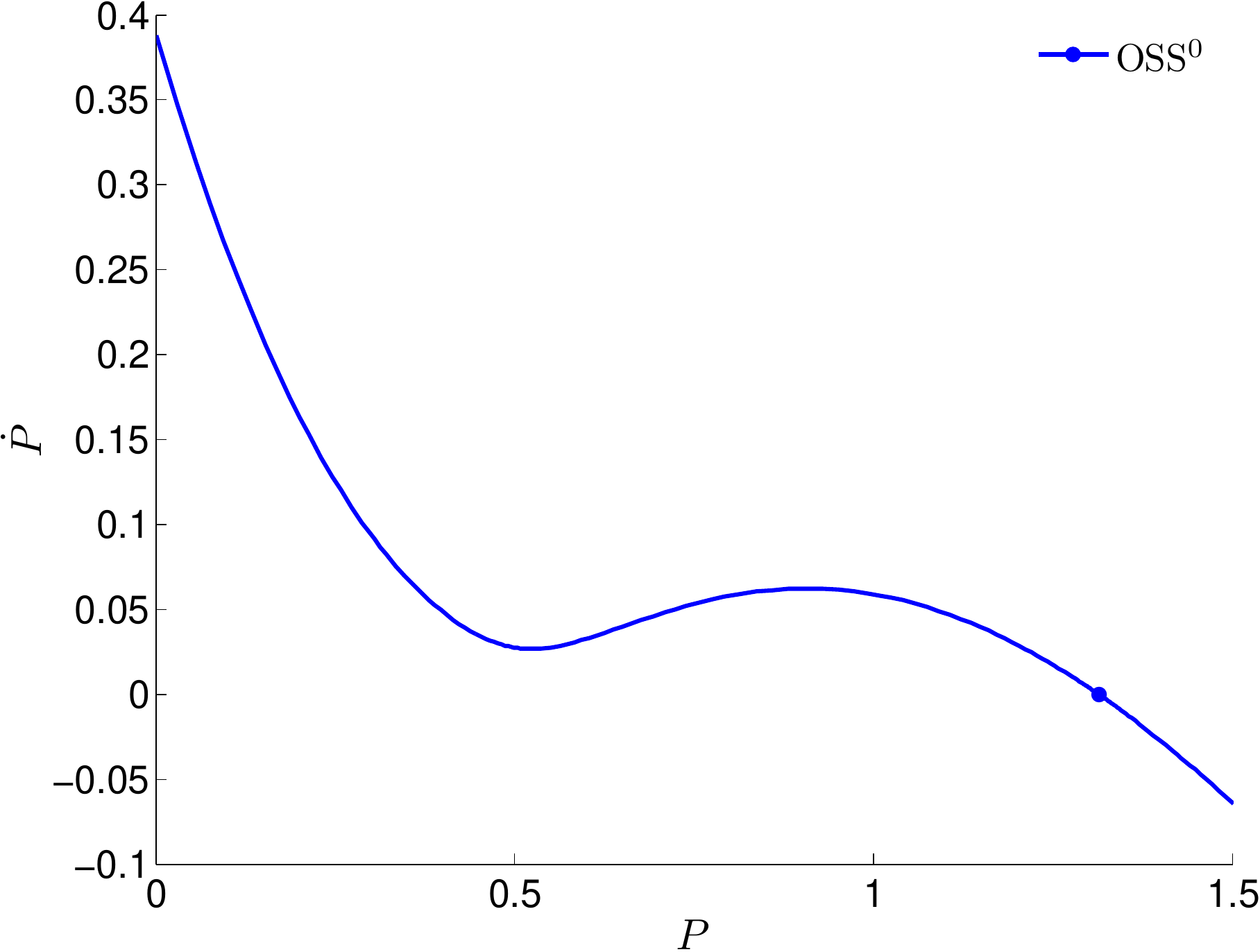}\label{fig:sl_multipleequc}}
\caption[]{For $\rho=0.03,\ c=0.5$ and $b=0.75$ there exist three equilibria in the canonical system \subref{fig:sl_multipleequa}. The optimal system \subref{fig:sl_multipleequc} only consist of one globally optimal equilibrium. In panel \subref{fig:sl_multipleequa} the $\bullet$ denote saddles of the canonical system, the $\diamond$ an unstable focus. In panel \subref{fig:sl_multipleequc} the $\bullet$ denotes the globally stable equilibrium.}
\label{fig:sl_multipleequ}
\end{figure}

\begin{figure}
\centering
	\subfloat[State-Costate]{\includegraphics[width=0.33\textwidth]{./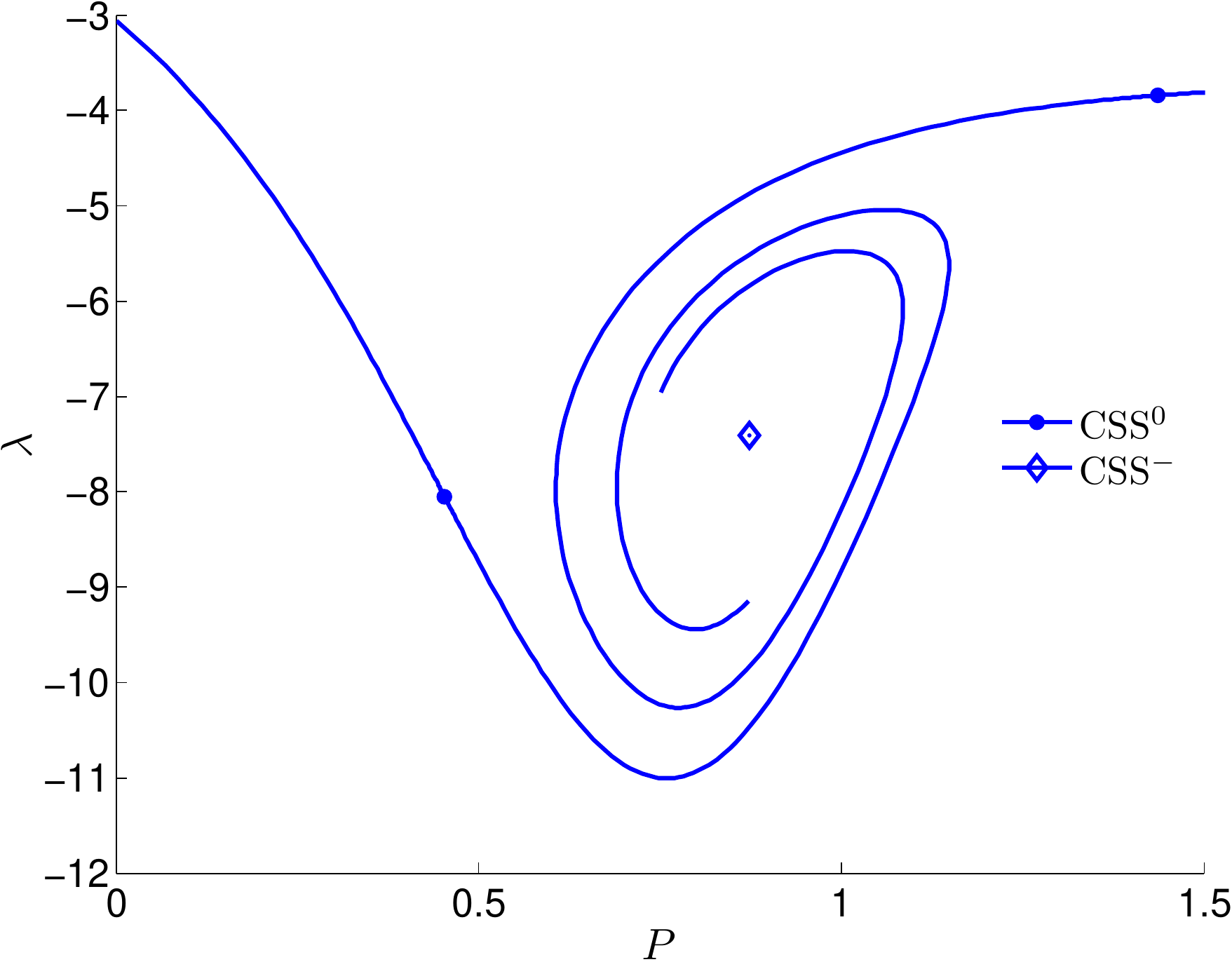}\label{fig:sl_optmultipleequa}}
	\subfloat[State-Objectivevalue]{\includegraphics[width=0.33\textwidth]{./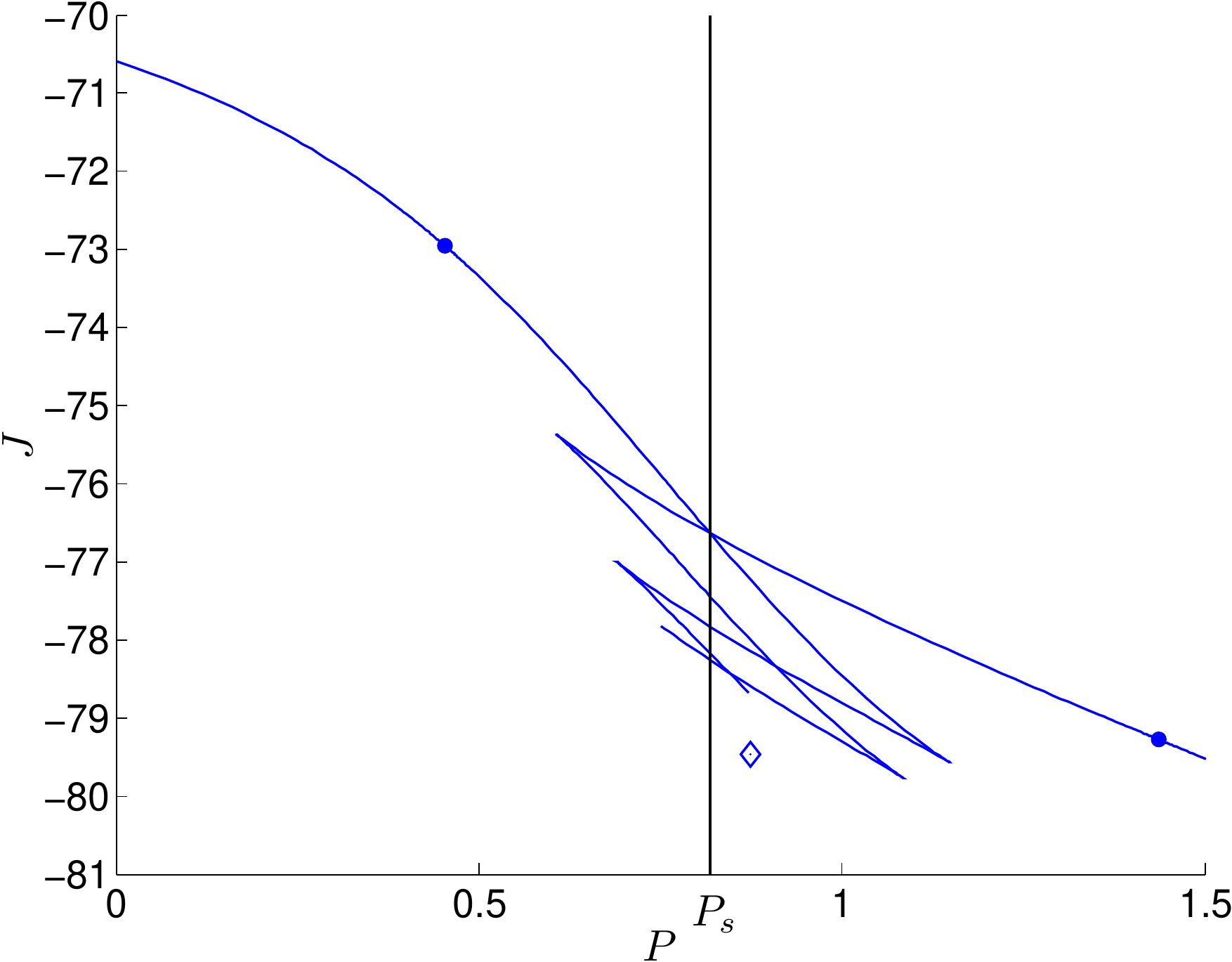}\label{fig:sl_optmultipleequb}}
	\subfloat[Optimal System]{\includegraphics[width=0.33\textwidth]{./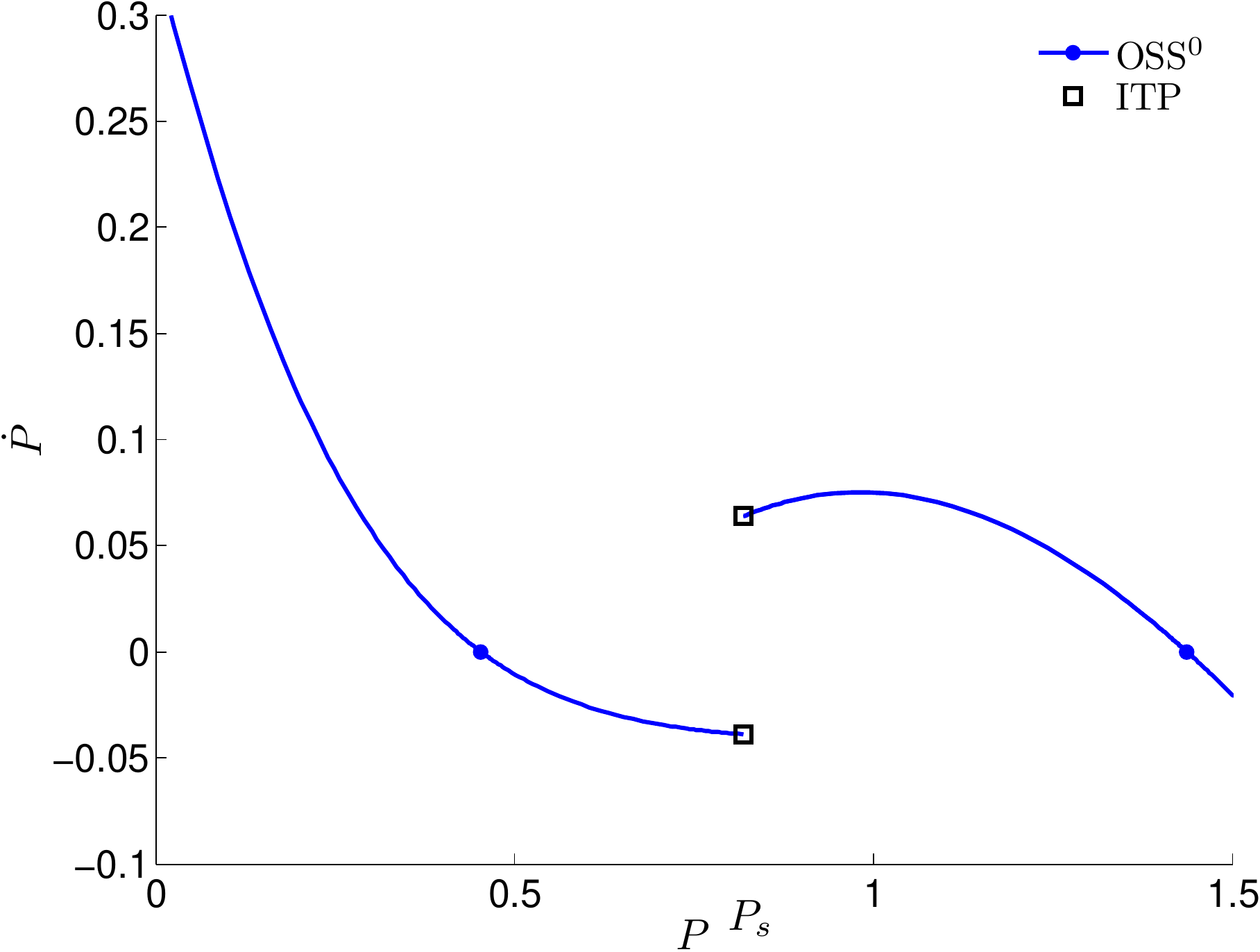}\label{fig:sl_optmultipleequc}}
\caption[]{For $\rho=0.03,\ c=0.5$ and $b=0.65$ there exist three equilibria in the canonical system \subref{fig:sl_optmultipleequa}. The optimal system \subref{fig:sl_optmultipleequc} consists of two locally optimal equilibria. The basins of attraction are separated by an indifference threshold point (Skiba point) $P_I$, with a discontinuous dynamics at $P_I$. In panel \subref{fig:sl_optmultipleequa} the $\bullet$ denote saddles of the canonical system, the $\diamond$ an unstable focus. In panel \subref{fig:sl_optmultipleequc} the $\bullet$ denote the locally stable equilibria.}
\label{fig:sl_optmultipleequ}
\end{figure}

\begin{figure}
\centering
	\subfloat[State-Costate]{\includegraphics[width=0.33\textwidth]{./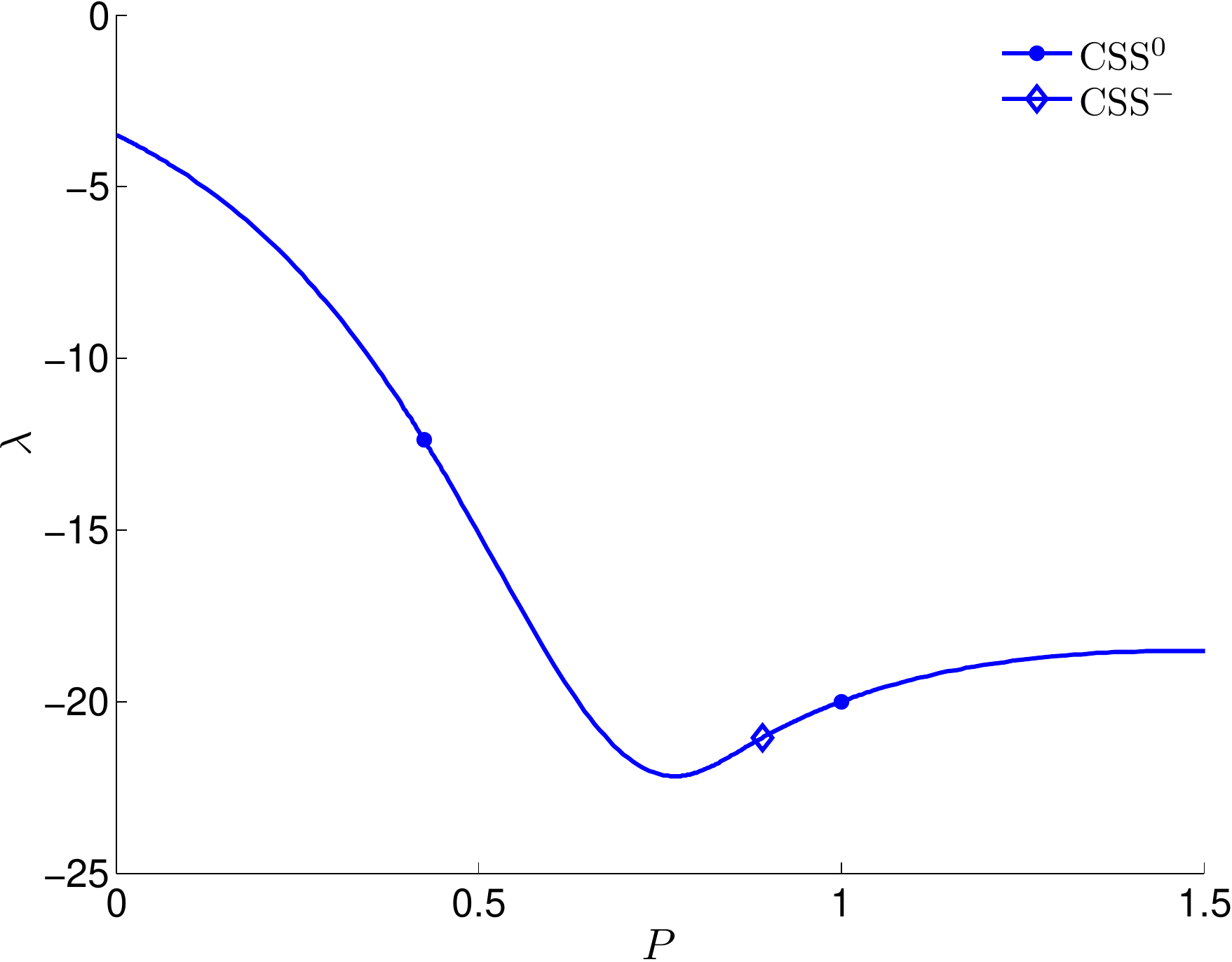}\label{fig:sl_optmultipleuniqueequa}}
	\subfloat[State-Objectivevalue]{\includegraphics[width=0.33\textwidth]{./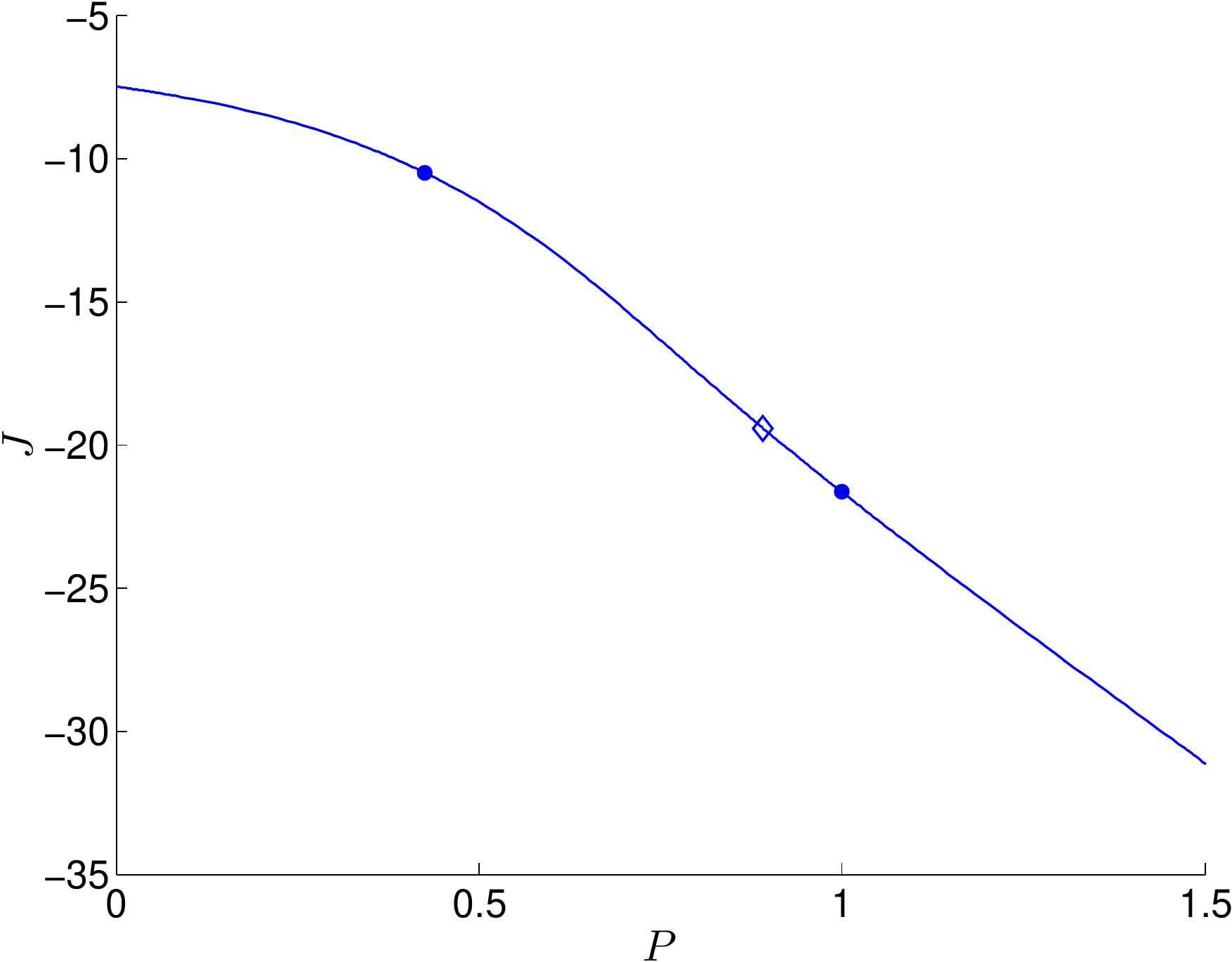}\label{fig:sl_optmultipleuniqueequb}}
	\subfloat[Optimal System]{\includegraphics[width=0.33\textwidth]{./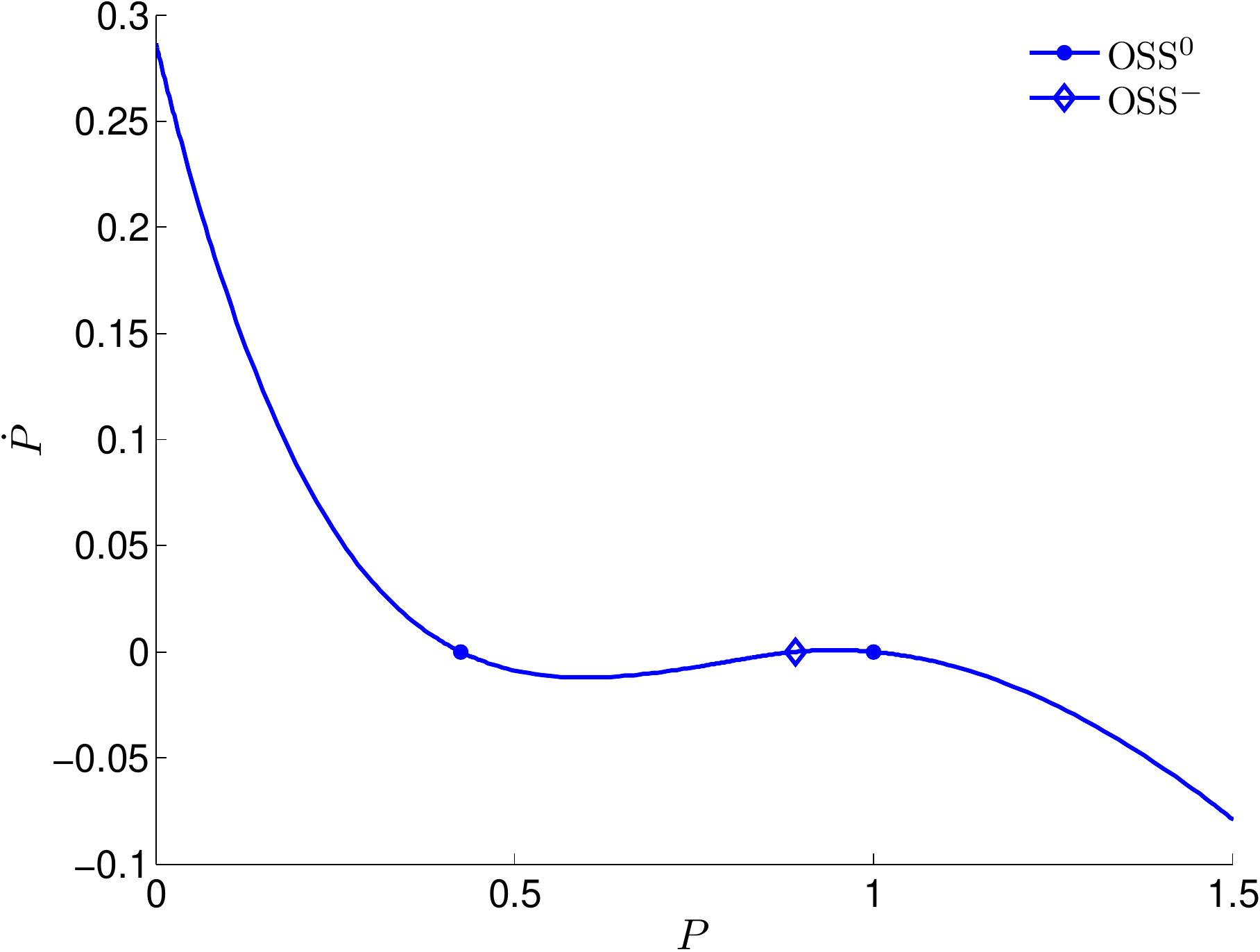}\label{fig:sl_optmultipleuniqueequc}}
\caption[]{For $\rho=0.3,\ c=3.5$ and $b=0.55$ there exist three equilibria in the canonical system \subref{fig:sl_optmultipleequa}. The optimal system \subref{fig:sl_optmultipleequc} consists of two locally optimal equilibria. The basins of attraction are separated by the optimal unstable equilibrium $\hat P_u$. In panel \subref{fig:sl_optmultipleuniqueequa} $\bullet$ denote saddles of the canonical system, $\diamond$ an unstable node. In panel \subref{fig:sl_optmultipleuniqueequc} $\bullet$ denote the locally stable optimal equilibria and $\diamond$ denotes the optimal unstable node.}
\label{fig:sl_optmultipleuniqueequ}
\end{figure}

\section{The shallow lake model with spatial diffusion}
\label{sec:sl_model}
An extension of the shallow lake model \labelcref{eq:sl_model} to the class of spatially distributed models, proposed in \citet{brockxepapadeas2008}, is given by
\begin{subequations}
\label{eq:sldiffusion_model}
\begin{align}
&\max_{u(\cdot,\cdot)}\int_0^\infty\E^{-\rho t}\int_\Omega \ln(u(x,t))-cP(x,t)\!\Dx\Dt\label{eq:sldiffusion_model_obj}\\
&\text{s.t.}\quad\frac{\partial}{\partial t} P(x,t)=u(x,t)-bP(x,t)+\frac{P(x,t)^2}{1+P(x,t)^2}+D\frac{\partial^2}{\partial x^2} P(x,t)\label{eq:sldiffusion_model_dyn}\\
&\quad\evalat{\partial_n P(x,t)}_{\partial\Omega}=0\label{eq:sldiffusion_model_bc}\\
&\quad\evalat{P(x,t)}_{t=0}=P_0(x),\quad x\in\Omega=[-L,L]\subset\R.\label{eq:sldiffusion_model_ic}
\end{align}
\end{subequations}
Contrary to \citet{brockxepapadeas2008} we formulated the problem for Neumann conditions \cref{eq:sldiffusion_model_bc}, the so called zero flux boundary condition, instead of the periodic boundary conditions. From an interpretational point of view we assume the lakes to be located consecutively in a row and there is no flow out of lake. Periodic boundary conditions refer to a ring of lakes. \citeauthor{brockxepapadeas2008} argue that they use periodic boundary conditions to exclude effects induced by the conditions at the end points. Anyhow, since this point does not touch our argument that it is necessary to analyze the global behavior of the optimally controlled system, we changed the model formulation in this respect.

Using the (\FDM) discretization proposed in \cref{sec:ModelsOfSpatialDimension11DModel} we find
\begin{subequations}
\label{eq:dslocm}
\begin{align}
& \max_{u_0(\cdot),\ldots,u_N(\cdot)}\left\{\int_0^\infty\E^{-rt}G(P_0(t),\ldots,P_N(t),u_0(t),\ldots,u_N(t))\,\Dt\right\}\\
\text{s.t.}\ &\dot P_i(t)=u_i(t)-bP_i(t)+\frac{P_i(t)^2}{1+P_i(t)^2}+\tilde D\left(P_{i-1}(t)-2P_i(t)+P_{i+1}(t)\right)\\
& P_1(t)-P_{-1}(t)=P_{N+1}(t)-P_{N-1}(t)=0,\quad t\ge0\\
& P_i(0) = P_{i,0}>0\label{eq:dslocm3}
\shortintertext{with}
&x_i=\frac{i}{N},\quad i=0,\ldots,N,\quad \tilde D\defin D\left(\frac{N}{2L}\right)^2\notag\\
&P_i(t)\defin P(x_i,t),\quad u_i(t)\defin u(x_i,t)\notag\\
&P_i\defin P(x_i,\cdot),\quad u_i\defin u(x_i,\cdot)\notag\\
&G(P_0,\ldots,P_N,u_0,\ldots,u_N)\defin\frac{1}{N}\sum_{i=1}^{N-1}g(P_i,u_i)+\frac{g(P_0,u_0)+g(P_N,u_N)}{2}\notag\\
&g(P_i,u_i)\defin\ln(u_i)-cP_i^2\notag
\end{align}
\end{subequations}
Applying \PMAXP\ on \crefsys{eq:discretization} yields the canonical system
\begin{subequations}
\label{eq:dslocmcansys}
\begin{align}
	&\dot P_i(t)=u^\ext_i(t)-bP_i(t)+\frac{P_i(t)^2}{1+P_i(t)^2}+\diffcoeff^{(P)}_i(t)\\
&\dot\lambda_i(t)=c_iP_i(t)+\lambda_i(t)\left(\rho+b-\frac{2P_i(t)}{\left(1+P_i(t)^2\right)^2}\right)-\diffcoeff^{(\lambda)}_i(t)\\
& P_i(0) = P_{i,0}>0,\quad i=0,\ldots,N\\
	&u_i^\ext(t)=\begin{cases}
	-\dfrac{1}{2\lambda_i(t)} & i=0,N\\
	-\dfrac{1}{\lambda_i(t)} & i=1,\ldots,N-1
	\end{cases}\\
& c_i\defin\begin{cases}
	c & i=0,N\\
	2c &i=1,\ldots,N-1
\end{cases}\notag\\
&\diffcoeff^{(P)}_i(t)\defin\begin{cases}
	2\tilde D(P_1(t)-P_0(t)) & i=0\\
	\tilde D(P_{i-1}(t)-2P_i(t)+P_{i+1}(t)) & i=1,\ldots,N-1\\
	2\tilde D(P_{N-1}(t)-P_N(t)) & i=N
\end{cases}\notag\\
&\diffcoeff^{(\lambda)}_i(t)\defin\begin{cases}
	\tilde D(\lambda_1(t)-2\lambda_0(t)) & i=0\\
	\tilde D(2\lambda_0(t)-2\lambda_1(t)+\lambda_2(t)) & i=1\\
	\tilde D(\lambda_{i-1}(t)-2\lambda_i(t)+\lambda_{i+1}(t)) & i=2,\ldots,N-2\\
	\tilde D(\lambda_{N-2}(t)-2\lambda_{N-1}(t)+2\lambda_N(t)) & i=N-1\\
	\tilde D(\lambda_{N-1}(t)-2\lambda_N(t)) & i=N
\end{cases}\notag
\end{align}
\end{subequations}
In \cref{sec:TheUsageOfOCMAT} the details for an implementation of the \modelcref{eq:dslocm} in \OCMAT\ are explained.

\subsection{Equilibria of the canonical/optimal system}
\label{sec:EquilibriaOfTheCanonicalSystem}
There is an intimate relation between the \css{} of the canonical system \cref{eq:sl_model_cansys} and \fcss{} of the canonical system \cref{eq:dslocmcansys}
\begin{corollary}
\label{cor:fosschar}
Let $(\hat P^d, \hat u^d)\in\R^{2N+2}$ be \foss{} then $(\hat P^d_0, 1/(2u_0))$ is an equilibrium of the canonical system \cref{eq:sl_model_cansys}.
\end{corollary}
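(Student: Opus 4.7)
The plan is to translate the equilibrium conditions of the discretized canonical system \cref{eq:dslocmcansys}, evaluated on a flat stationary profile, back into the two equilibrium equations of the 0D canonical system \cref{eq:sl_model_cansys}. First, I would use \cref{def:foss} to fix $\hat P := \hat P^d_0 = \hat P^d_i$ for every $i$, and invoke \PMAXP{} to produce a costate $\hat\lambda^d$ making $(\hat P^d,\hat\lambda^d)$ stationary for \cref{eq:dslocmcansys}. The rest of the argument is a direct calculation on these stationary equations.

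Second, I would exploit flatness in the state equations. Since $\hat P_i$ does not depend on $i$, each discrete Laplacian $\diffcoeff^{(P)}_i$ vanishes (boundary stencils included), so every state equation collapses to the 0D form $0 = \hat u^\ext_i - b\hat P + \hat P^2/(1+\hat P^2)$. This forces $\hat u^\ext_i$ to be independent of $i$, and combining this with the two forms of the extremal control listed in \cref{eq:dslocmcansys} forces the costate profile to be two-level, with $\hat\lambda_1 = \cdots = \hat\lambda_{N-1} =: \Lambda$ and $\hat\lambda_0 = \hat\lambda_N = \Lambda/2$.

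Third, I would substitute this two-level profile into the costate equations and verify that the discrete Laplacian $\diffcoeff^{(\lambda)}_i$ cancels in all five branches of \cref{eq:dslocmcansys}. A short substitution handles each case (for example at $i=1$, $2\hat\lambda_0 - 2\hat\lambda_1 + \hat\lambda_2 = \Lambda - 2\Lambda + \Lambda = 0$; at $i=0$, $\hat\lambda_1 - 2\hat\lambda_0 = \Lambda - \Lambda = 0$). With the diffusion removed, the interior equation becomes $0 = 2c\hat P + \Lambda(\rho + b - 2\hat P/(1+\hat P^2)^2)$, which matches \cref{eq:sl_model_cansys2} at equilibrium; the boundary equation, which carries the prefactor $c$ instead of $2c$ and the halved costate $\Lambda/2$, reduces to the same identity after multiplication by $2$, so it gives no independent information. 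Together with the collapsed state equation, this exhibits $(\hat P^d_0,\Lambda)$ as an equilibrium of the 0D canonical system; the costate value stated in the corollary then follows by tracking the relation between $\Lambda$, $\hat\lambda_0$, and $\hat u_0$ through the extremal control formula.

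The main obstacle I anticipate is purely bookkeeping: keeping the half-weight boundary contributions from the trapezoidal rule synchronized with the asymmetric prefactors of $\diffcoeff^{(\lambda)}_i$, so that the two-level costate profile really does cancel the discrete Laplacian at every boundary index, and carefully converting between the two boundary forms of $u^\ext$ when identifying the 0D costate. There is no deeper analytic difficulty, since optimality of the \foss{} is assumed rather than proved, and flatness of $\hat P^d$ makes the state collapse immediate.
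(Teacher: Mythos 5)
Your overall strategy is the right one, and in fact the paper states \cref{cor:fosschar} without any proof, so there is nothing to compare against except the implicit computation; your computation is essentially the canonical (and only) route: flatness kills $\diffcoeff^{(P)}_i$, the state equations then force a spatially constant extremal control, the two boundary forms of $u^\ext_i$ force the two-level costate profile $\hat\lambda_0=\hat\lambda_N=\Lambda/2$, $\hat\lambda_1=\cdots=\hat\lambda_{N-1}=\Lambda$, and that profile annihilates all five branches of $\diffcoeff^{(\lambda)}_i$, reducing both the state and costate equations to the 0D equilibrium equations for $(\hat P,\Lambda)$. All of these verifications check out (and they are consistent with the numerical initialization in the appendix, where the boundary costates are halved).

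Two points deserve attention. First, a minor one: \PMAXP{} hands you a costate \emph{trajectory} $\lambda^d(\cdot)$, not automatically a stationary one; since the state and control are constant, the costate equation is an affine linear ODE in $\lambda^d$, and you need the boundedness/transversality part of the infinite-horizon maximum principle to select the constant solution before you may speak of an equilibrium of \cref{eq:dslocmcansys}. Second, and more substantively, your closing sentence --- that ``the costate value stated in the corollary then follows by tracking the relation between $\Lambda$, $\hat\lambda_0$, and $\hat u_0$'' --- does not actually close the argument for the statement as printed. Your own computation gives $\hat u_0=-1/(2\hat\lambda_0)$, hence $\hat\lambda_0=-1/(2\hat u_0)$ and the 0D equilibrium costate is $\Lambda=2\hat\lambda_0=-1/\hat u_0$, which is neither $1/(2u_0)$ nor $-1/(2u_0)$. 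The value $1/(2u_0)$ in the corollary appears to be a sign/factor slip (\cref{cor:fosscharII} has the reciprocal problem: it writes the boundary costates as $2\hat\lambda$, whereas the cancellation of $\diffcoeff^{(\lambda)}_i$ and the code in \cref{sec:cont_equi} both require $\hat\lambda/2$). You should state explicitly that the equilibrium of \cref{eq:sl_model_cansys} you obtain is $(\hat P^d_0,\,-1/\hat u_0)=(\hat P^d_0,\,2\hat\lambda_0)$ rather than pretending the printed constant drops out; as written, that final ``tracking'' step would fail.
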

\begin{corollary}
\label{cor:fosscharII}
Let $(\hat P, \hat\lambda)$ be an equilibrium of the canonical system \cref{eq:sl_model_cansys}. Then $\hat P^d\defin(\hat P,\ldots,\hat P)$ and $\hat\lambda^d\defin(2\hat\lambda,\hat\lambda,\ldots,\hat\lambda,2\hat\lambda)$ is a \fcss.
\end{corollary}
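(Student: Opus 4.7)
The plan is a direct substitution: insert $\hat P^d$ and $\hat\lambda^d$ into each equation of the canonical system \cref{eq:dslocmcansys} and verify that both sides vanish at every node $i=0,\ldots,N$. Because $\hat P^d$ is flat, every state-diffusion term $\diffcoeff^{(P)}_i$ is identically zero, so the state equation $\dot P_i=0$ reduces to the pointwise reaction balance $u_i^\ext = b\hat P-\hat P^2/(1+\hat P^2)$. First I would verify that the prescribed $\hat\lambda^d_i$, combined with the two-branch definition of $u_i^\ext$ (with the extra factor $1/2$ at the boundary), produces the common 0D control value $\hat u=-1/\hat\lambda$ at every node; once this is in place, the state-equilibrium identity is simply the first component of the 0D canonical system \cref{eq:sl_model_cansys} evaluated at $(\hat P,\hat\lambda)$.

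Next I would handle the costate equations by splitting into three groups corresponding to the five-case definition of $\diffcoeff^{(\lambda)}_i$. For a bulk interior index $i\in\{2,\ldots,N-2\}$ all neighbouring multipliers equal $\hat\lambda$, so $\diffcoeff^{(\lambda)}_i=\tilde D(\hat\lambda-2\hat\lambda+\hat\lambda)=0$, and what remains is exactly the second line of the 0D canonical system at $(\hat P,\hat\lambda)$, which vanishes by hypothesis. For the near-boundary indices $i\in\{1,N-1\}$ and the boundary indices $i\in\{0,N\}$, the Laplacian stencil is asymmetric (the coefficient of the boundary multiplier is doubled), and the reaction coefficient $c_i$ switches from $2c$ at the interior to $c$ at the boundary. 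One checks by direct computation that, given the prescribed boundary value of $\hat\lambda^d_i$, these two asymmetries cancel each other, so that the equation once more collapses onto the 0D costate equilibrium relation for $(\hat P,\hat\lambda)$.

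The main obstacle is purely bookkeeping at the boundary: four discretization-induced asymmetries (the half-weight in the trapezoidal quadrature, the factor $1/2$ in the boundary control formula $u_0^\ext=-1/(2\lambda_0)$, the halved coefficient $c_0=c_N=c$, and the non-symmetric Neumann Laplacian stencil at $i\in\{0,1,N-1,N\}$) must all fit together so that the boundary and near-boundary costate equations reduce to the 0D relation. All four originate from applying \PMAXP\ to the discretized problem \labelcref{eq:dslocm}, so their compatibility is guaranteed in principle; the task is to perform the cancellations carefully node by node. No argument deeper than substitution and an invocation of the 0D equilibrium identities for $(\hat P,\hat\lambda)$ is needed; the claim that $\hat x_0=\cdots=\hat x_N=\hat P$ then lands the pair $(\hat P^d,\hat\lambda^d)$ in the class \fcss\ by \cref{def:fssnhss}.
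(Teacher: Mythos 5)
Your overall strategy --- substitute the flat pair into \cref{eq:dslocmcansys} node by node, observe that every state-diffusion stencil vanishes on a flat profile, and reduce each remaining equation to the 0D equilibrium identities of \cref{eq:sl_model_cansys} --- is the natural one, and it is essentially all there is to do (the paper states this corollary without proof). However, the one step you defer to ``direct computation,'' namely that the boundary asymmetries cancel for the \emph{prescribed} boundary costate $2\hat\lambda$, is exactly the step that fails. With $\hat\lambda^d_0=2\hat\lambda$ the boundary control is $u_0^\ext=-1/(2\cdot 2\hat\lambda)=-1/(4\hat\lambda)=\hat u/4\neq\hat u$, so the state equation at $i=0$ does not reduce to the 0D relation; moreover $\diffcoeff^{(\lambda)}_0=\tilde D(\hat\lambda-4\hat\lambda)=-3\tilde D\hat\lambda\neq 0$ and $\diffcoeff^{(\lambda)}_1=\tilde D(4\hat\lambda-2\hat\lambda+\hat\lambda)=3\tilde D\hat\lambda\neq 0$, so the costate equations at $i=0,1$ (and symmetrically at $i=N-1,N$) pick up uncancelled diffusion terms. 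Your appeal to compatibility being ``guaranteed in principle'' by \PMAXP{} is not a substitute for the check: \PMAXP{} guarantees that \emph{some} flat costate profile works, not the particular one written in the statement.

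Carrying the bookkeeping through shows that the correct boundary value is $\hat\lambda/2$, i.e.\ $\hat\lambda^d=(\hat\lambda/2,\hat\lambda,\ldots,\hat\lambda,\hat\lambda/2)$: then $u_0^\ext=-1/(2\cdot\hat\lambda/2)=-1/\hat\lambda=\hat u$, the stencils give $\diffcoeff^{(\lambda)}_0=\tilde D(\hat\lambda-2\cdot\hat\lambda/2)=0$ and $\diffcoeff^{(\lambda)}_1=\tilde D(2\cdot\hat\lambda/2-2\hat\lambda+\hat\lambda)=0$, and the boundary costate equation collapses to $0=c\hat P+(\hat\lambda/2)\bigl(\rho+b-2\hat P/(1+\hat P^2)^2\bigr)$, which is one half of the 0D identity \cref{eq:sl_model_cansys2}. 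This is consistent with \cref{cor:fosschar} (which recovers the 0D costate from $u_0$ with a factor $2$) and with the initialization code in \cref{sec:cont_equi}, where the two boundary costate entries are divided by two before the equilibrium solver is called. In short, the statement as printed contains a typo ($2\hat\lambda$ where $\hat\lambda/2$ is meant), and a proof that takes the printed value at face value, as yours does, cannot close; had you executed the node-by-node verification you outline, you would have been forced to correct the boundary entries.
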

\begin{corollary}
\label{cor:fossspp}
Let $(\hat P,\hat\lambda)$ be a saddle of the canonical system \cref{eq:sl_model_cansys}. Then for $\tilde D$ small enough $(\hat P^d, \hat\lambda^d)\in\R^{2N+2}$ defined in \cref{cor:fosscharII} is \fcssspp. 
\end{corollary}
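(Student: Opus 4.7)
The plan is to exploit the block-diagonal structure of the canonical system \cref{eq:dslocmcansys} in the no-diffusion limit $\tilde D=0$ and then to propagate the saddle-point property along $\tilde D$ by continuity of eigenvalues. Concretely, at $\tilde D=0$ all coupling terms $\diffcoeff^{(P)}_i$ and $\diffcoeff^{(\lambda)}_i$ vanish, so the $2(N+1)$-dimensional canonical system splits into $N+1$ independent $2\times 2$ Hamiltonian subsystems, each inheriting the discount rate $\rho$.

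Next, I would compute the Jacobian of each block at the corresponding restriction of the flat steady state $(\hat P^d,\hat\lambda^d)$ given by \cref{cor:fosscharII}. For the $N-1$ interior sites the block is exactly the 0D Jacobian $J(\hat P,\hat\lambda)$ of \cref{eq:cansys0Djac}. A direct computation shows that, although the two boundary blocks have different individual entries because of the altered control $u_i^\ext=-1/(2\lambda_i)$ and the halved coefficient $c_i=c$ for $i\in\{0,N\}$, they share the same trace (equal to $\rho$) and the same determinant as $J(\hat P,\hat\lambda)$, so they carry the same pair of eigenvalues. Since $(\hat P,\hat\lambda)$ is a saddle by assumption, \cref{prop:equivspp} guarantees that each of its two eigenvalues $\eigval$ satisfies $|\Re\eigval-\rho/2|>\rho/2$, and hence the same holds for all $2(N+1)$ eigenvalues of the block-diagonal Jacobian at $\tilde D=0$.

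Finally, the spectral condition of \cref{prop:equivspp} is strict, hence open in the space of matrices, and the eigenvalues of the Jacobian of \cref{eq:dslocmcansys} depend continuously on $\tilde D$. Because the discretized canonical system retains the Hamiltonian structure with the same discount $\rho$ (the discrete Laplacian enters $\dot P_i$ and $-\dot\lambda_i$ symmetrically, and the parameter $\tilde D$ only couples the blocks linearly), \cref{prop:equivspp} applies in the high-dimensional setting as well. Hence there is some $\tilde D_0>0$ such that for every $0<\tilde D<\tilde D_0$ the flat equilibrium still fulfils the spectral condition, which by \cref{prop:equivspp} is equivalent to $\dim\eigspace{s}(\hat P^d,\hat\lambda^d)=N+1$, i.e.\ to $(\hat P^d,\hat\lambda^d)$ being \fcssspp. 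The main technical step I expect to be nontrivial is the identification of the characteristic polynomials of the boundary blocks with that of the 0D Jacobian: verifying that the altered control, the halved $c_0$, and the rescaling of $\hat\lambda_0^d,\hat\lambda_N^d$ combine to leave trace and determinant unchanged; once this algebraic identity is in place, the continuity step is routine.
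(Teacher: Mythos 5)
The paper states \cref{cor:fossspp} without any proof, so there is no official argument to compare yours against; the route you take --- decouple at $\tilde D=0$ into $N+1$ planar blocks, identify each block's spectrum with that of the 0D saddle, and perturb in $\tilde D$ --- is the natural one and is essentially correct. Two remarks. First, the algebraic identity you flag as the nontrivial step does hold, and in a cleaner form than a trace/determinant coincidence: at a boundary node the $(1,2)$ entry of the block Jacobian is doubled (from $1/\hat\lambda^2$ to $2/\hat\lambda^2$) and the $(2,1)$ entry is halved (from $2c-\hat\lambda h'(\hat P)$ to $c-\tfrac{\hat\lambda}{2}h'(\hat P)$, with $h(P)=2P/(1+P^2)^2$), while the diagonal entries are unchanged, so the boundary block is \emph{similar} to $J(\hat P,\hat\lambda)$ via conjugation with $\mathrm{diag}(1,\tfrac12)$. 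But this only works if the boundary costates of the flat equilibrium equal $\hat\lambda/2$, not $2\hat\lambda$ as printed in \cref{cor:fosscharII}: with $2\hat\lambda$ the point is not even an equilibrium of \cref{eq:dslocmcansys} (the code in \cref{sec:cont_equi} divides the boundary costates by two, which confirms $\hat\lambda/2$ is what is meant). Since your verification would fail if carried out against the printed value, you should state this correction explicitly; it also guarantees that the flat equilibrium is independent of $\tilde D$ (all $\diffcoeff^{(\lambda)}_i$ vanish there), which makes the Jacobian affine in $\tilde D$ and the continuity step immediate. Second, the final step can be done by simply counting: at $\tilde D=0$ exactly $N+1$ eigenvalues lie in $\{\Re\eigval<0\}$ and $N+1$ in $\{\Re\eigval>\rho\}$, and neither group can reach the imaginary axis for small $\tilde D$; invoking \cref{prop:equivspp} in the high-dimensional setting is also legitimate, since \cref{eq:dslocm} is itself a standard 0D problem, but it is not needed. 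A sharper, non-perturbative alternative would diagonalize the coupling by the discrete cosine transform and analyze the resulting $N+1$ dispersion-relation blocks as in \citet{brockxepapadeas2008}; that route gives an explicit admissible range for $\tilde D$ instead of an unquantified smallness condition.
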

\hcss{} can in general not be calculated analytically therefore we have to resort to numerical methods. Using \cref{cor:fosscharII} we can start a bifurcation analysis of \cref{eq:dslocmcansys} with an equilibrium of \cref{eq:sl_model_cansys}. The \hcss{} emerge from branching points of the bifurcation curve. In \citet{grassuecker2015} the according bifurcation analysis is done using \PDETOPATH, a \MATL\ package for the bifurcation analysis of elliptic \PDE s, see \citet{ueckeretal2014} and \citet{dohnaletal2014}. Since the actual \modelcref{eq:sl_model} is a 0D optimal control model with a finite number of states ($N+1$) the bifurcation analysis of \cref{eq:dslocmcansys} is done using a modified version of \CLMATCONT\footnote{This modified version is available from the author.}.

We analyzed the two different scenarios specified in \cref{tb:base_par}. 
\paragraph{First Scenario}
\label{sec:FirstScenario}
The according bifurcation analysis with respect to $b$ is depicted in \cref{fig:sld_bifdiag1}. The black curves represent the bifurcation curves of the \fcss. These curves exhibit the same shape as the corresponding bifurcation curves for the 0D model in \cref{fig:sl0D_bifan1}. For the lower branch we additionally find four branching points $\circ$, where the branches of the \hcss{} emanate (red, green, magenta and cyan). Along the bifurcation curves of the \hcss{} we find additional branching points and calculated the according bifurcation curves (brown, dark green and orange). Thus, for $b=0.65$ we find in total two \fcssspp{} (correspondig to the oligotrophic and eutrophic equilibrium in the 0D model), one \fcssnspp{}, thirteen \hcssnspp{} and one \hcssspp{}. In fact the brown and dark green branch consists of two distinct bifurcation curves with equilibria that are spatially symmetric.

\paragraph{Second Scenario}
\label{sec:SecondScenario}
The according bifurcation analysis with respect to $c$ is depicted in \cref{fig:sld_bifdiag2}. The black bifurcation curve of the \fcss{} consists of one branch, exhibiting two fold bifurcations and six branching points $\circ$. These six branching points are connected by three bifurcation curves of the \hcss{} (red, magenta and green). The red and green curve consist of two spatially symmetric branches. From the magenta \hcss{} bifurcation curve two further \hcss{} branches emanate (brown and orange).

In \cref{sec:scenarion2} we consider two specific cases for $c=3.5$ and $c=3.0825$. In the first case there exist two \fcssspp{} (correspondig to the oligotrophic and eutrophic equilibrium in the 0D model), one \fcssnspp{}, ten \hcssnspp{} and two \hcssspp. In the latter case there exist two \fcssspp{}, one \fcssnspp{} and four \hcssnspp{}.
\begin{figure}
\centering
	\subfloat[First Scenario $b$, $\rho=0.03, c=0.5$]{\includegraphics[width=0.45\textwidth]{./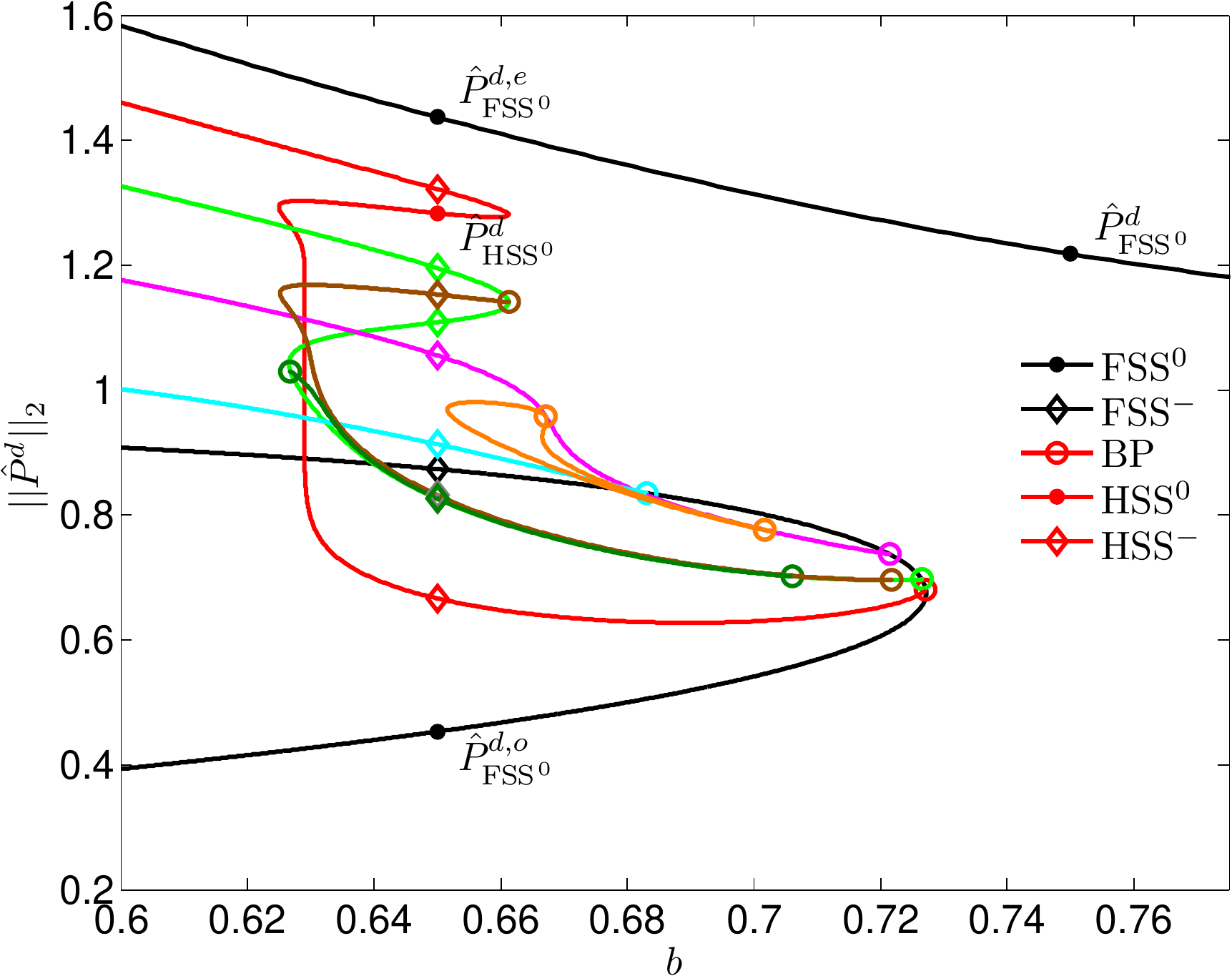}\label{fig:sld_bifdiag1}}\hfill
	\subfloat[Second Scenario $c$, $\rho=0.3, b=0.55$]{\includegraphics[width=0.45\textwidth]{./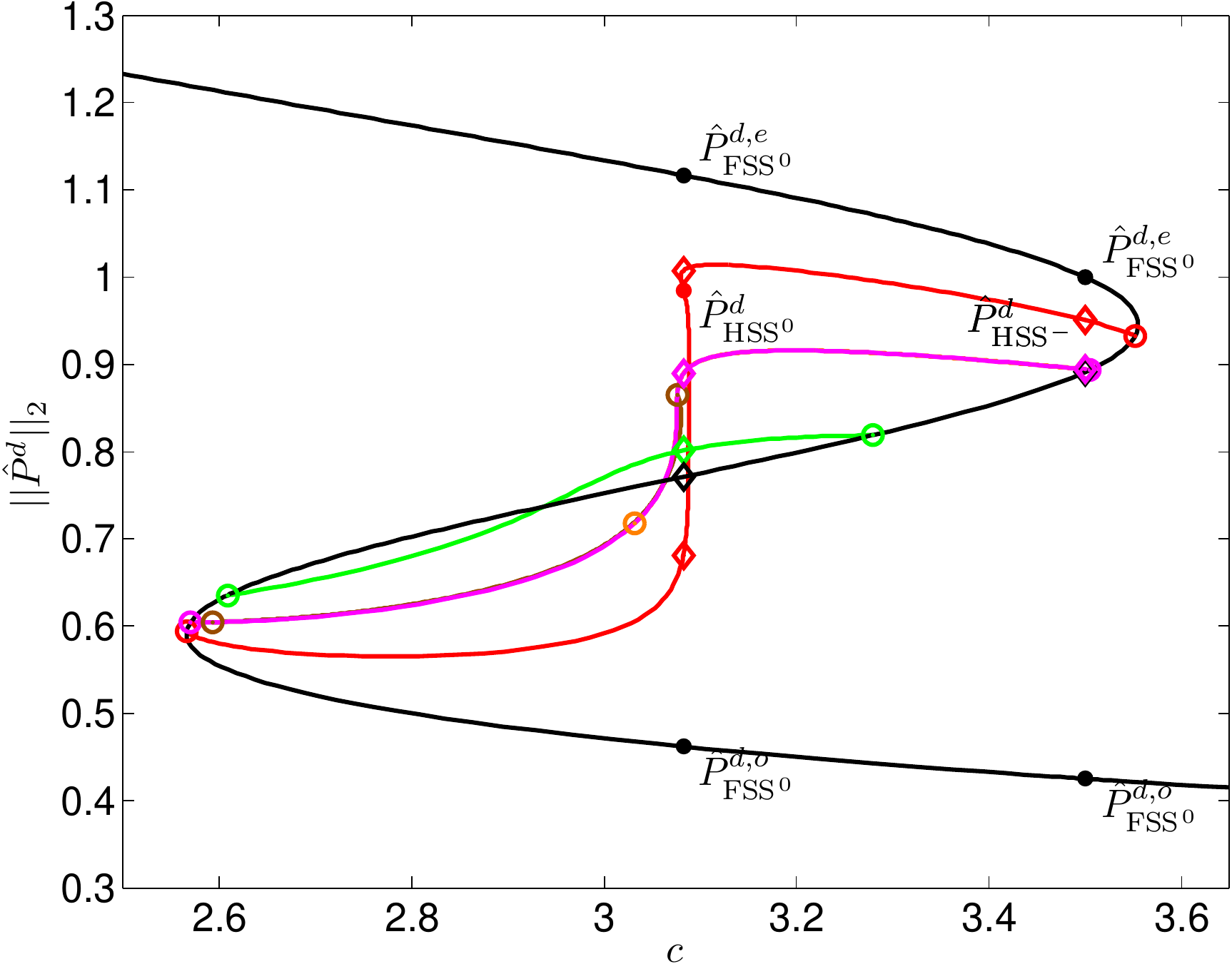}\label{fig:sld_bifdiag2}}%\\
	%\subfloat[Bifurcation-Diagram $b$, $\rho=0.03, c=0.5$]{\includegraphics[width=0.45\textwidth]{./fig/SLFDM_BifurcationAnalysisITN.pdf}\label{fig:sld_bifdiag3}}\hfill
	%\subfloat[Bifurcation-Diagram $c$, $\rho=0.3, b=0.55$]{\includegraphics[width=0.45\textwidth]{./fig/SLFDM_BifurcationAnalysisIITN.pdf}\label{fig:sld_bifdiag4}}
\caption[]{A bifurcation analysis for the canonical system \cref{eq:dslocmcansys} is depicted in \subref{fig:sld_bifdiag1} and \subref{fig:sld_bifdiag2}. The symbols: $\bullet$ denote equilibria satisfying the \spp, $\diamond$ equilibria not satisfying the \spp{} and $\circ$ branching points of the equilibria manifold. The FOSS curves are black and the HOSS curves are colored. Along the red HOSS curve two fold bifurcations occur, the HOSS candidates between these fold bifurcations satisfy the \spp.}
\label{fig:sld_bifdiag}
\end{figure}

\subsection{First scenario optimal solutions exemplified on two cases}
\label{sec:scenarion1}
Most of the results for the specific cases $b=0.75$ and $b=0.65$ have already been discussed in \citet{grassuecker2015}. Therefore we only give a brief summary and concentrate on some aspects that were easier to get using \OCMAT.

For $b=0.75$ there exists a single \foss{}, structurally reproducing the result of the according 0D model.

For $b=0.65$ the main results are
\begin{itemize}
	\item None of the \hcss{} are optimal.
	\item The two \fcssspp{} are optimal.
	\item The according basins of attractions are separated by an indifference threshold manifold.
	\item In \citet{grassuecker2015} we computed a homogeneous and patterned indifference threshold point. The homogeneous indifference threshold point coincides with the value of the indifference threshold point in \modelcref{eq:sl_model}.   
\end{itemize}
Next we explain in detail the calculation of a solution converging to an equilibrium satisfying \spp. Subsequently the necessary steps for the detection of an indifference threshold point are explained. Details for the calculation in \OCMAT{} can be found in \cref{sec:TheUsageOfOCMAT}.

\subsection{A locally optimal patterned equilibrium}
\label{sec:detectitp}
To prove that \hcssspp{} $(\phcssspp,\lhcssspp)$, see \cref{fig:sld_bifdiag1}, is an optimal equilibrium we have to show that there exist no other solution $(P^d(\cdot), u^d(\cdot))$ with $P^d(0)=\phcssspp$ yielding a larger objective value. There exist two other equilibria satisfying \spp, namely the two \fcssspp{}, the oligotrophic $(\pfocssspp,\lfocssspp)$ and the eutrophic \fcssspp{} $(\pfecssspp,\lfecssspp)$ (cf.~\cref{fig:sld_bifdiag1}). For each of these equilibria there may exist a path with $P^d(0)=\phcssspp$ and converging to the oligotrophic/eutrophic equilibrium. To determine these paths we consider the corresponding homotopy problems \cref{eq:homcontbvp} with $x_1$ replaced by $\phcssspp$, starting at the equilibrium $(\pfocssspp,\lfocssspp)$ and $(\pfecssspp,\lfecssspp)$, respectively.

\subsubsection{Comparison with equilibrium solution}
\label{sec:ComparisonWithEquilibriumSolution}
The result of these computations is depicted in \cref{fig:foss2hossb05}. \Cref{fig:foss2hossb051} illustrates the ``embedding'' \cref{eq:homcontbvp2}. The green manifold proceed from the flat eutrophic states $\pfecssspp$ and the blue manifold from the flat oligotrophic states $\pfocssspp$ to the patterned states $\phcssspp$. During the continuation of $\contpar$ from zero to one the initial states lie in these manifolds. In \cref{fig:foss2hossb052,fig:foss2hossb053} the states and their norm of the solution paths for $\contpar=0.5$ are shown (green and blue). Additionally the states and norm of the equilibrium solution $\phcssspp$ are depicted. 

\Cref{fig:foss2hossb054,fig:foss2hossb055,fig:foss2hossb057} display the final results for $\kappa=1$ also including the figure with the control paths. Moreover, \cref{fig:foss2hossb056} and \cref{fig:foss2hossb057} reveal that the values of the costates and hence the controls of the three solutions with $P^d(0)=\phcssspp$ are different.

\Cref{fig:foss2hossb058} show the slice manifolds (see~\cref{def:slicemanifold}) in the state-costate space. Each marker $\times$ denotes a continuation step. 

In \cref{fig:foss2hossb057} the objective values along the slice manifolds are plotted, i.e. the objective values are plotted against the (norm) of the initial points. We see that the final solutions for $\kappa=1$ both yield a higher objective value and the eutrophic solution converging to \fecssspp{} dominates the other solutions. On the other hand the gradients of the curves suggest that they eventually intersect. Such an intersection point characterizes an indifference threshold point (cf.~\cref{fig:sl_optmultipleequb} for the 0D model).

\subsubsection{Detection and Continuation of an indifference threshold point \itp}
\label{sec:DetectionOfAnIndifferenceThresholdPoint}
To find a possible intersection point of the objective values along the slice manifolds we have to assure that the slice manifolds are comparable and have to check if they intersect (see~\cref{def:consslicemanifold}). The slice manifolds depicted in \cref{fig:foss2hossb058} are, e.g., not comparable, simply because the manifolds
\begin{align*}
	&\{\pfocssspp+(1-\alpha_1)(\phcssspp-\pfocssspp):\ \alpha_1\in\R\}\\
	&\{\pfecssspp+(1-\alpha_2)(\phcssspp-\pfecssspp):\ \alpha_2\in\R\}.
\end{align*}
are different. See \cref{fig:foss2hossb051}), where the green and blue curves depict (parts) of these manifolds. Only if slice manifolds are comparable and have a non-empty intersection this means that the solutions corresponding to the intersection points of the slice manifold start at the same initial states and hence their objective values can be compared.

Anyhow, since the solution starting at $\phcssspp$ and converging to \fecssspp{} is known we can start the homotopy \bvpcref{eq:homcontbvp} with
\begin{equation*}
	X(0)=\phcssspp+(1-\contpar)(\pfocssspp-\phcssspp).
\end{equation*}
In that case the manifolds
\begin{align*}
	&\{\pfocssspp+(1-\alpha_1)(\phcssspp-\pfocssspp):\ \alpha_1\in\R\}\\
	&\{\phcssspp+(1-\contpar)(\pfocssspp-\phcssspp):\ \contpar\in\R\}.
\end{align*}
trivially coincide and the slice manifolds are comparable, see \cref{fig:fossskiba1}. Moreover the slice manifolds intersect and and the corresponding objective values curves intersect at an indifference threshold point $P^d_{I,1}$ (cf.~\cref{fig:fossskiba4}). 

In \cref{fig:fossskiba2} and \cref{fig:fossskiba5} the corresponding state $P_{e,o}^d(\cdot)$ and control paths $u_{e,o}^d(\cdot))$ are shown.

Repeating the same procedure for the three two \fcssspp{} and \hcssspp{} we find that \hcssspp{} is not optimal and again we find a further indifference threshold point $P^d_{I,2}$. The according solutions are depicted in \cref{fig:fossskiba3} and \cref{fig:fossskiba6}.

It is an interesting question to see how we can find indifference threshold points ``between'' $P^d_{I,1}$ and $P^d_{I,2}$. Let us recall that we found the indifference threshold points by the intersection of two $n$ dimensional manifolds (with $n=N+1$ the number of states $P^d$) in the $n+1$ dimensional space $(P^d,J(P^d(0)))$. Generically this yields an $n-1=N$ dimensional manifold. Already for the discretization $N=50$ its dimension is too large to recover the entire indifference threshold manifold (in the original \PDE{} problem the dimension actually increases to infinity). Anyhow, we can e.g. search for indifference threshold points along the linear connection $P^d_{I,2}+(1-\contpar)(P^d_{I,1}-P^d_{I,2})$. The result is depicted in an animation embedded into \cref{fig:fossskiba5}. The according \BVP{} for the numerical solution of this problem is presented in \cref{sec:ContinuationOfAnIndifferenceThresholdPoint}.

\begin{figure}
\centering
	\subfloat[Manifolds of initial distributions]{\includegraphics[width=0.33\textwidth]{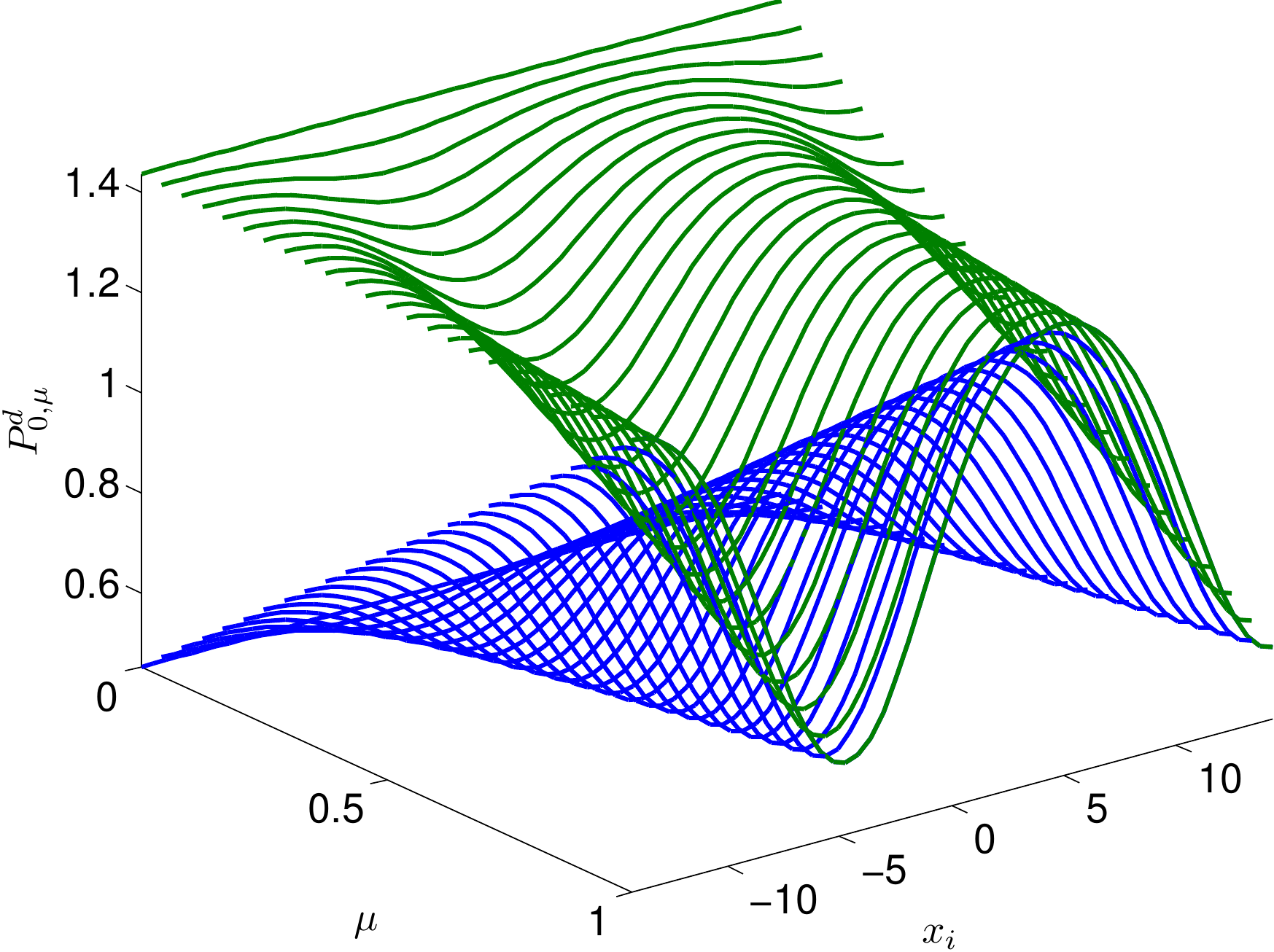}\label{fig:foss2hossb051}}\hfill
	\subfloat[State paths at $\contpar=0.5$]{\includegraphics[width=0.33\textwidth]{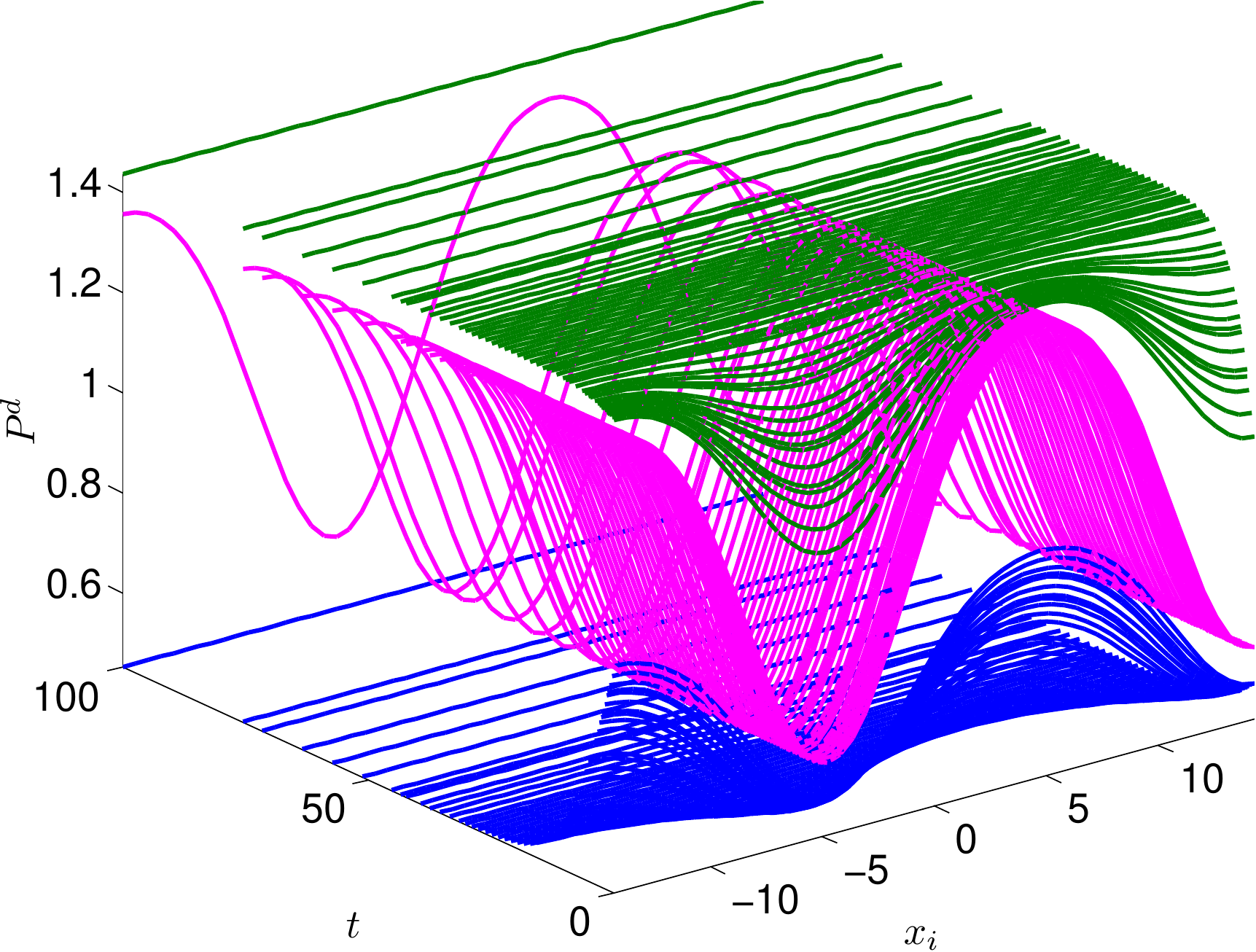}\label{fig:foss2hossb052}}\hfill
	\subfloat[Norm of the states at $\contpar=0.5$]{\includegraphics[width=0.33\textwidth]{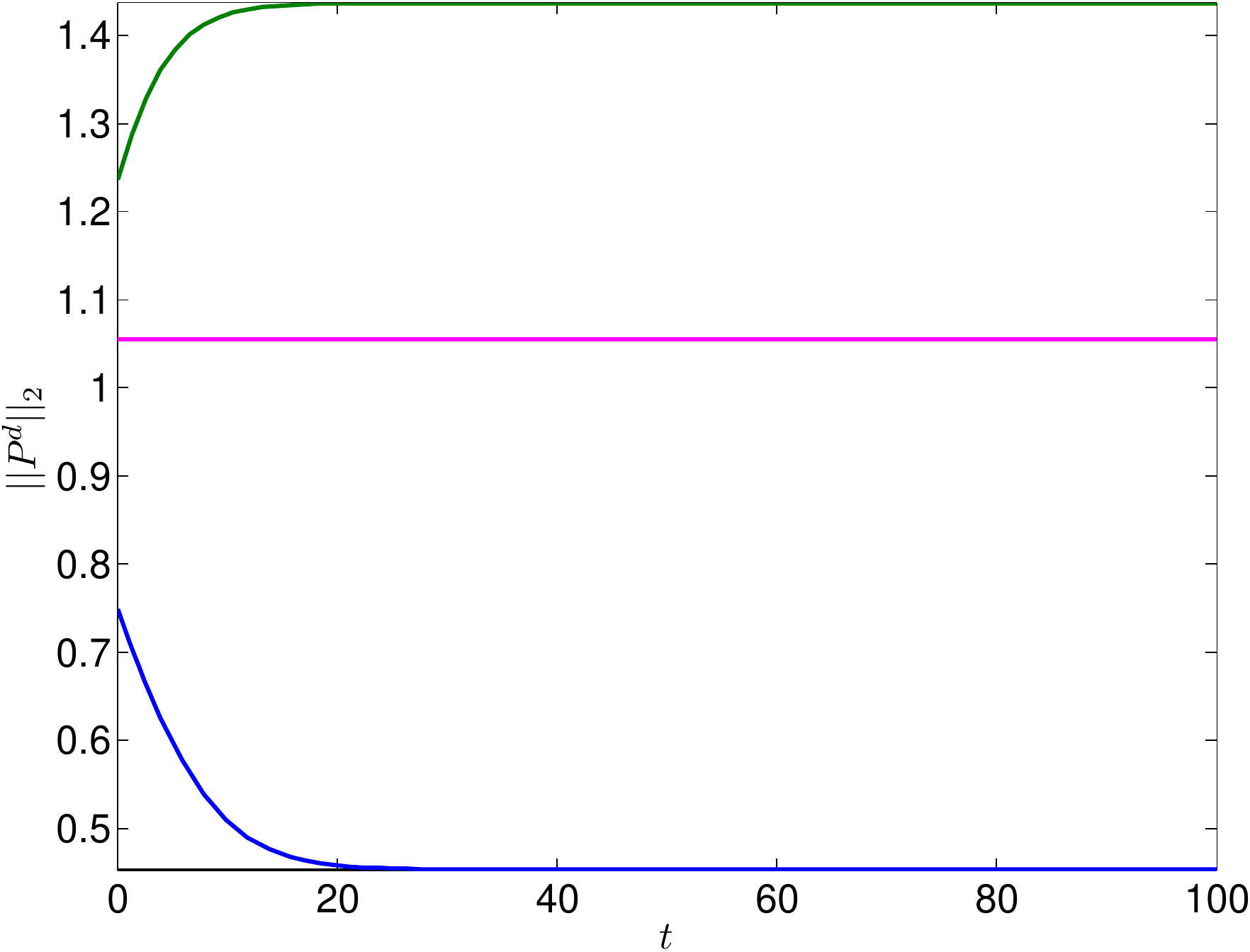}\label{fig:foss2hossb053}}\\
	\subfloat[State paths at $\contpar=1$]{\includegraphics[width=0.33\textwidth]{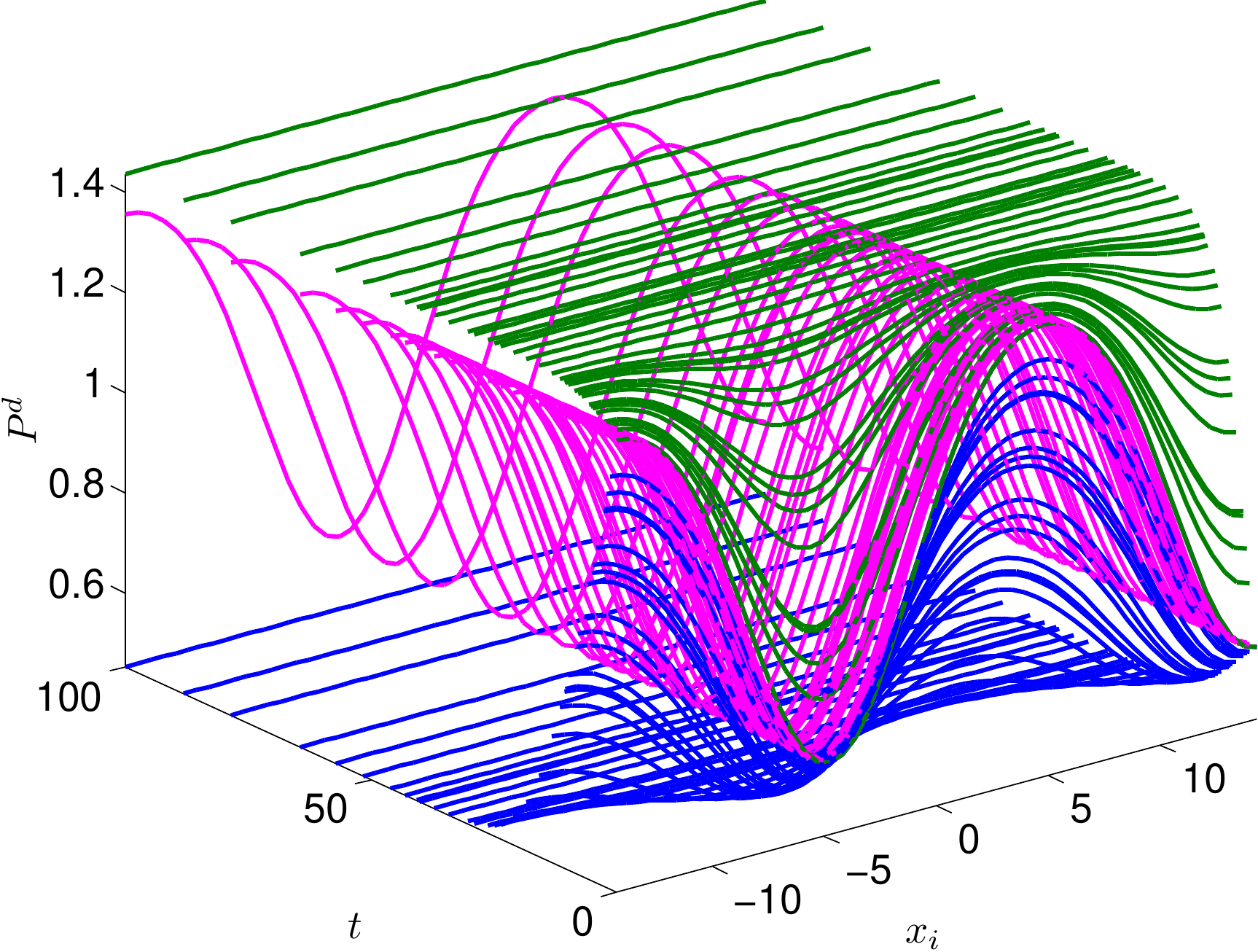}\label{fig:foss2hossb054}}\hfill
	\subfloat[Control paths at $\contpar=1$]{\includegraphics[width=0.33\textwidth]{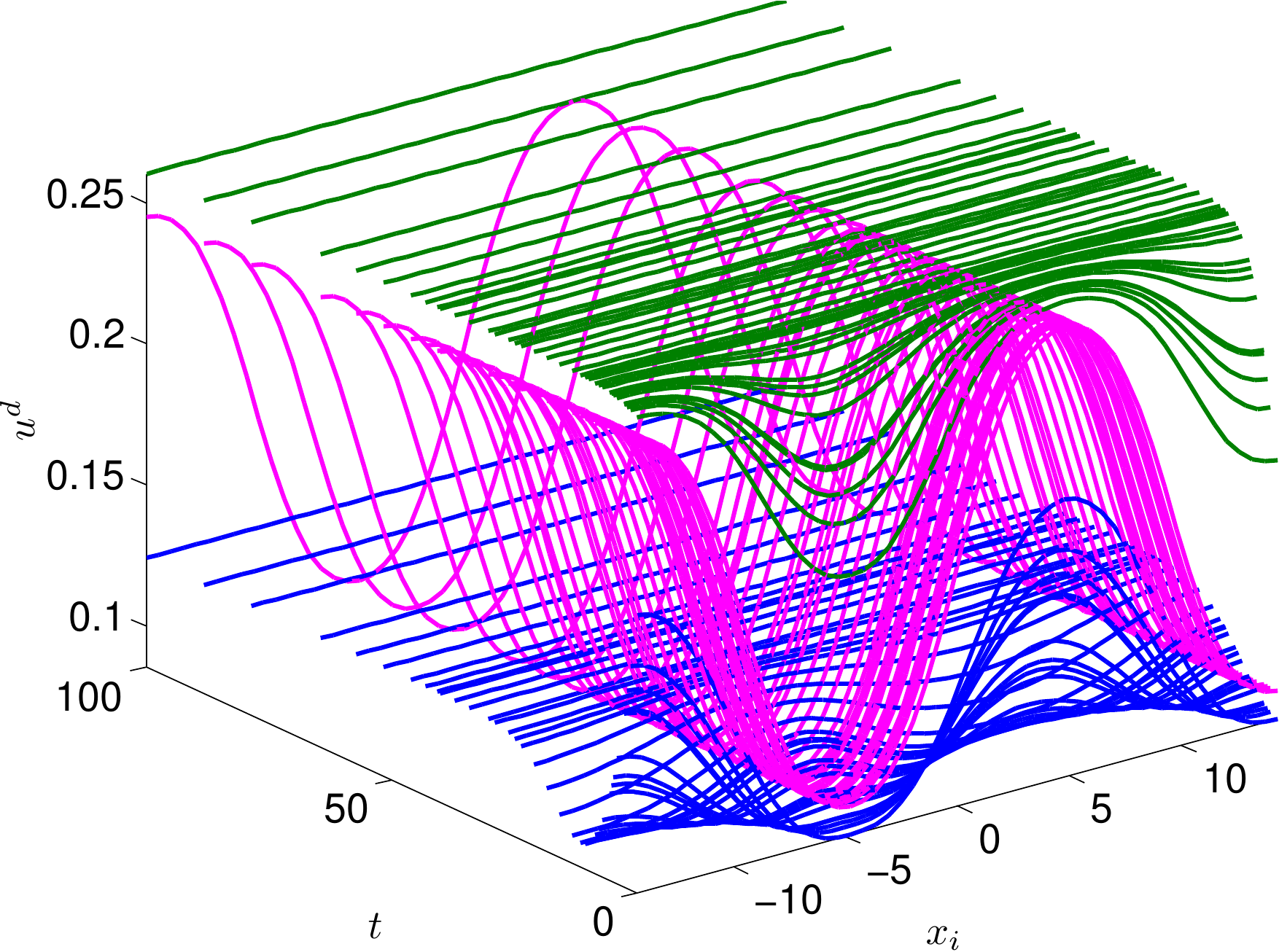}\label{fig:foss2hossb057}}\hfill
	\subfloat[Norm of the states at $\contpar=1$]{\includegraphics[width=0.33\textwidth]{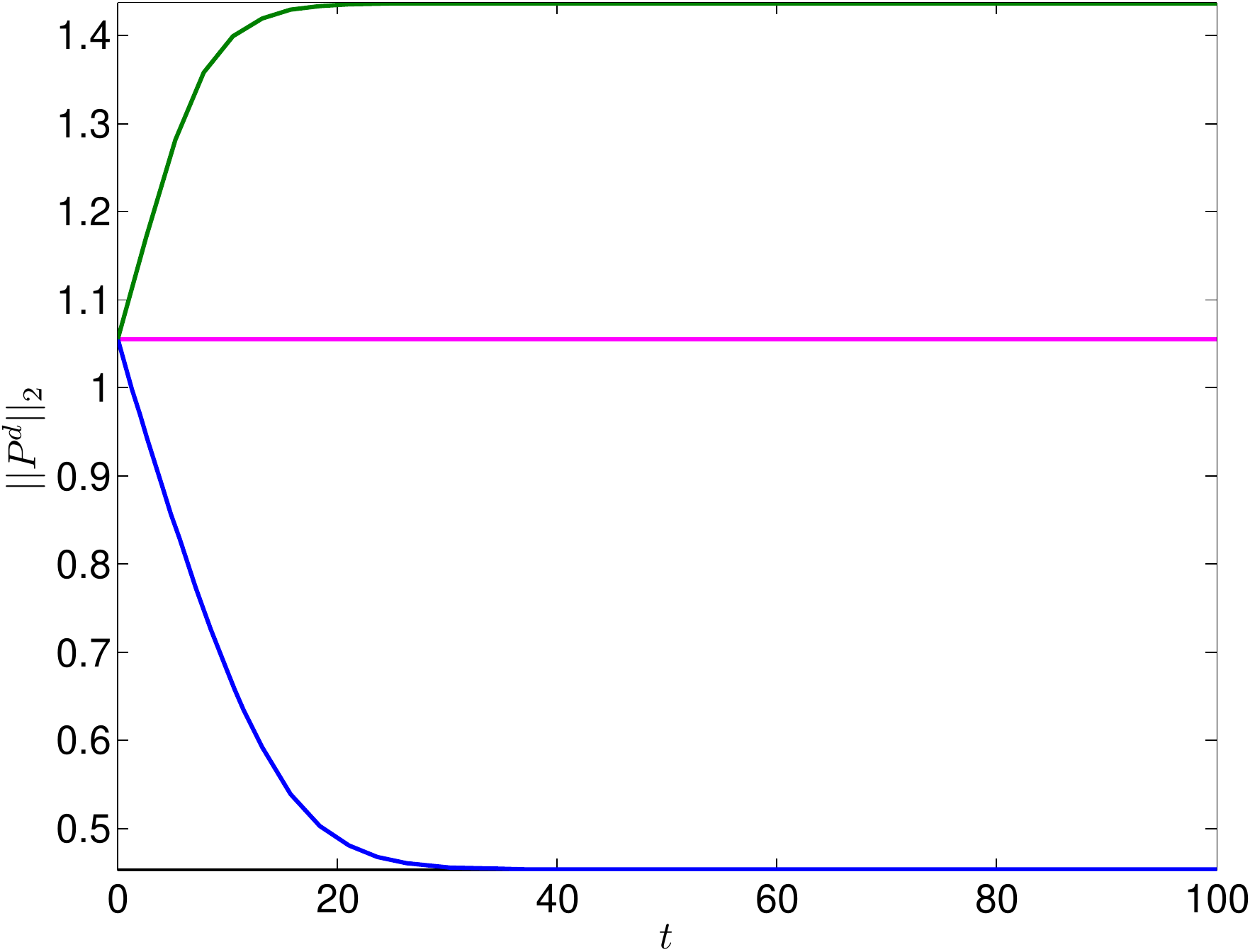}\label{fig:foss2hossb055}}\\
	\subfloat[Norm of the costates at $\contpar=1$]{\includegraphics[width=0.33\textwidth]{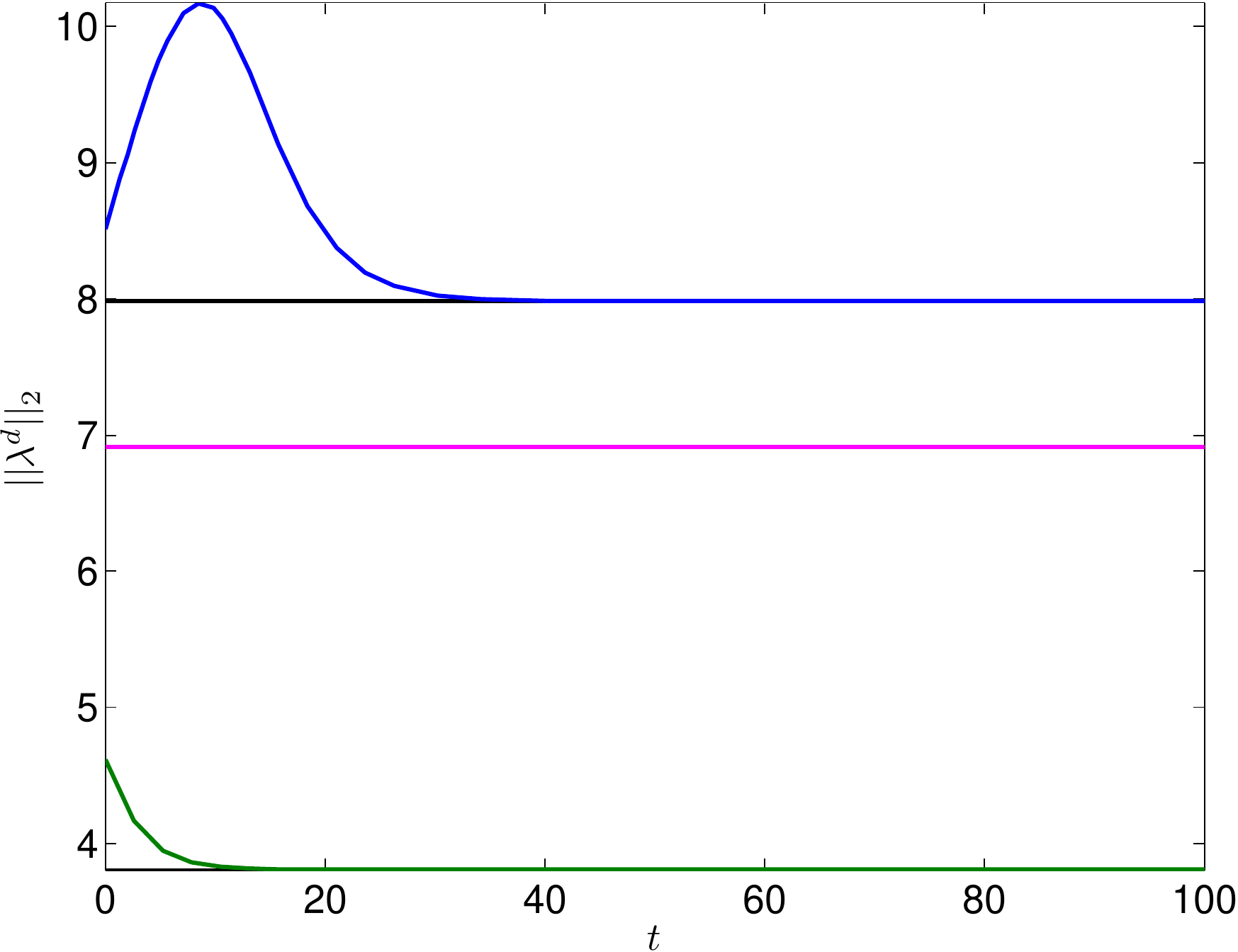}\label{fig:foss2hossb056}}\hfill
	\subfloat[Slice manifolds]{\includegraphics[width=0.33\textwidth]{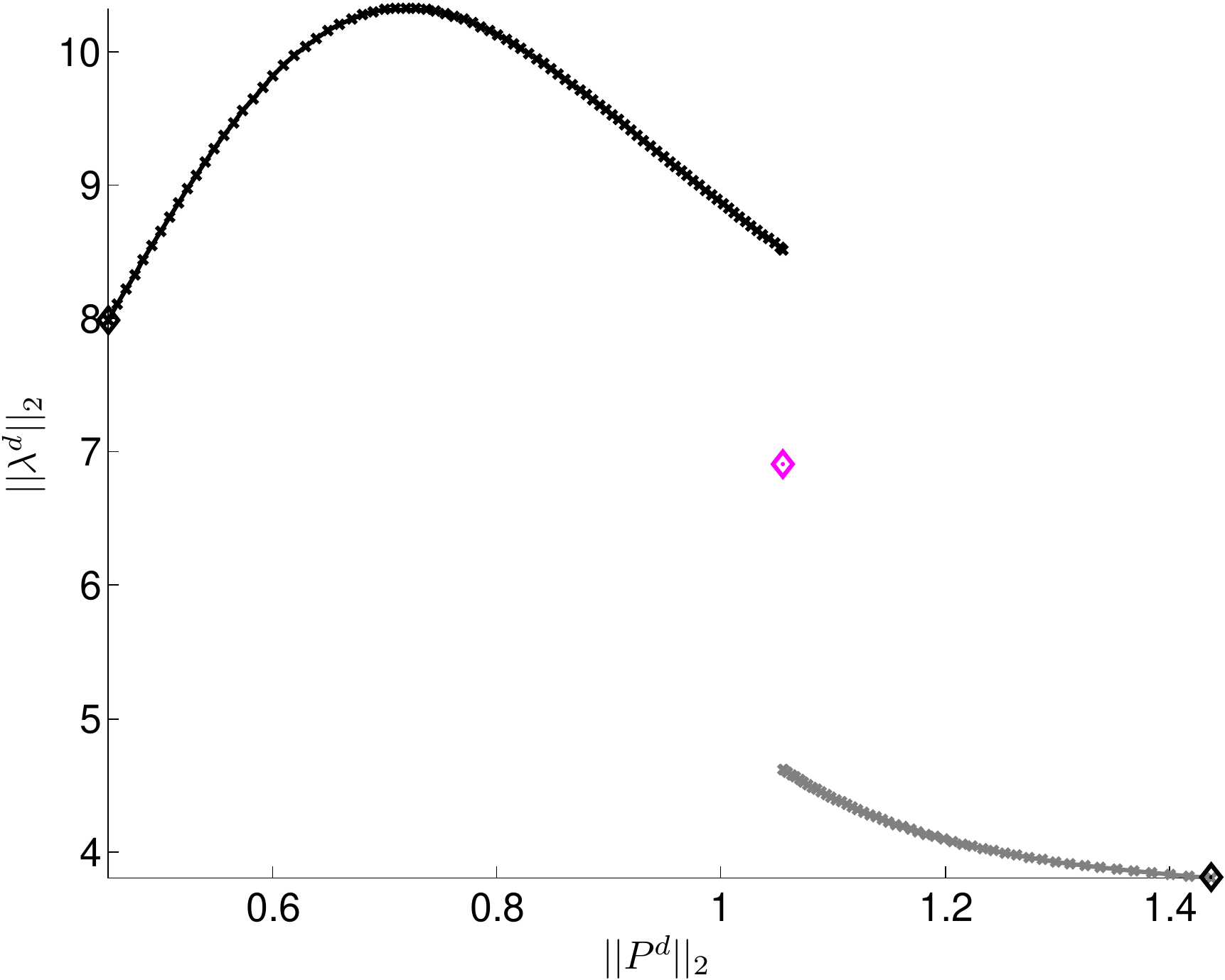}\label{fig:foss2hossb058}}\hfill
	\subfloat[Objective values at slice manifolds]{\includegraphics[width=0.33\textwidth]{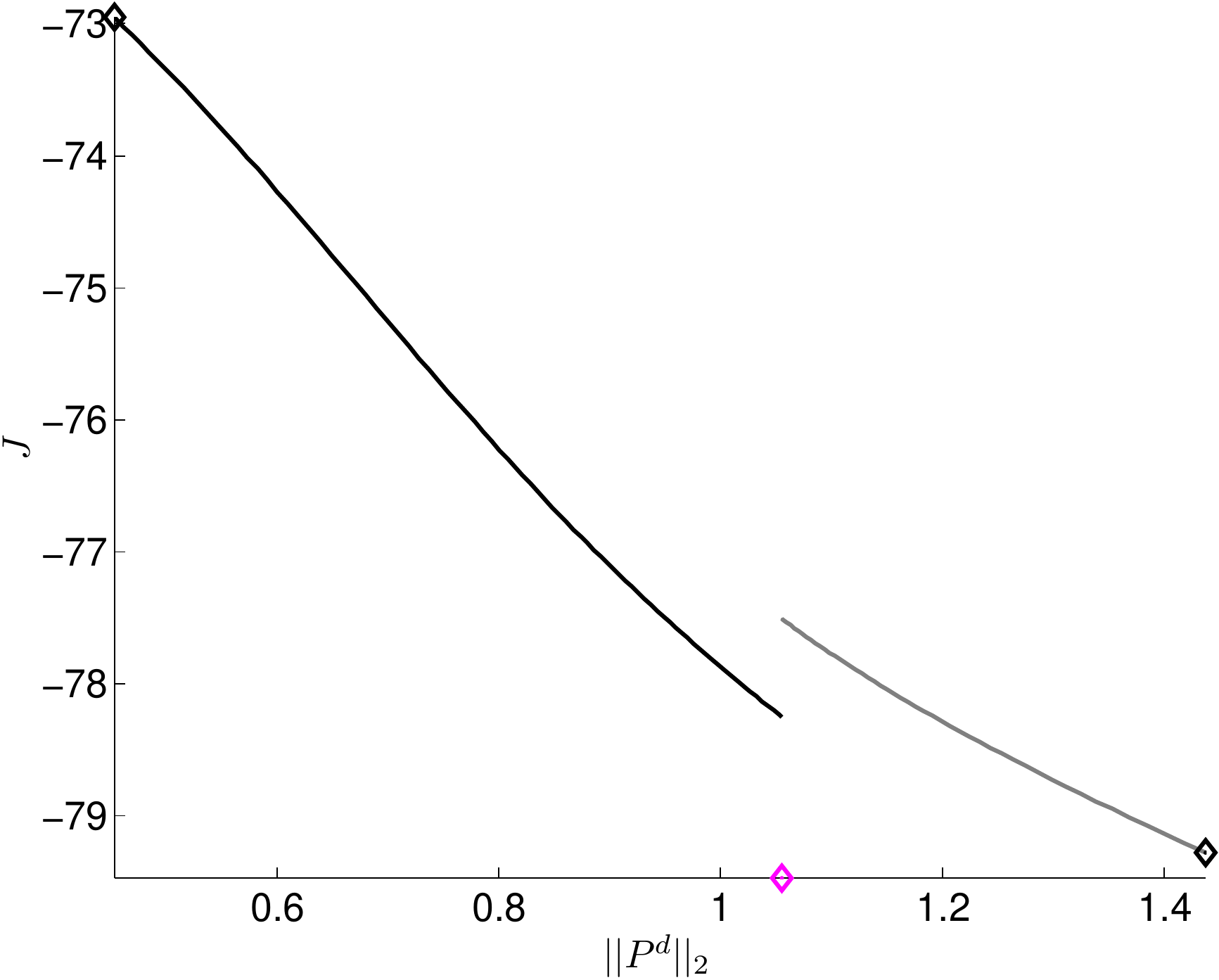}\label{fig:foss2hossb059}}
\caption[]{This figure presents the steps of the continuation process to find solutions starting at $P^d(0)=\phcssspp$ and converging to the oligotrophic and eutrophic \fcssspp. The colors refer to the eutrophic (blue), oligotrophic (green) and patterned (magenta) solutions. In \subref{fig:foss2hossb051} the manifolds of initial distributions, that are passed through the continuation, are depicted. For the continuation parameter $\contpar=0.5$ the state paths and corresponding norms are shown in \subref{fig:foss2hossb052} and \subref{fig:foss2hossb053}. The final results are illustrated in \subref{fig:foss2hossb054} (state paths), \subref{fig:foss2hossb057} (control paths), \subref{fig:foss2hossb055} (costate paths), and \subref{fig:foss2hossb056} (norms). In \subref{fig:foss2hossb058} the corresponding slice manifolds are shown in the state-costate space. The objective values for solutions of the slice manifolds in \subref{fig:foss2hossb059} show that the path converging to the oligotrophic equilibrium is optimal among all solutions that start in $P^d(0)=\phcssspp$.} 
	\label{fig:foss2hossb05}
\end{figure}
 
\begin{figure}
\centering
	\href{run:SLFDM_C30825_FindSkibaDistributionCombine.mp4}{\subfloat[Manifold of initial distributions]{\includegraphics[width=0.33\textwidth]{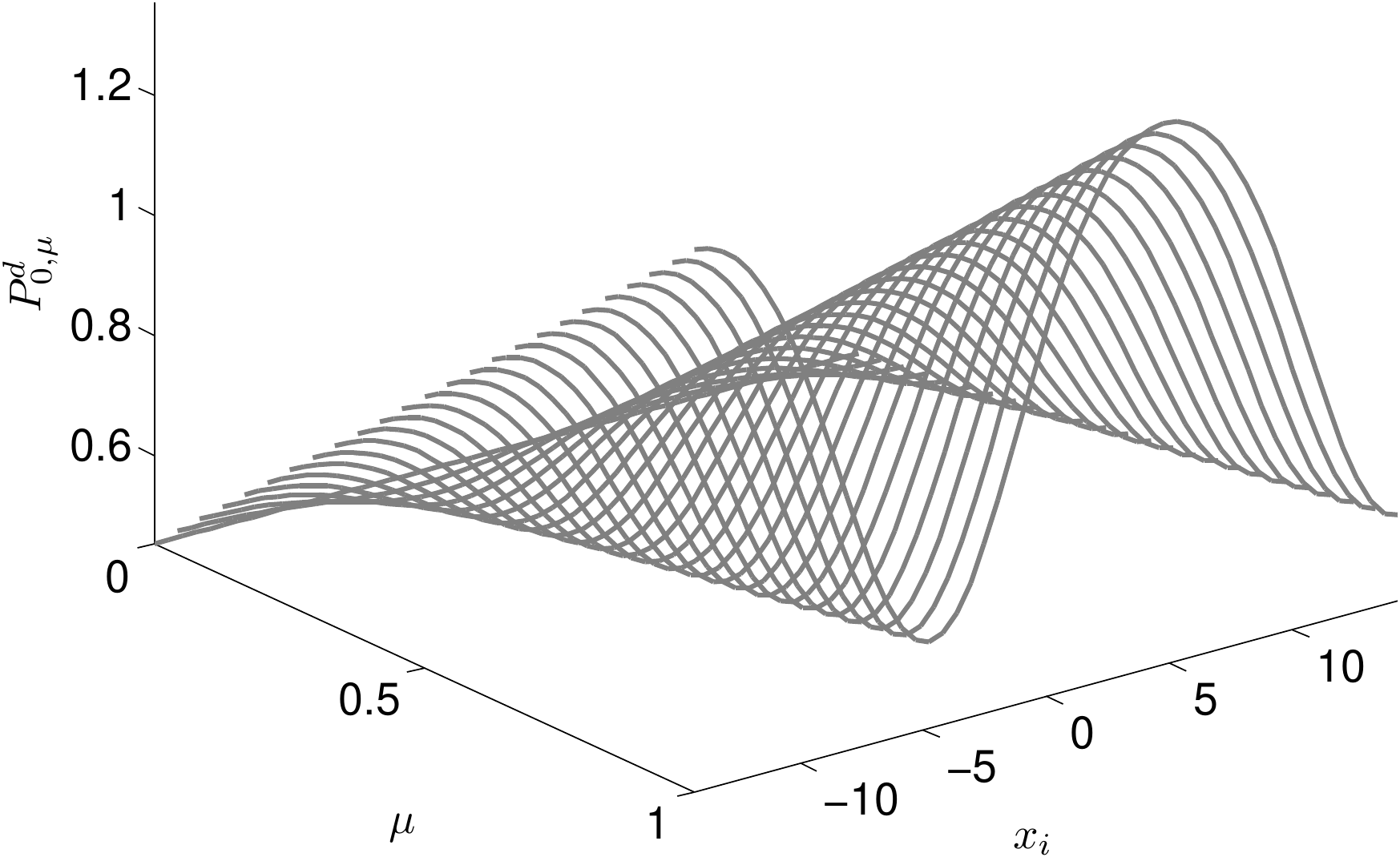}\label{fig:fossskiba1}}}\hfill
	\href{run:SLFDM_C30825_SkibaManifold_t_P.mp4}{\subfloat[State paths of Skiba solution]{\includegraphics[width=0.33\textwidth]{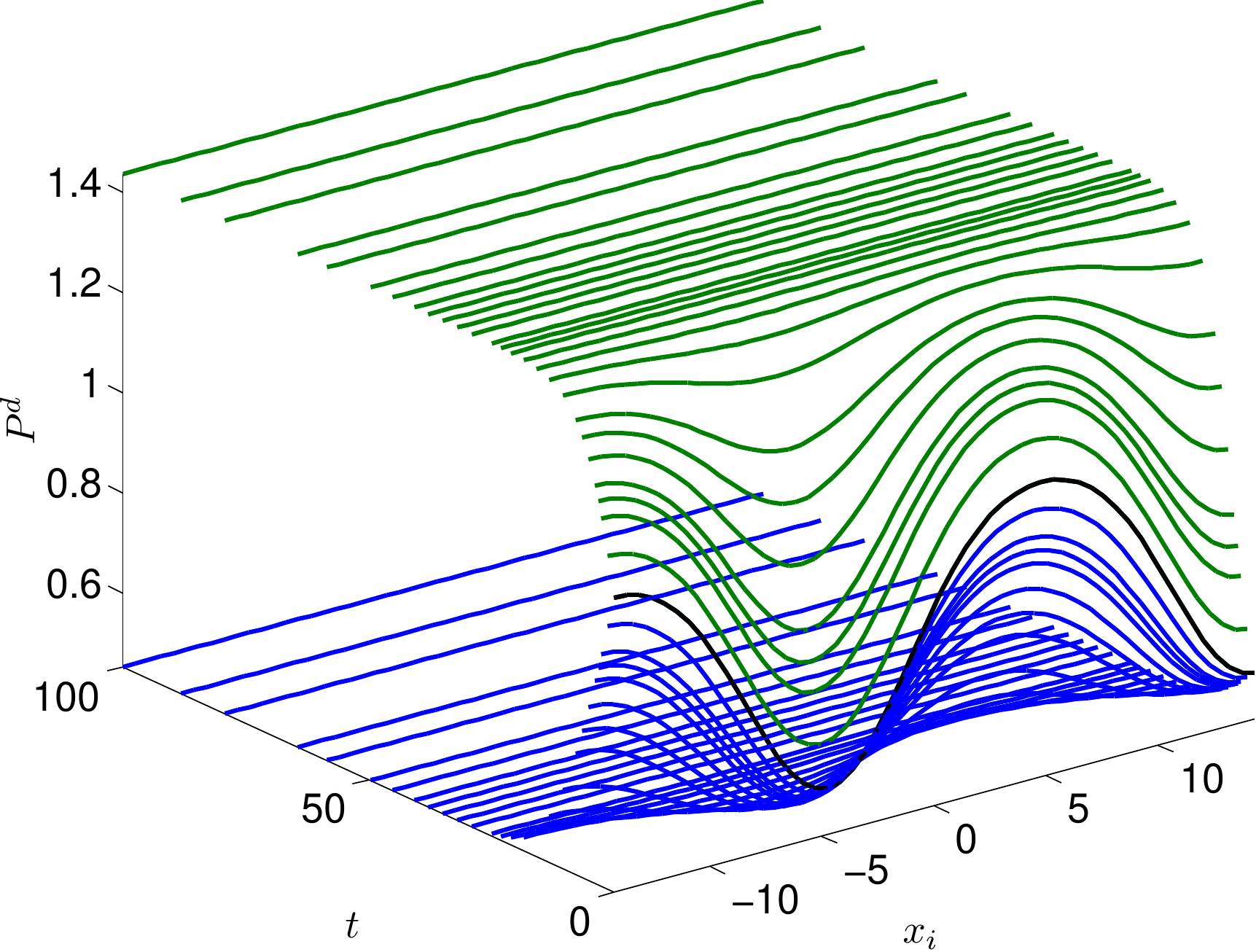}\label{fig:fossskiba2}}}\hfill
	\subfloat[State paths of Skiba solution]{\includegraphics[width=0.33\textwidth]{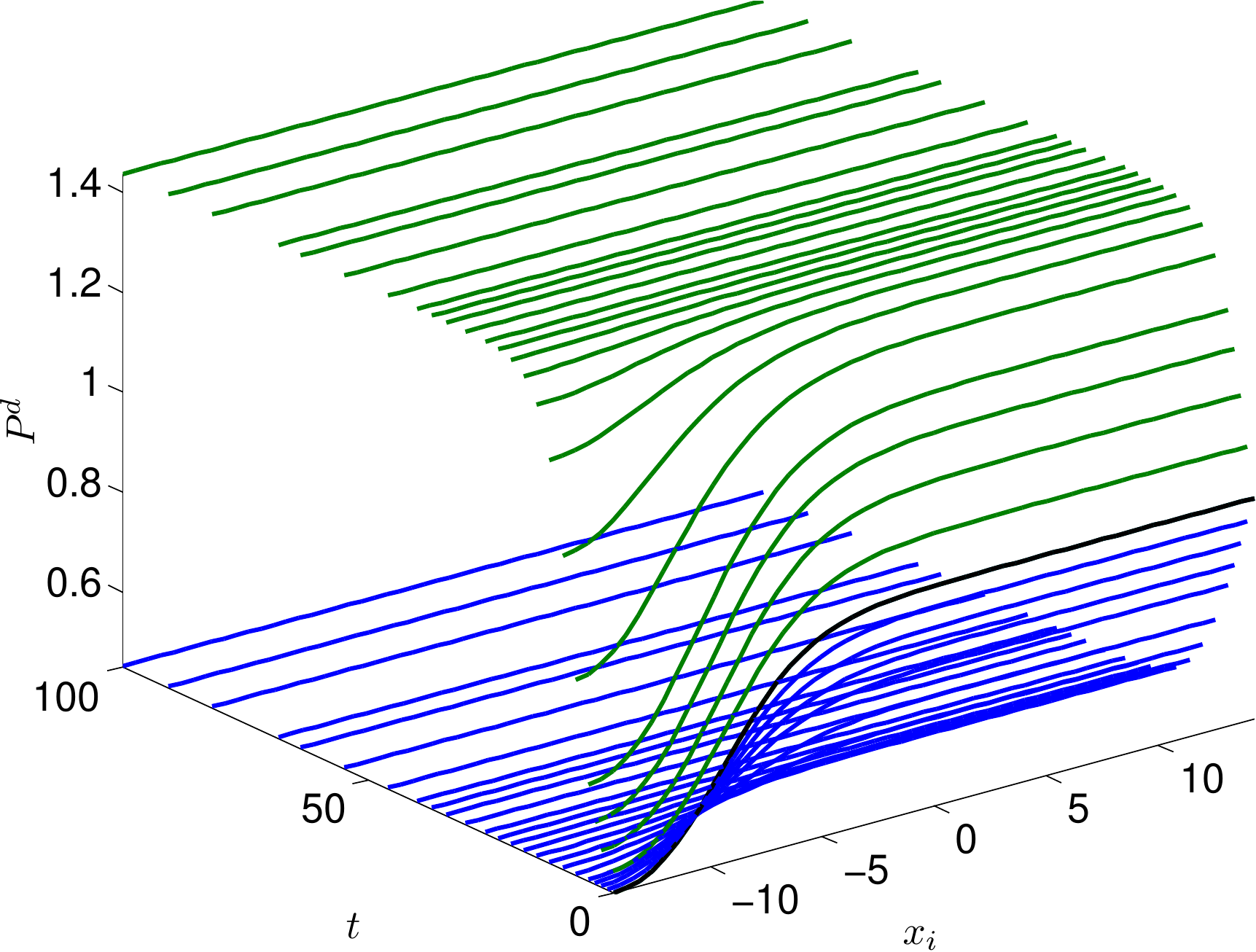}\label{fig:fossskiba3}}\\
	\subfloat[Crossing of objective values]{\includegraphics[width=0.33\textwidth]{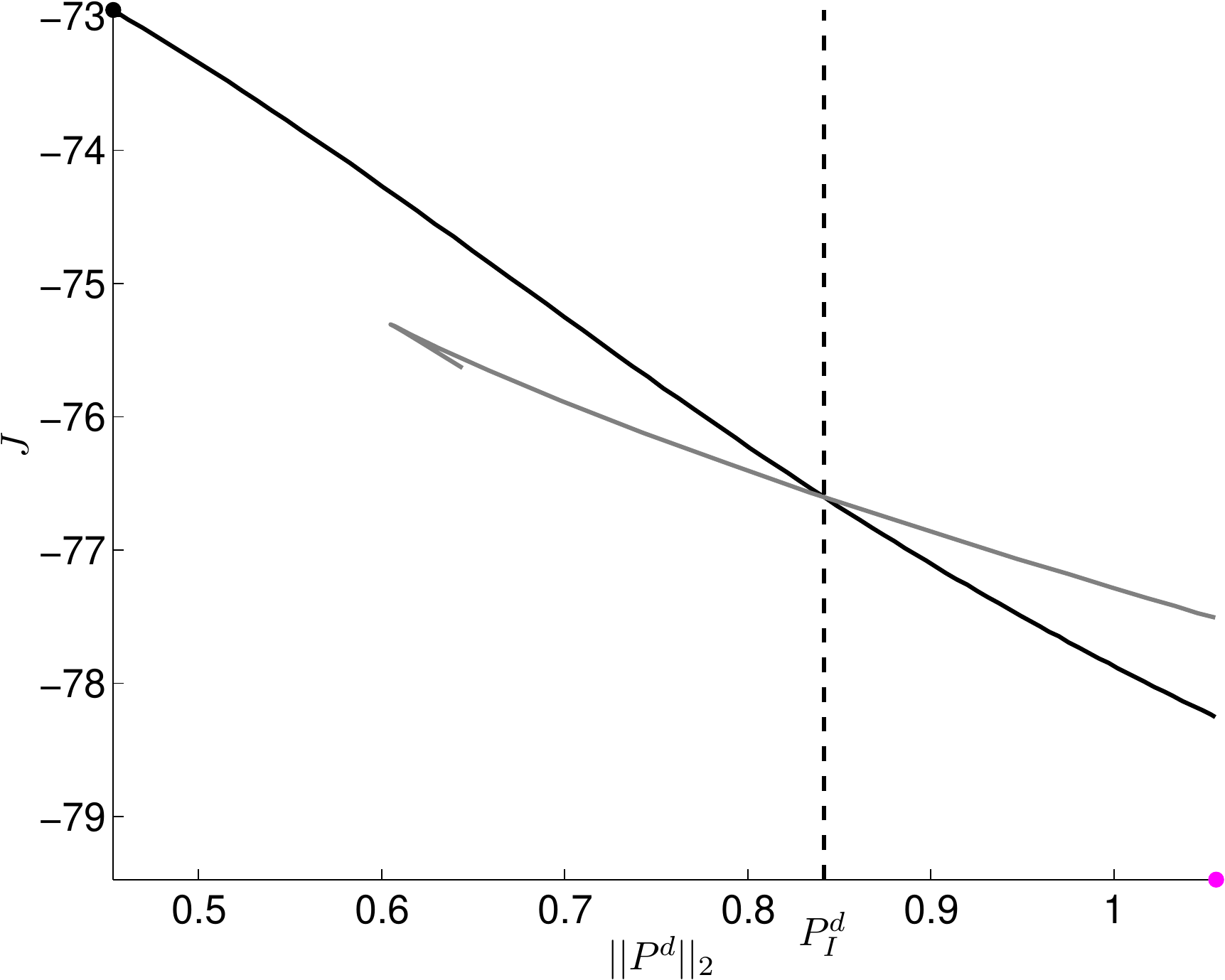}\label{fig:fossskiba4}}\hfill
	\href{run:SLFDM_C30825_SkibaManifold_t_u.mp4}{\subfloat[Control paths of Skiba solution]{\includegraphics[width=0.33\textwidth]{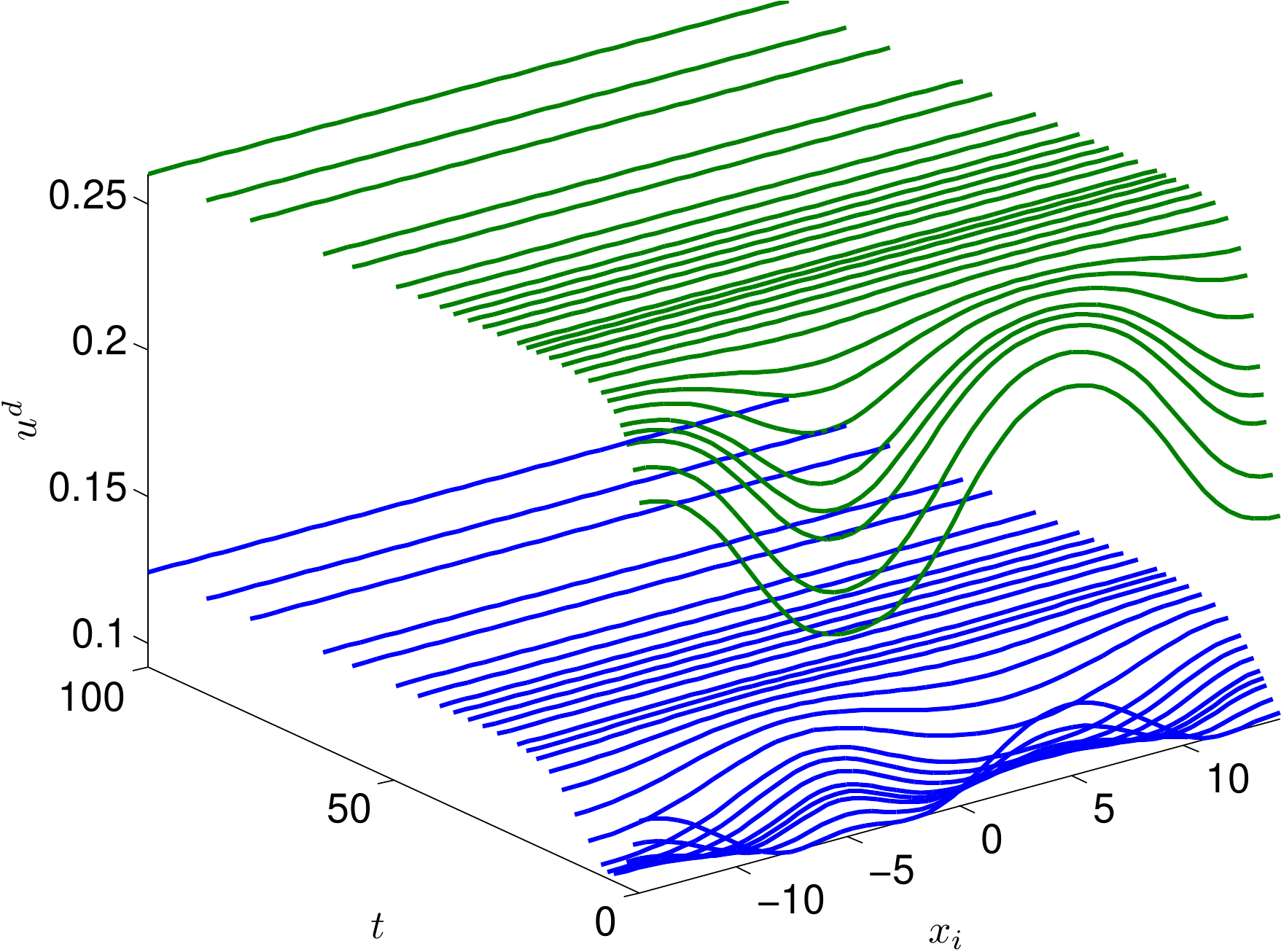}\label{fig:fossskiba5}}}\hfill
	\subfloat[Control paths of Skiba solution]{\includegraphics[width=0.33\textwidth]{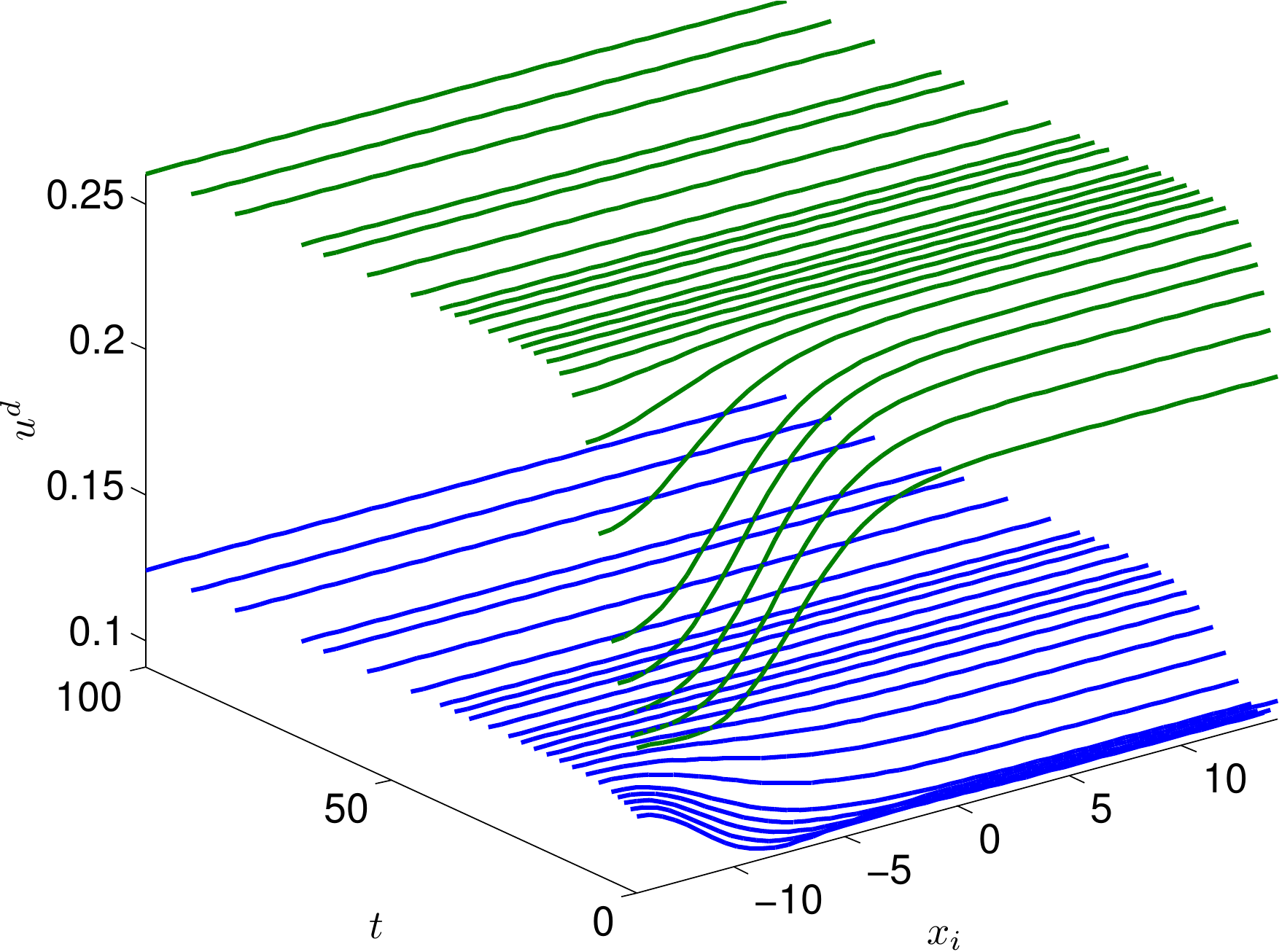}\label{fig:fossskiba6}}
\caption[]{This figure shows the detection of a patterned Skiba distribution and its continuation to a different Skiba distribution on the Skiba manifold. To receive the animation files associated to the panels \subref{fig:fossskiba1}, \subref{fig:fossskiba2} and \subref{fig:fossskiba5} please contact the author.}
	\label{fig:fossskiba}
\end{figure}

\subsection{Second scenario}
\label{sec:scenarion2}
In the first scenario with $b=0.65$ we only found the eutrophic and oligotrophic equilibria being \foss. This is somehow analogous to the 0D model, where the \css{} does not appear in the optimal system (see~\cref{fig:sl_optmultipleequ}). 

The result for the 0D model in the second scenario shows that beside the eutrophic and oligotrophic \cssspp{} the unstable node (\cssnspp) appears as limit of the regions of attractions for the two \cssspp. Thus, we can expect that \fcssnspp{} is optimal in \modelcref{eq:dslocm}. Therefore, also \hcssnspp{} cannot be excluded to be optimal.

We therefore analyze two different cases with $c=3.5$ where none of the \hcss{} satisfy \spp{} and $c=3.0825$, for which one of the \hcss{} satisfies \spp. For the original shallow lake \modelcref{eq:sl_model} these two cases are qualitatively the same (cf.~\cref{fig:sl_optmultipleuniqueequ}).

\subsubsection{Patterned equilibrium not satisfying \spp}
\label{sec:HOSSNotSatisfyingSPP}
For $c=3.5$ we try to find solutions that start at the states of the \fcssnspp{} ($\pfossmiddle$) and one of the \hcssnspp{} and converge to the eutrophic (oligotrophic) \fcss{}.

In \cref{fig:foss2unstI} the main results of this analysis are depicted. The example depicted in the first column (\subref*{fig:foss2unstI1},\subref*{fig:foss2unstI4} and \subref*{fig:foss2unstI7}) is analogous to the case in the 0D model shown in \cref{fig:sl_optmultipleuniqueequ}. Thus, it is not possible to find a solution starting at the initial states of the \fcssnspp{} and converging to the eutrophic or oligotrophic \fcss. Instead, during the continuation process the initial states of $\pfossmiddle$ are approached but cannot be reached. In \cref{fig:foss2unstI1} the phase portrait of the final result is plotted, which is analogous to the state-costate space in 0D (cf.~\cref{fig:sl_optmultipleuniqueequa}). In \cref{fig:foss2unstI3} we see that the objective value for the homogeneous initial distributions is continuous (cf.~\cref{fig:sl_optmultipleuniqueequb}). Consequently, for spatially homogeneous initial distributions, the optimal path is unique, where $\pfossmiddle$ separates the regions of attractions for the eutrophic (oligotrophic) equilibrium $\pfecssspp$ ($\pfocssspp$). The optimal state paths for solutions starting exactly at $\pfossmiddle$ (the equilibrium solution, black) and in the near vicinity (blue and green) are depicted in \cref{fig:foss2unstI4}.
Repeating these steps for each of the \hcss, two examples are depicted in the last two columns of \cref{fig:foss2unstI}, yields that each of these equilibria is optimal, i.e. are \hoss. Since none of the \hcss{} satisfy \spp{} these equilibria and their stable manifolds separate the regions of attractions of the \foss. 

At this point a few questions remain unsettled.
\begin{itemize}
	\item Is the defect of an equilibrium not satisfying \spp{} constant for state discretizations that are fine enough?
	\item What does this mean for the original \PDE{} problem?
	\item Can we say that equilibria not satisfying \spp{} and their stable manifolds separate the regions of attractions for the solutions satisfying \spp?
	\item What is the state space of the \PDE{} problem?
\end{itemize}

\subsubsection{Patterned equilibrium satisfying \spp}
\label{sec:HOSSSatisfyingSPP}
In this section we numerically check, whether the unique patterned equilibrium \hcssspp{} for $c=3.0825$ is \hoss (cf.~\cref{fig:sld_bifdiag2}). For that reason we have to show that there exists no other solution path of the canonical system \cref{eq:dslocmcansys} that starts at $\phcssspp$ yielding a larger objective value. The only other candidates are stable paths of the eutrophic and oligotrophic \fcssspp. The result of the numerical comparison for the oligotrohic versus the patterned equilibrium is depicted in \cref{fig:fossnhoss3} and \cref{fig:fossnhoss2}.

To find a feasible path $(P_1^d(\cdot,\contpar),\lambda_1^d(\cdot,\contpar))$ that satisfies $P^d(0,1)=\phcssspp$ and 
\begin{equation*}
	\lim_{t\to\infty}(P_1^d(t,\contpar),\lambda_1^d(t,\contpar))=(\pfocssspp,\lfocssspp)\quad\text{with}\quad\contpar\in[0,1]
\end{equation*}
we solve the homotopy problem \cref{eq:homcontbvp}, starting with the constant equilibrium solution $(\pfocssspp,\lfocssspp)$. The continuation process revealed that it is not possible to find a feasible path for $\kappa=1$, instead some value $\kappa_0<1$ was approached. The last computed path $(P_1^d(\cdot,\contpar_0),\lambda_1^d(\cdot,\contpar_0))$ is shown in \cref{fig:fossnhoss310} together with the corresponding slice manifold (dashed black).

Next we repeated the procedure for the reversed homotopy problem, starting  with the constant patterned solution $(\phcssspp,\lhcssspp)$ and trying to find a feasible path $(P_2^d(\cdot,1-\contpar),\lambda_2^d(\cdot,1-\contpar))$ that satisfies $P^d(0,0)=\pfocssspp$ and 
\begin{equation*}
	\lim_{t\to\infty}(P_2^d(t,1-\contpar),\lambda_2^d(t,1-\contpar))=(\phcssspp,\lhcssspp)\quad\text{with}\quad\contpar\in[0,1].
\end{equation*}
Again the continuation process revealed that it was not possible to find a feasible path for $\kappa=0$, instead some value approximately $1-\kappa_0$ was approached. This solution $(P_2^d(\cdot,1-\contpar),\lambda_2^d(\cdot,1-\contpar))$ is represented in \cref{fig:fossnhoss301} by the blue solution path and black slice manifold.

The two last solution paths from both continuation processes suggest that there exists a separating manifold for the regions of attractions of the oligotrophic \fcssspp{} and \hcssspp. A possible candidate for this separating manifold is the stable manifold of the 
\hcssnspp{} with defect $-1$ (see the \textcolor{red}{$\diamond$} in \cref{fig:fossnhoss300}). To test this conjecture we solved the homotopy problem \cref{eq:homcontbvpnspp} for defective equilibria. For $x_1^{(1)}$ we took $P_1^d(0,\contpar_0)$ (the initial states of the last continuation step of the first homotopy problem) and set $V_1\defin(1,\ldots,1)\in\R^{N+1}$, which satisfies the rank condition \cref{eq:homcontbvpnspp4}. The last solution $(P_3^d(\cdot,1),\lambda_3^d(\cdot,1))$ of this homotopy problem is depicted as dashed blue curve in \cref{fig:fossnhoss3} and gives a strong numerical argument for our conjecture. 

The overall picture, \cref{fig:fossnhoss3}, suggests that for every $\varepsilon>0$ there exists $\kappa_1$ and $\kappa_2$ such that there exists solutions $(P_1^d(\cdot,\contpar_1),\lambda_1^d(\cdot,\contpar_1))$ and $(P_2^d(\cdot,1-\contpar_2),\lambda_2^d(\cdot,1-\contpar_2))$ of the homotopy problems with
\begin{align*}
	&\Norm{(P_1^d(0,\contpar_1),\lambda_1^d(0,\contpar_1))-(P_3^d(0,1),\lambda_3^d(0,1))}_2<\varepsilon
	\shortintertext{and}
	&\Norm{(P_2^d(0,1-\contpar_2),\lambda_2^d(0,1-\contpar_2))-(P_3^d(0,1),\lambda_3^d(0,1))}_2<\varepsilon
	\shortintertext{or even stronger}
	&\Norm{(P_1^d(\cdot,\contpar_1),\lambda_1^d(\cdot,\contpar_1))-(P_3^d(\cdot,1),\lambda_3^d(\cdot,1))}_{L_2}<\varepsilon
	\shortintertext{and}
	&\Norm{(P_2^d(\cdot,1-\contpar_2),\lambda_2^d(\cdot,1-\contpar_2))-(P_3^d(\cdot,1),\lambda_3^d(\cdot,1))}_{L_2}<\varepsilon.
\end{align*}
Plotting the objective values evaluated along the solutions of the corresponding slice manifolds shows that the objective function is continuous in the vicinity of $P_3^d(0,1)$, see \cref{fig:fossnhoss2}. An analogous result holds for the comparison of the stable paths converging to the eutrophic \fcssspp{} and the \hcssspp. This proves the optimality of \fcssspp{} and \hcssspp{} as well. Also according to the case $c=3.5$ the stable manifolds of defective equilibria separate the regions of attractions of the equilibria satisfying \spp.

\Cref{fig:fossnhoss1} shows (part of) a phase portrait in the state-costate space for $c=3.0825$. The subscripts of the equilibria denote the defect of the according equilibrium. Thus, there exist two \fcssspp{} ($\pfecssspp$ and $\pfocssspp$), and one \hcssspp{} ($\phcssspp$). Additionally a few paths are plotted that converge to $\pfecssspp,\ \pfocssspp$, and $\phcssspp$ (solid blue) and to \hcssnspp{} with defect $-1$ (dashed blue). The specific case for solutions that converge to the oligotrophic equilibrium $(\pfocssspp,\lfocssspp)$ and patterned equilibrium $(\phcssspp,\lhcssspp)$ is separately illustrated in \cref{fig:fossnhoss3} and \cref{fig:fossnhoss2}. 

\begin{figure}
\centering
	\subfloat[Continuation to $P^d(0)\approx\pfcssnspp$]{\includegraphics[width=0.33\textwidth]{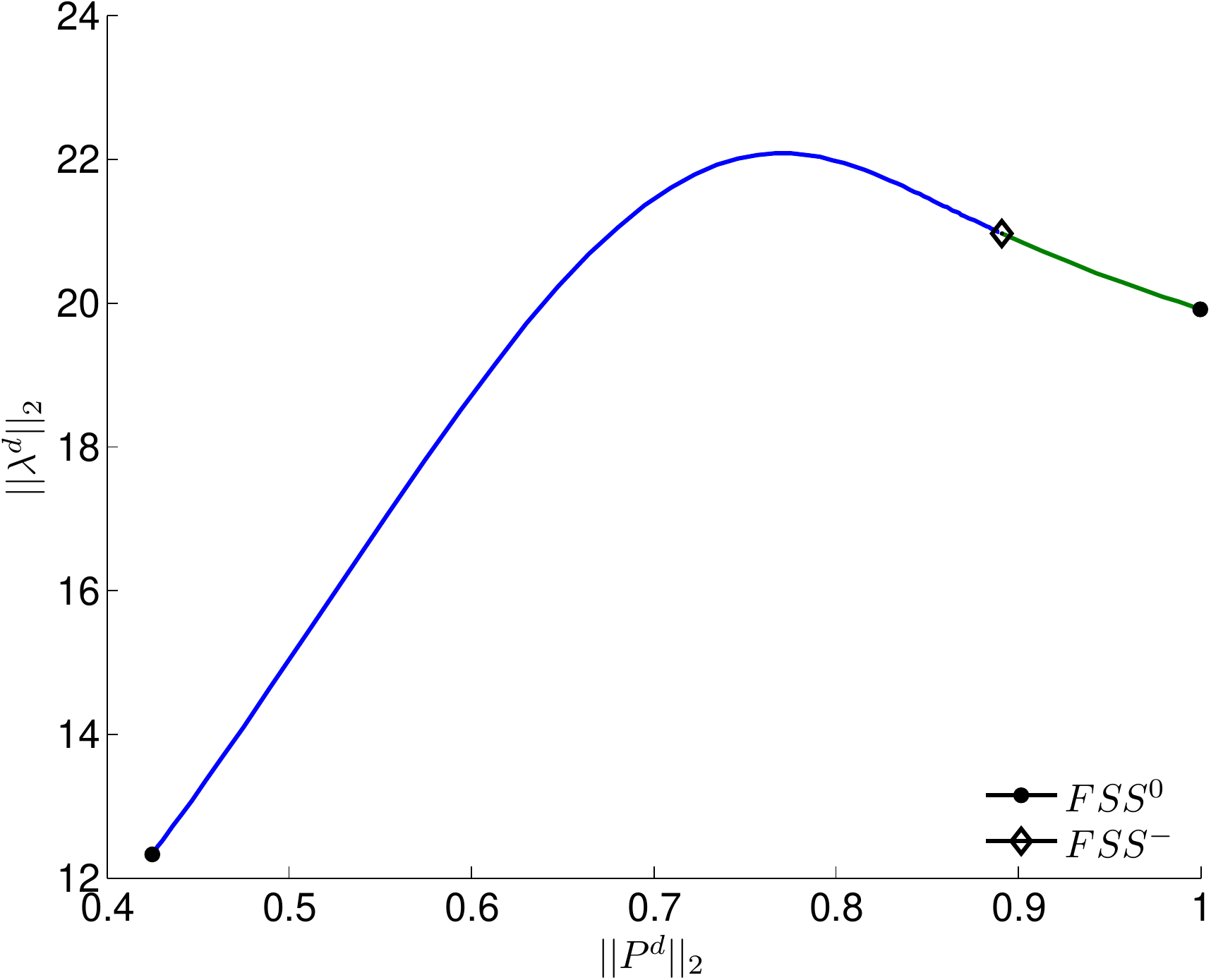}\label{fig:foss2unstI1}}\hfill
	\subfloat[Continuation to $P^d(0)\approx$\textcolor{magenta}{$\phcssnspp$}]{\includegraphics[width=0.33\textwidth]{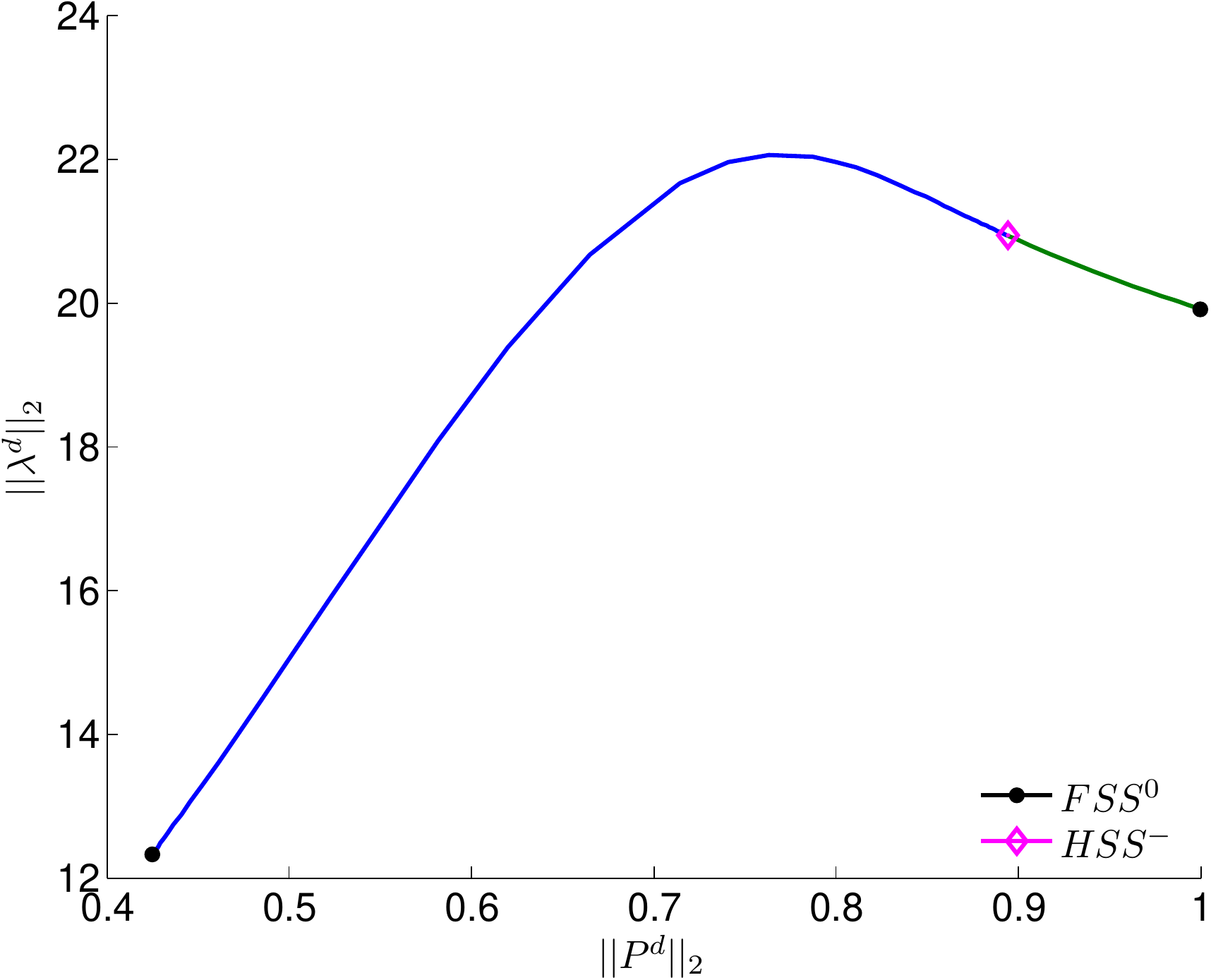}\label{fig:foss2unstI2}}\hfill
	\subfloat[Continuation to $P^d(0)\approx$\textcolor{red}{$\phcssnspp$}]{\includegraphics[width=0.33\textwidth]{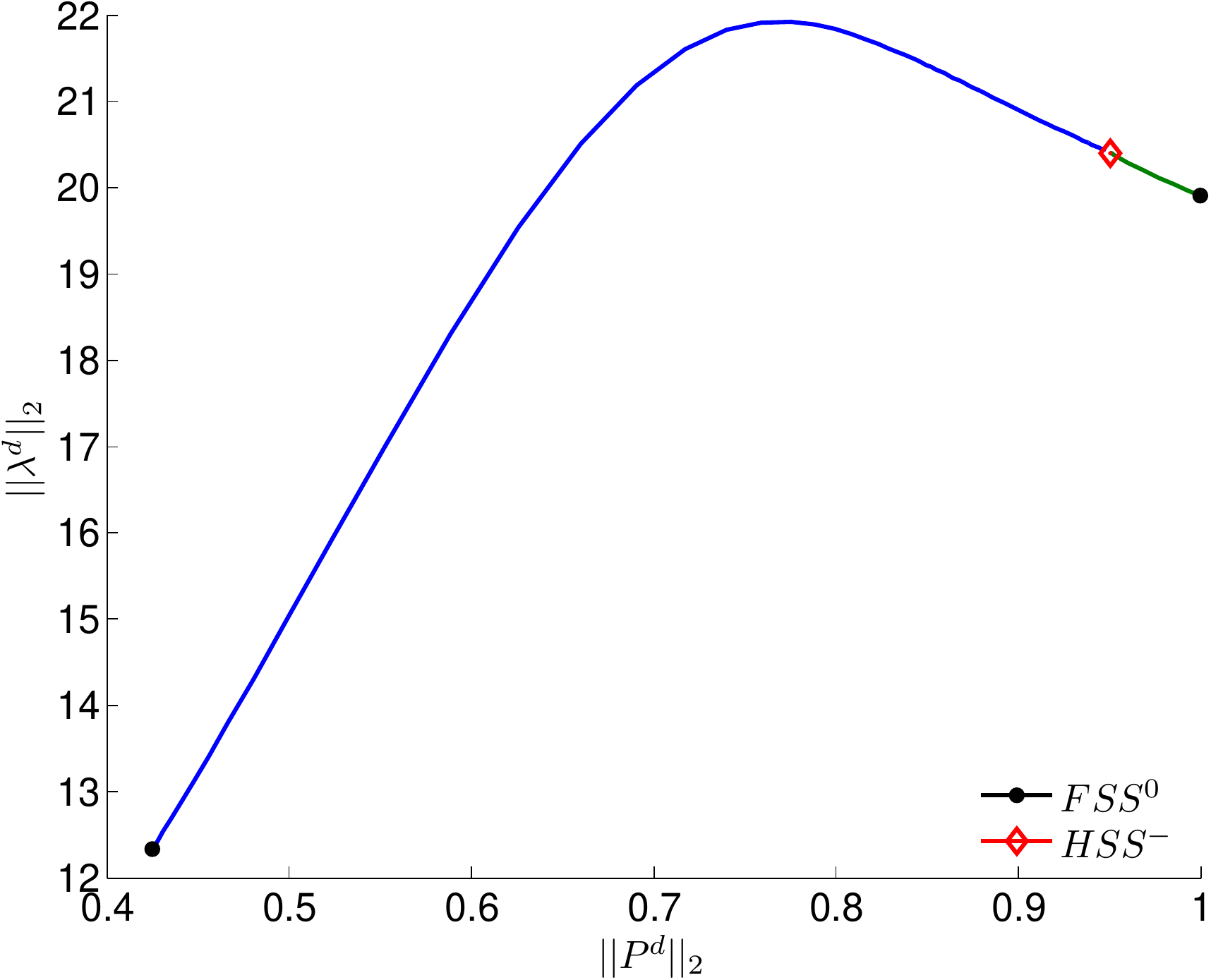}\label{fig:foss2unstI3}}\\
	\subfloat[State paths starting near $\pfcssnspp$]{\includegraphics[width=0.33\textwidth]{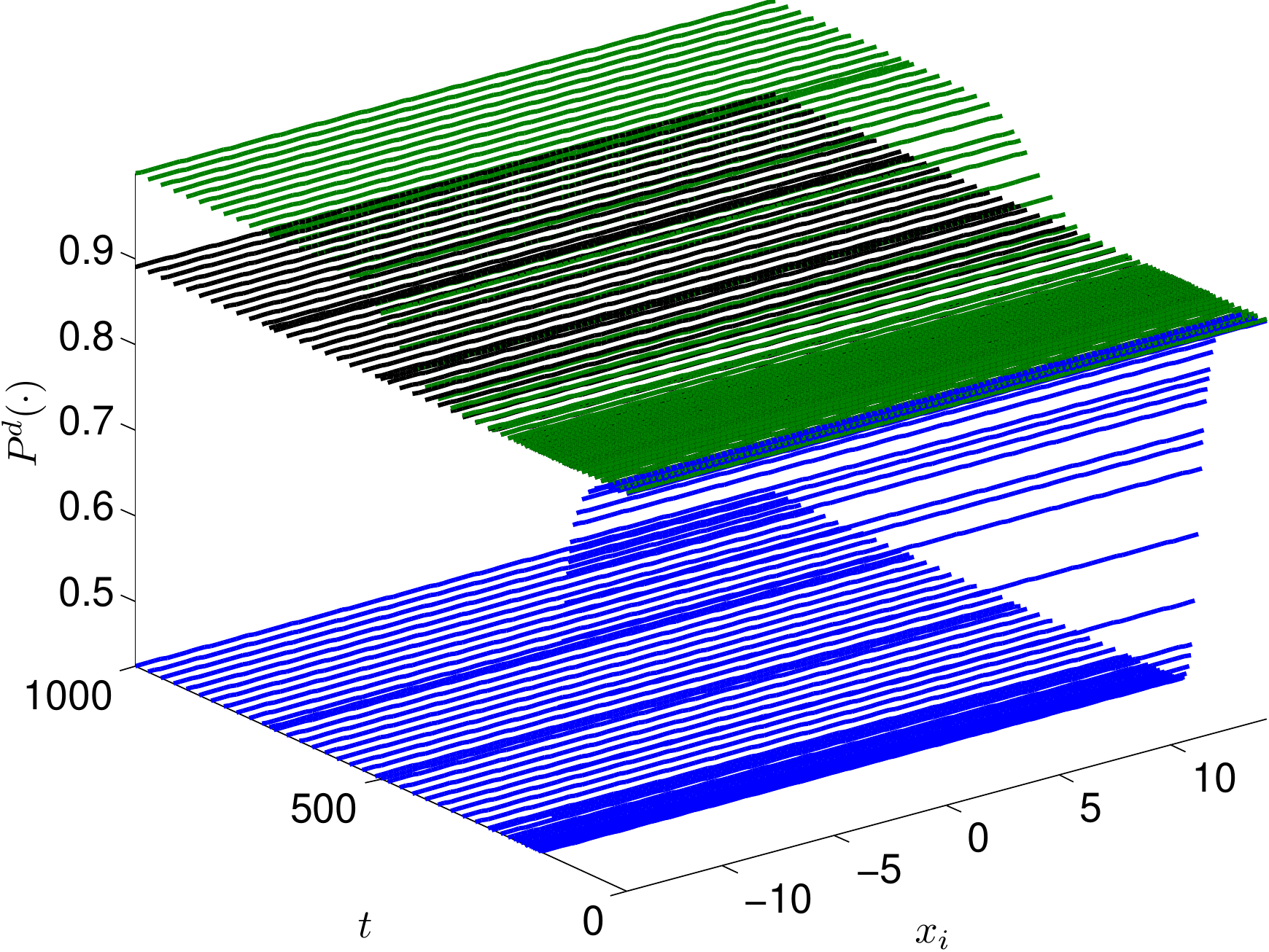}\label{fig:foss2unstI4}}\hfill
	\subfloat[State paths starting near \textcolor{magenta}{$\phcssnspp$}]{\includegraphics[width=0.33\textwidth]{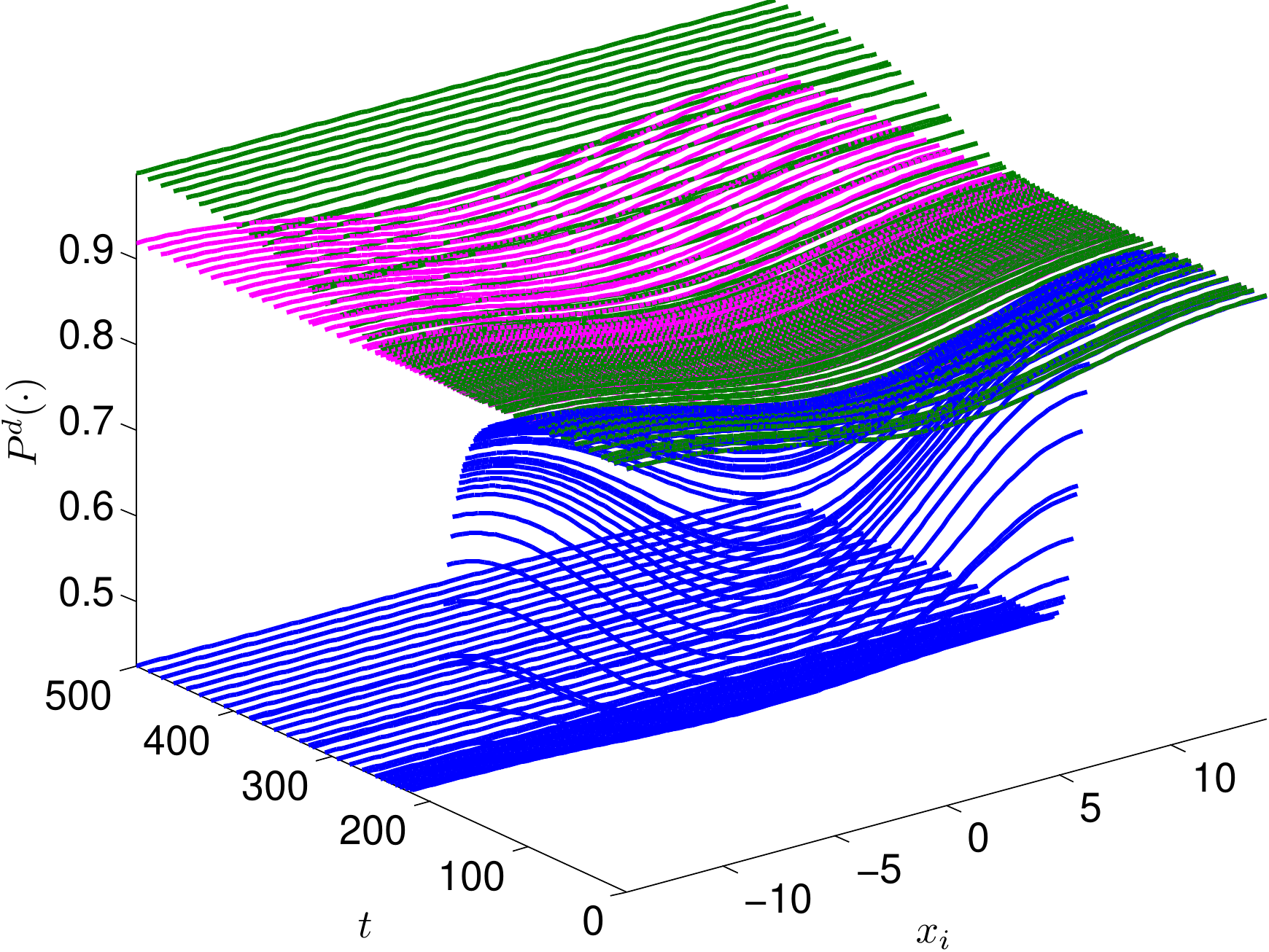}\label{fig:foss2unstI5}}\hfill
	\subfloat[State paths starting near \textcolor{red}{$\phcssnspp$}]{\includegraphics[width=0.33\textwidth]{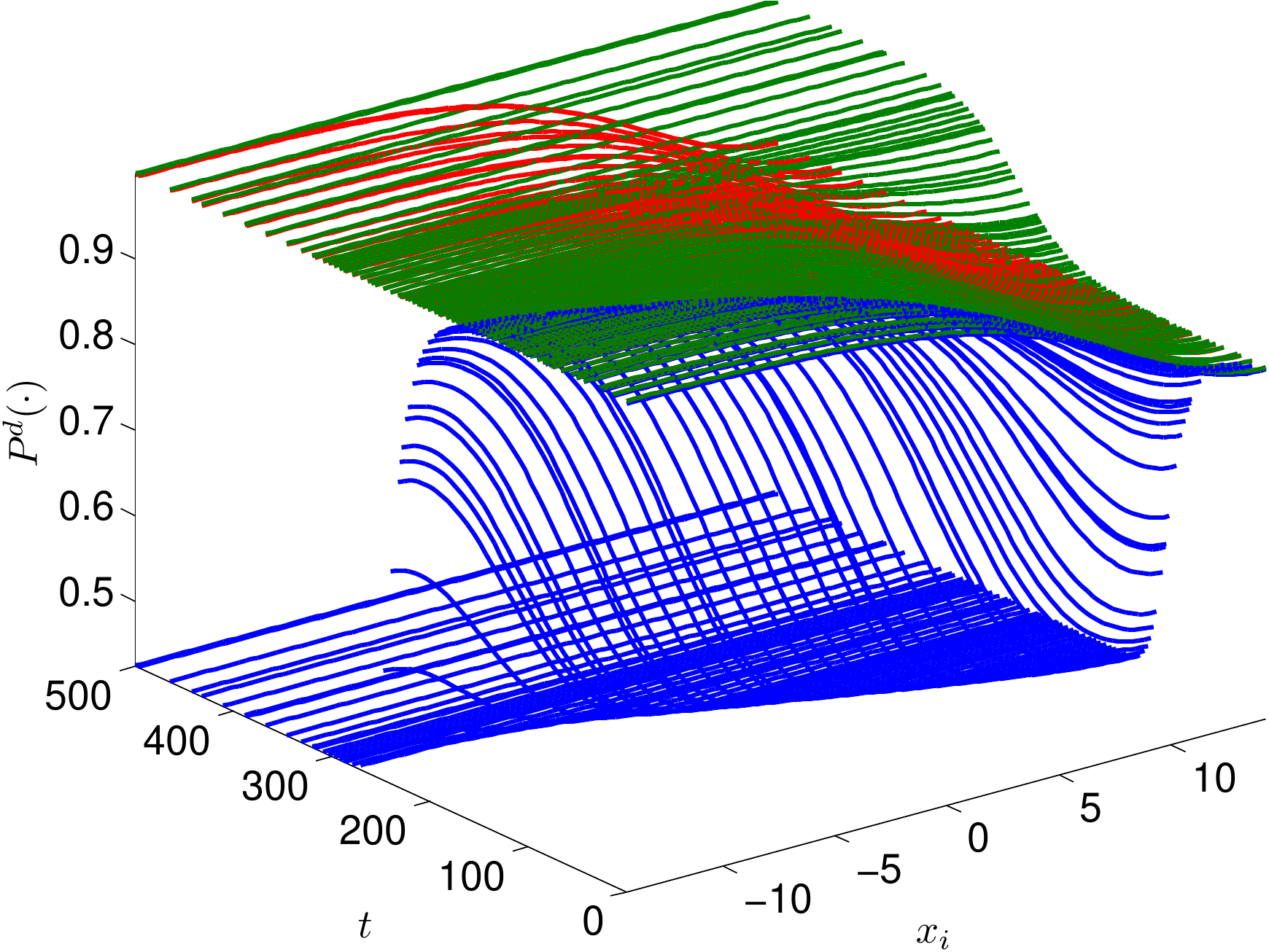}\label{fig:foss2unstI6}}\\
	\subfloat[Objective value along slice manifolds]{\includegraphics[width=0.33\textwidth]{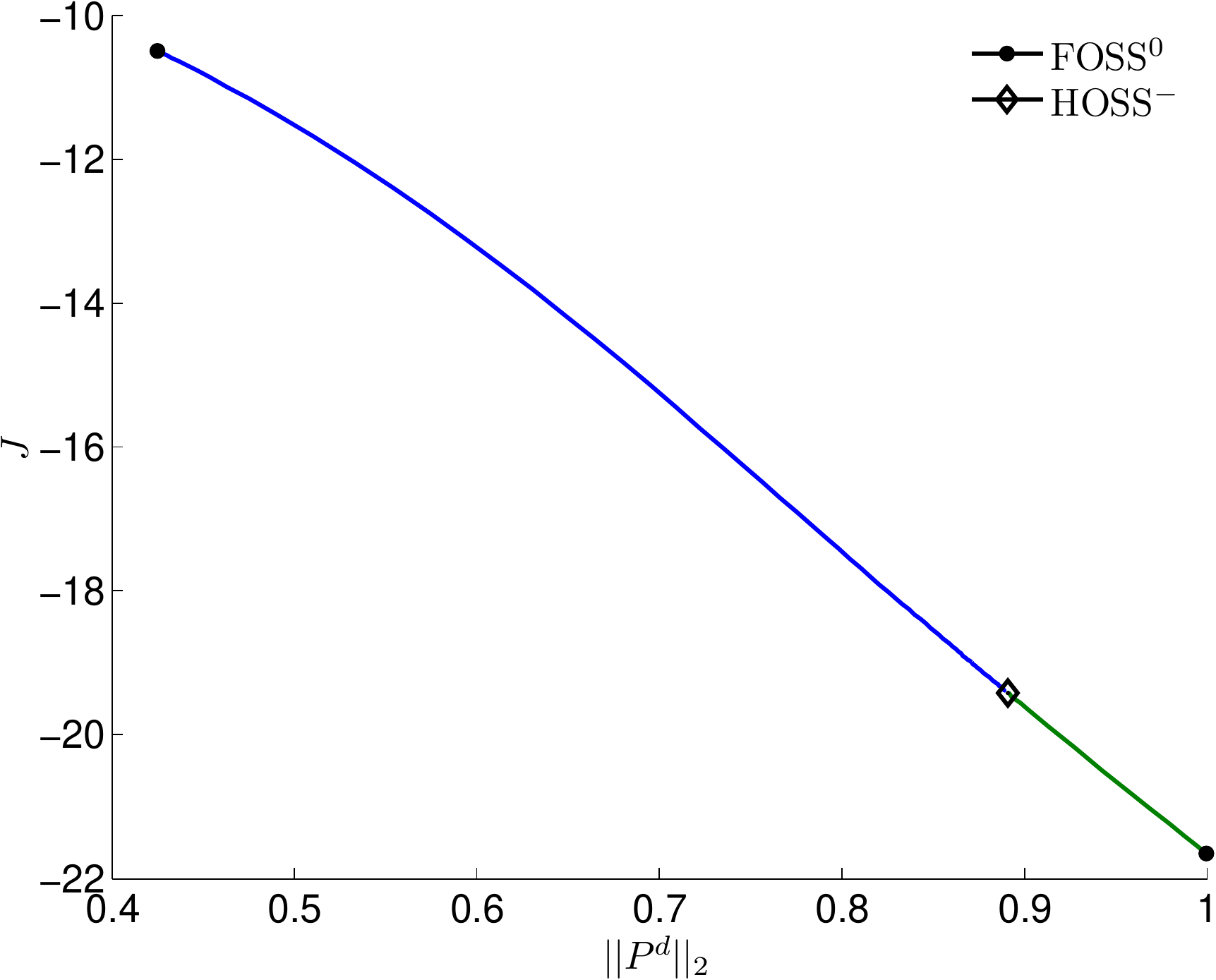}\label{fig:foss2unstI7}}\hfill
	\subfloat[Objective value along slice manifolds]{\includegraphics[width=0.33\textwidth]{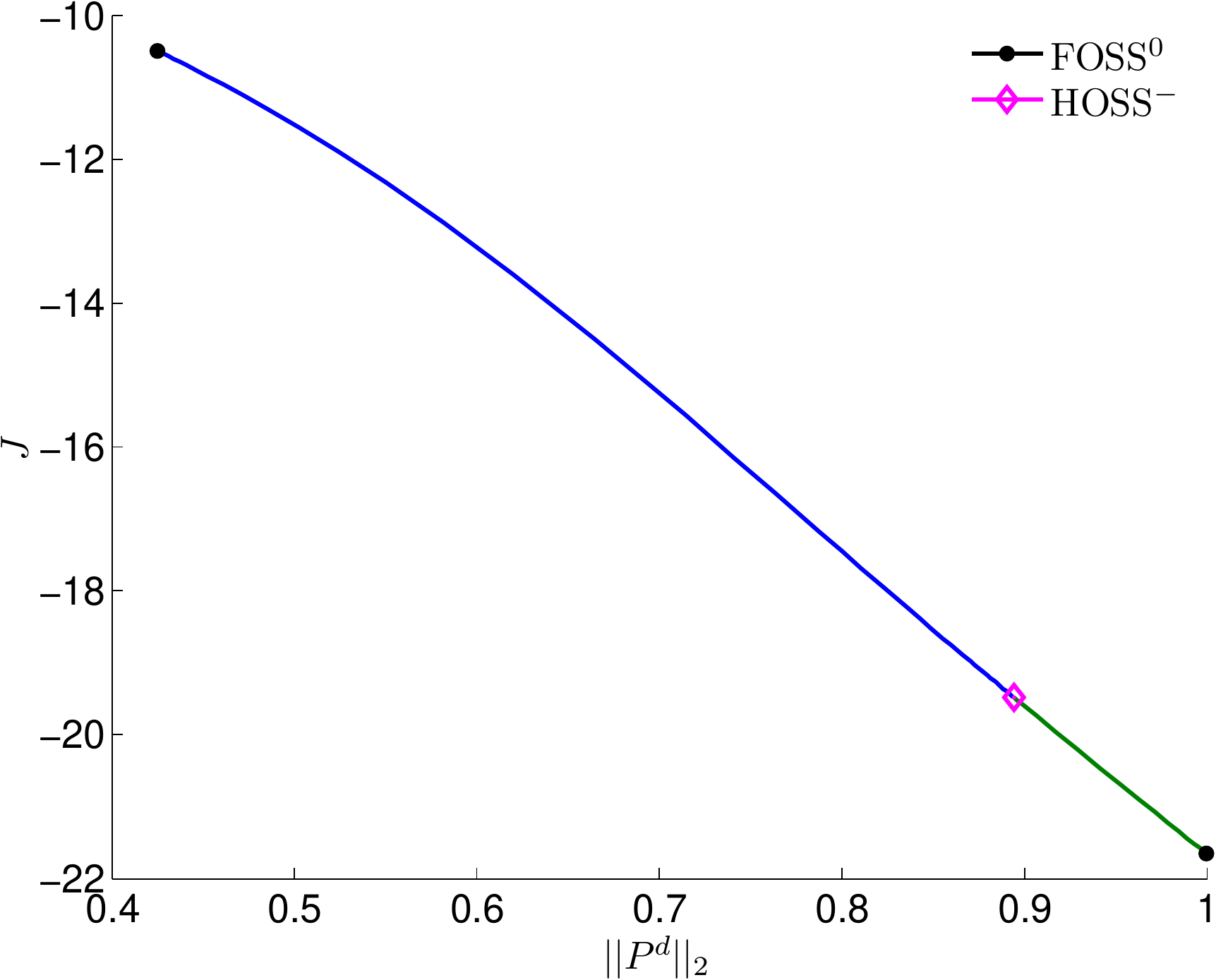}\label{fig:foss2unstI8}}\hfill
	\subfloat[Objective value along slice manifolds]{\includegraphics[width=0.33\textwidth]{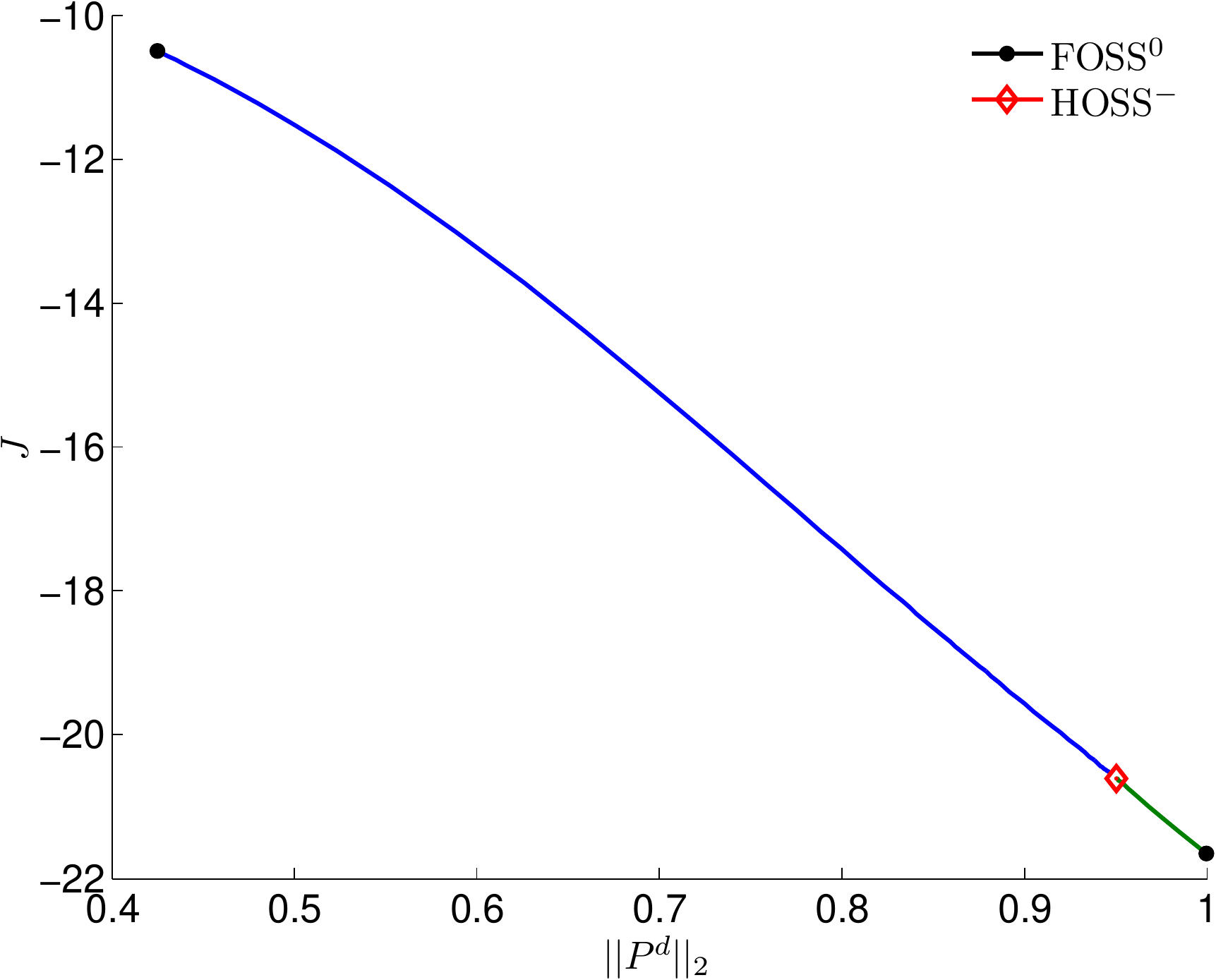}\label{fig:foss2unstI9}}
\caption[]{This figure depicts the numerical proof for the optimality of \fcssnspp{} and \hcssnspp{} at $c=3.5$, i.e., the optimality of the equilibria that do not satisfy \spp, exemplified for three equilibria. In the first row \subref{fig:foss2unstI1}-\subref{fig:foss2unstI3} the slice manifolds for the according continuation processes are depicted in the normed state-costate space. The second row \subref{fig:foss2unstI4}-\subref{fig:foss2unstI6} shows the state paths for the solutions starting at and near the \fcssnspp{} and \hcssnspp, respectively. The last row \subref{fig:foss2unstI7}-\subref{fig:foss2unstI9} illustrates that the objective function is continuous in the vicinity of the constant equilibria solutions for the \fcssnspp{} and \hcssnspp. Therefore the equilibria not satisfying \spp{} are optimal as well.}
	\label{fig:foss2unstI}
\end{figure}

%\begin{figure}
%\centering
	%\subfloat[State paths to the FOSS]{\includegraphics[width=0.33\textwidth]{./fig/SLFDM_C35_FOSS2FOSS_Data}\label{fig:foss2unst1}}\hfill
	%\subfloat[State paths to the FOSS]{\includegraphics[width=0.33\textwidth]{./fig/SLFDM_C35_FOSS2HOSS1_Data}\label{fig:foss2unst2}}\hfill
	%\subfloat[State paths to the FOSS]{\includegraphics[width=0.33\textwidth]{./fig/SLFDM_C35_FOSS2HOSS2_Data}\label{fig:foss2unst3}}
%\caption[]{}
	%\label{fig:foss2unst}
%\end{figure}

\begin{figure}
\centering
	\subfloat[Homotopy \fcssspp{} to \hcssspp{}]{\includegraphics[width=0.33\textwidth]{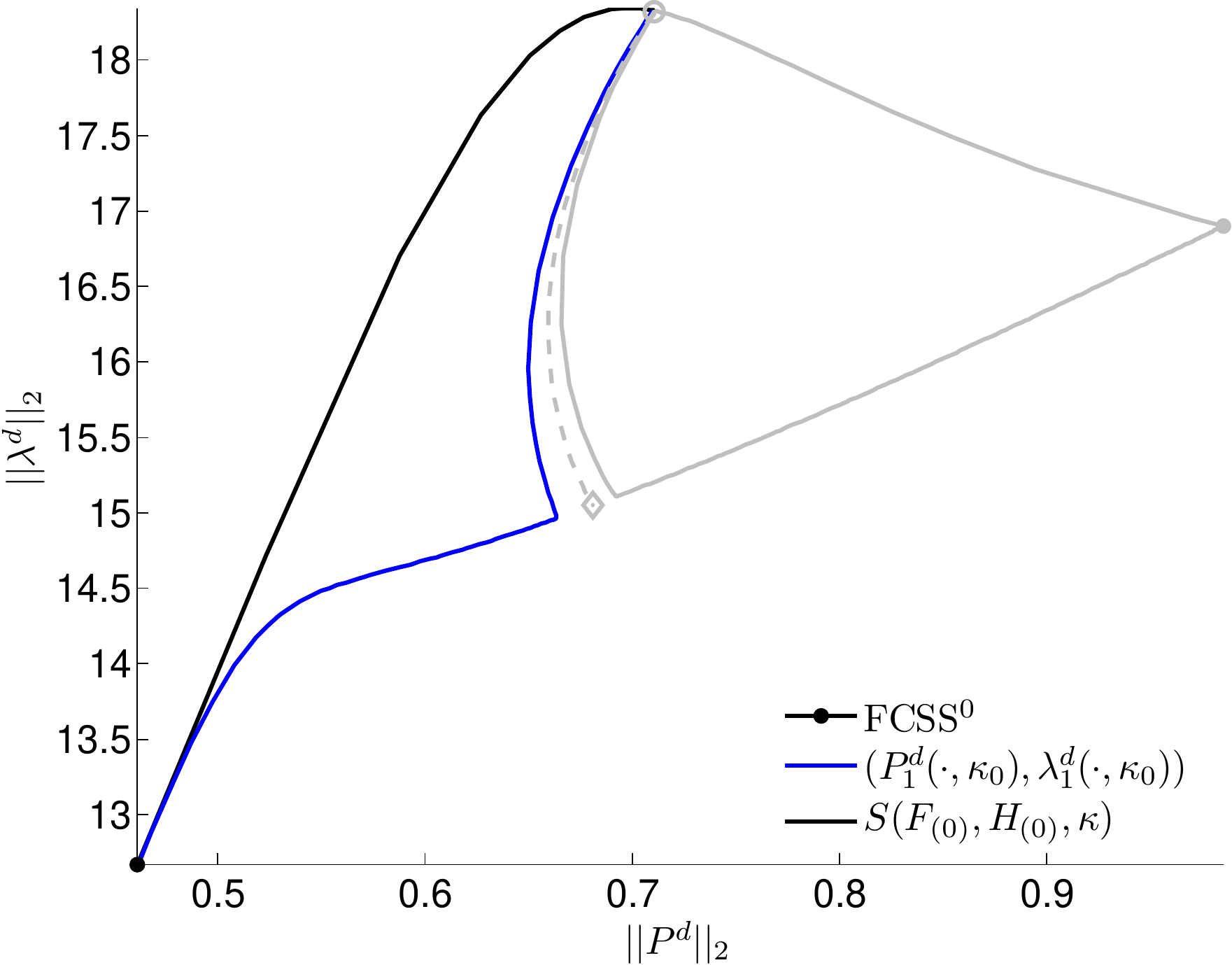}\label{fig:fossnhoss310}}\hfill
	\subfloat[Homotopy \hcssspp{} to \fcssspp{}]{\includegraphics[width=0.33\textwidth]{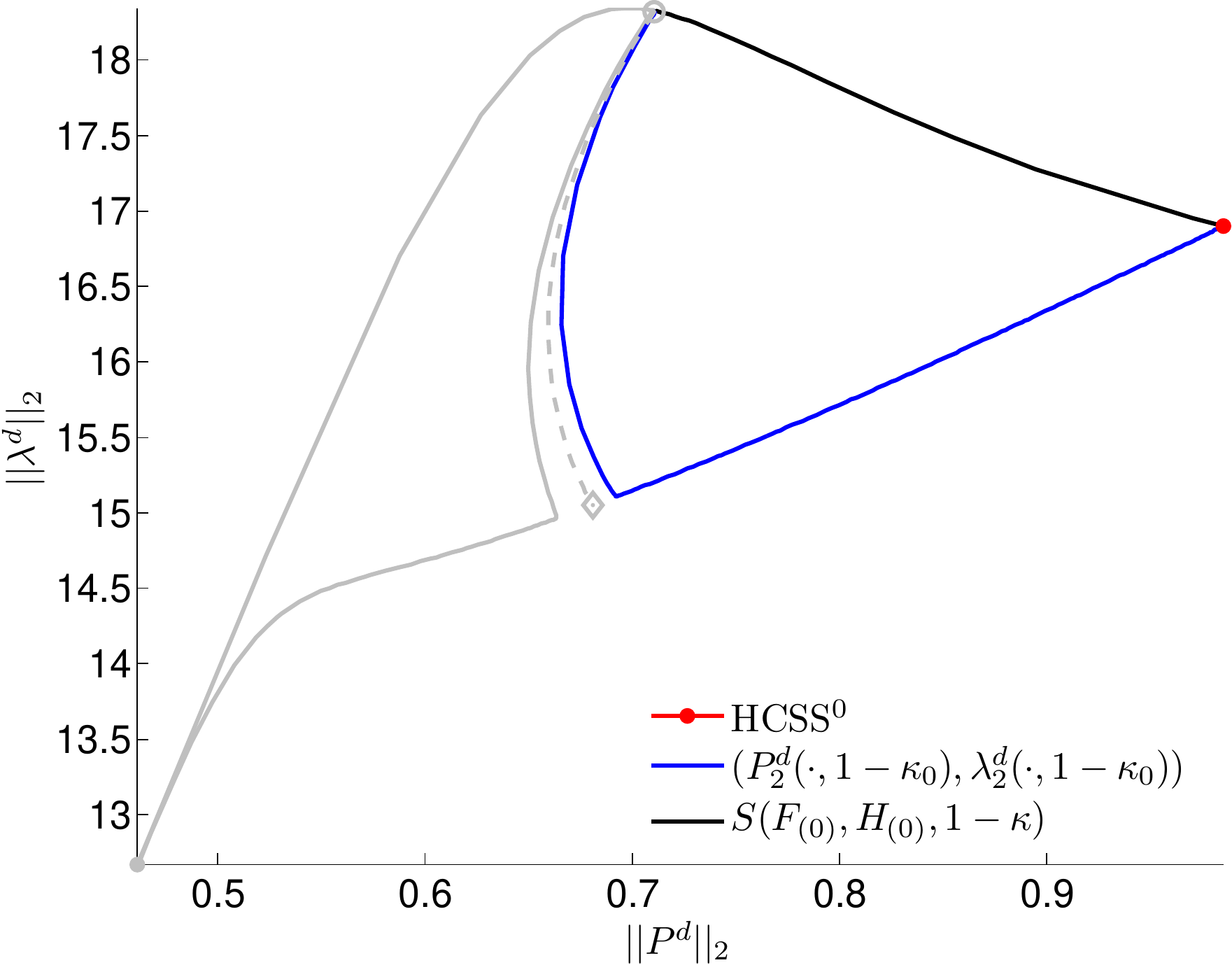}\label{fig:fossnhoss301}}\hfill
	\subfloat[Homotopy \hcssnspp{} to $P_1(0,\kappa_0)$]{\includegraphics[width=0.33\textwidth]{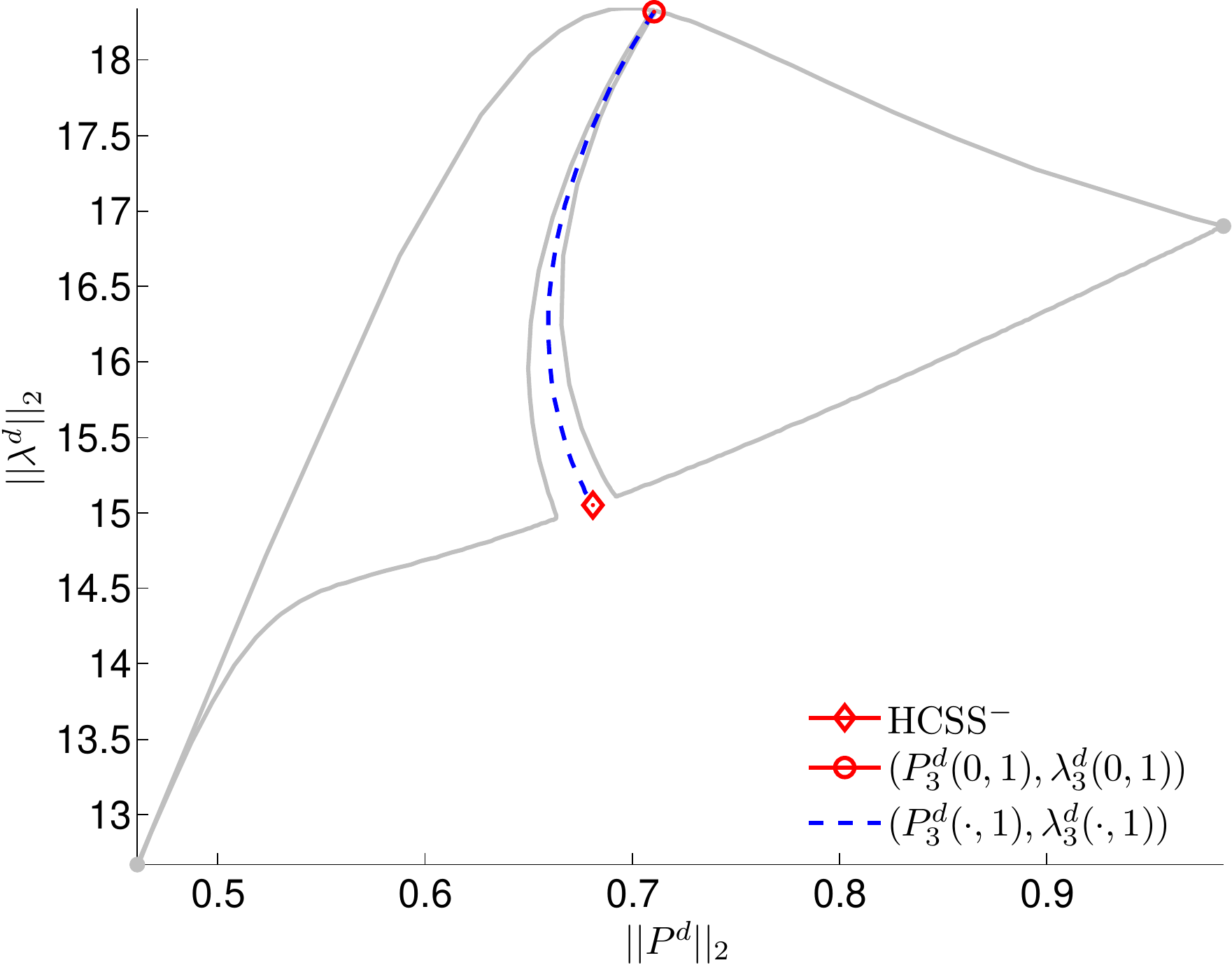}\label{fig:fossnhoss300}}\\
	\href{run:SLFDM_C30825_AnalogonUnstableNodeCont_t_P.mp4}{\subfloat[Slice manifolds and solution paths]{\includegraphics[width=0.33\textwidth]{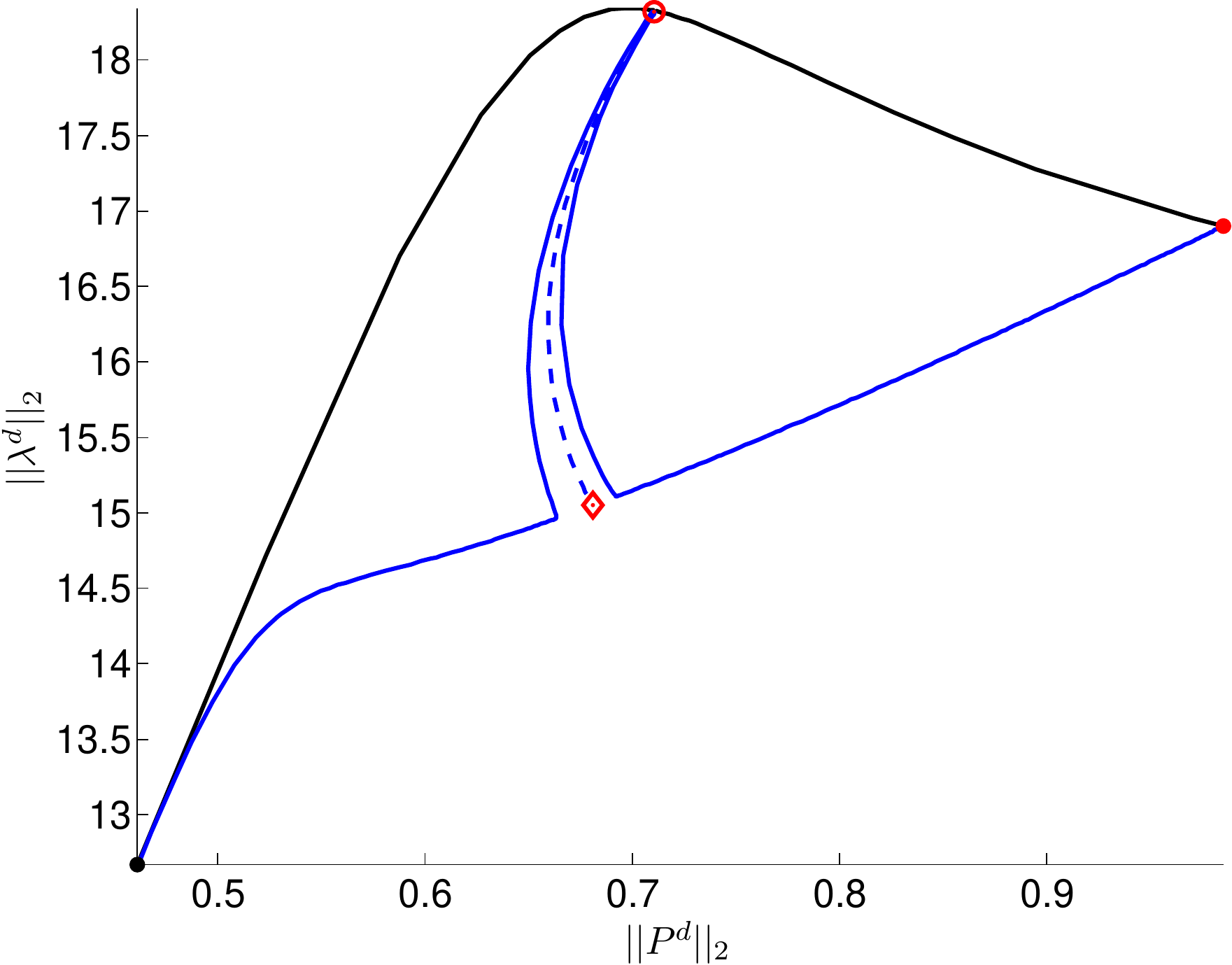}\label{fig:fossnhoss3}}}\hfill
	\subfloat[Objective value]{\includegraphics[width=0.33\textwidth]{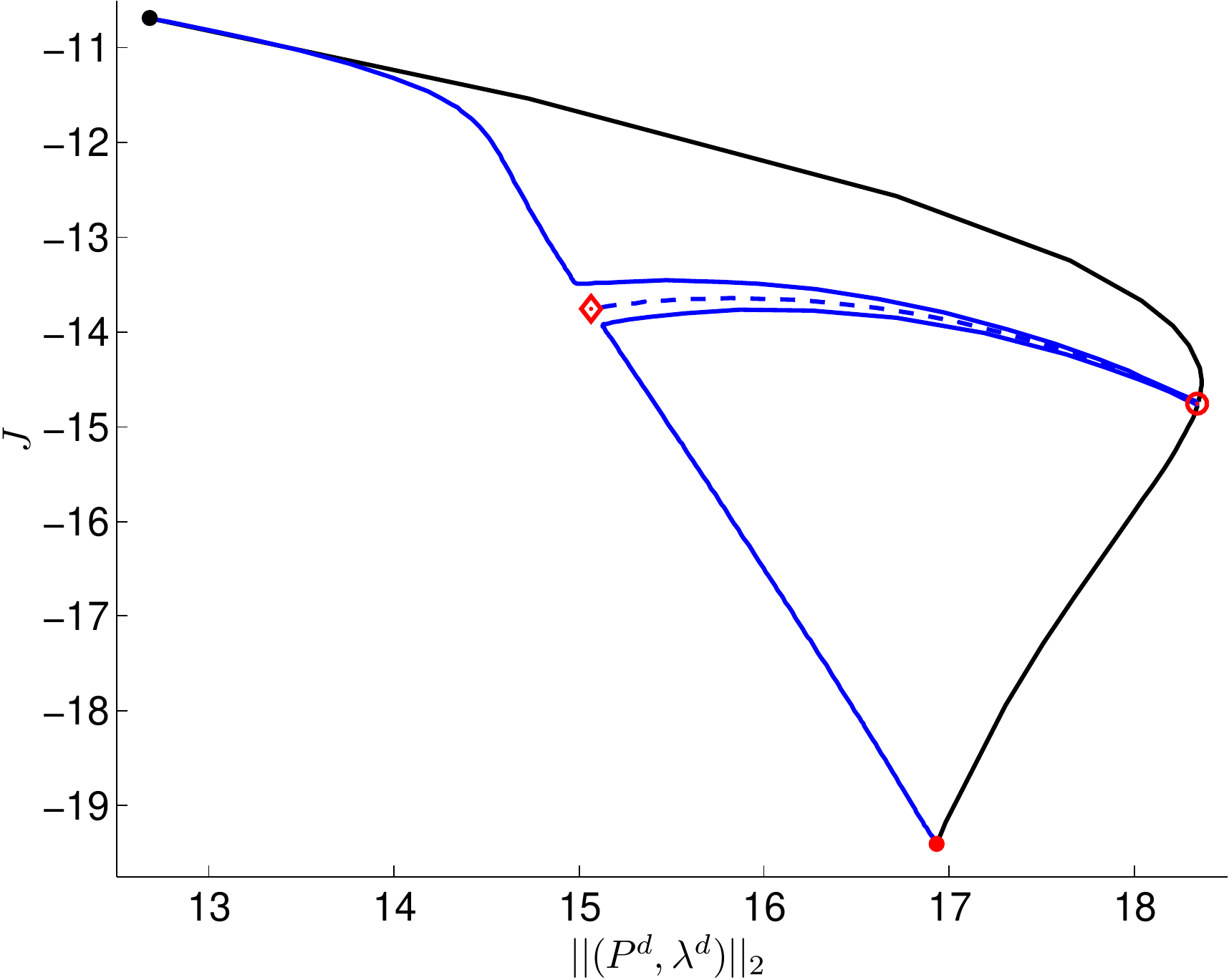}\label{fig:fossnhoss2}}\hfill
	\subfloat[Phase portrait]{\includegraphics[width=0.33\textwidth]{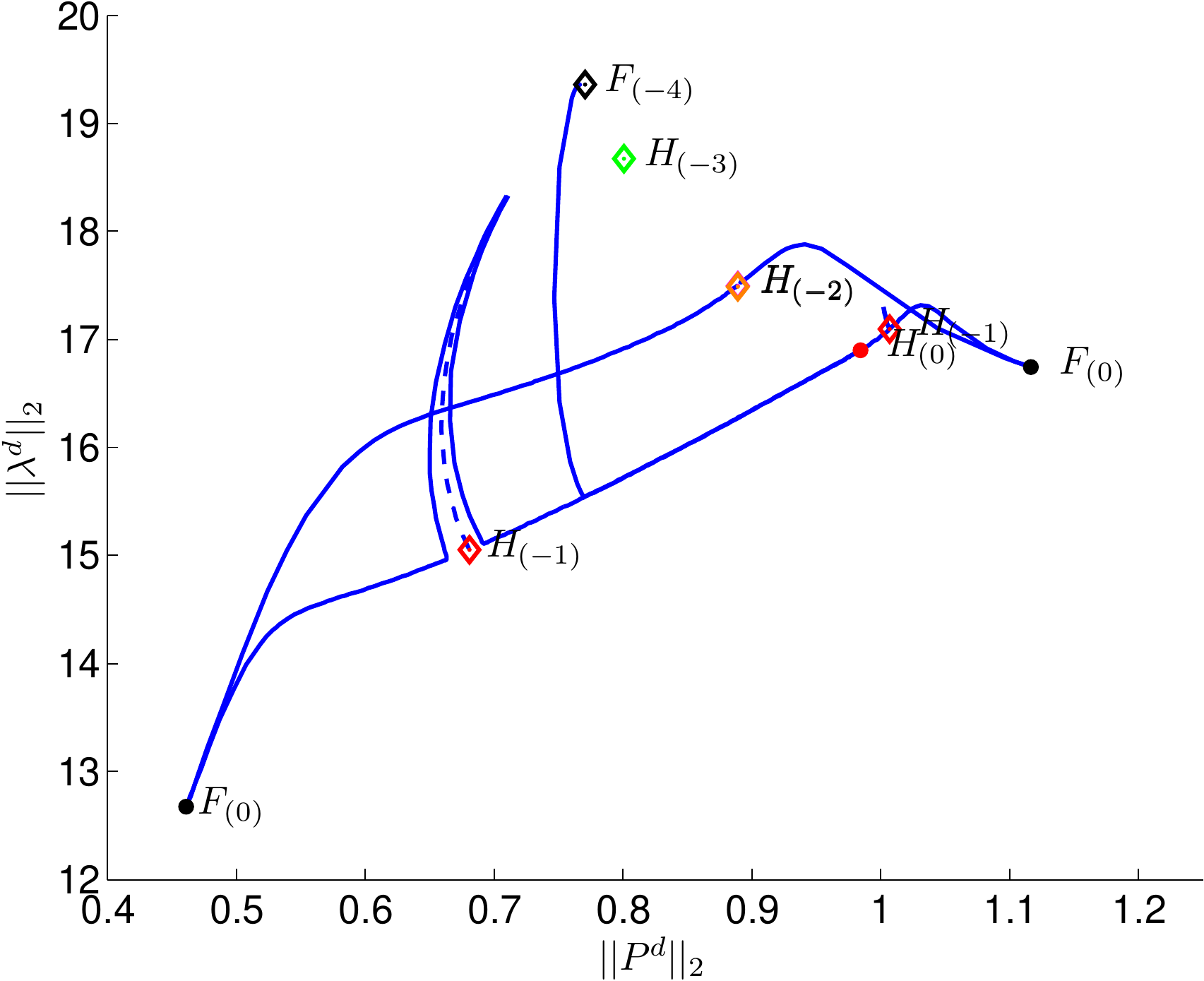}\label{fig:fossnhoss1}}
\caption[]{In \subref{fig:fossnhoss1} some of the solutions paths and equilibria in the state-costate space are depicted. The solutions shown in \subref{fig:fossnhoss3} and \subref{fig:fossnhoss2} are the result of the continuation processes, when we tried to find a solution starting at the states of  \hoss{} and converging to the oligotrophic \foss, and vice versa. The continuation process approached a state lying on the stable manifold (dashed blue) of the \hcssnspp{} with defect $-1$. Thus, for initial states coinciding with the states of the stable manifold with defect $-1$ it is optimal to converge along the stable manifold to the \hoss{} with defect $-1$. In the vicinity of these initial states it is optimal to converge either to the oligotrophic \foss{} or the \hoss{} satisfying \spp{}. To receive the animation file associated to the panel \subref{fig:fossnhoss3} please contact the author.}
	\label{fig:fossnhoss}
\end{figure}

\section{Conclusion}
\label{sec:Conclusion}
The purpose of this article is the presentation of a numerical framework, that allows us to numerically experiment with 1D spatially distributed optimal control problems. Numerical experiment is meant in the sense of Heaviside, who claimed mathematics as an experimental science, cf.~\citet{heaviside1893}.

We were thus able to find numerical evidence for the occurrence of (indifference) threshold distributions in distributed models. Moreover, this points towards a possible generalization of the saddle point property, which in case of multiple canonical steady states (\css) is intimately connected to the existence of multiple optimal solutions. Even though this simple \FDM{} could also be applied to spatially 2D models, such an approach would immediately become numerically intractable. Therefore, it is an intermediate step to the method of a finite element discretization, that is presented in \citet{grassuecker2015} and extended in \citet{uecker2015}.

Different directions for further research result from the presented approach. The obvious next step is the previously mentioned application of \FEM{} discretization. This is realized as an add-on toolbox (\ptopoc) to the \MATL{} package \pdetopath, which is a numerical tool for continuation and bifurcation in 2D elliptic systems, cf.~\citet{ueckeretal2014}.\footnote{It can be downloaded for free from \httppdetopath.}

A main drawback of the actual approach is difficulty to handle (inequality) constraints. For a correct usage of the used \BVP solver the rhs of the dynamics has to at least continuously differentiable. This property is violated, when constraints become active. We solved that problem by considering different arcs of the solution path, where each of the arcs satisfied the differentiability condition. For non-distributed models this is in general a practicable way but hardly to realize for distributed models. Simply because for each transition from one arc to the other the according switching conditions have to be stated. This ansatz quickly becomes intractable for the high dimensional discretized system.

Therefore, we will put effort in the development of a solver based on finite element discretization and conjugate gradient method combined with a continuation step.
\appendix
\section{Numerical method implemented in \OCMAT}
\label{sec:NumericalMethodImplementedInOCMAT}
In this section we formulate the basic method for the calculation of paths that converge to an equilibrium of the canonical system.

\subsection{The core problem}
\label{sec:TheCoreProblem}

The core problem that has to be solved is the following. Given an equilibrium $(\hat x,\hat\lambda)$ and an initial distribution $x_0$ we want to find a path $(x(\cdot),\lambda(\cdot))$ satisfying \cref{eq:cansys0D} together with the boundary conditions
\begin{equation}
\label{eq:bdstabpath}
	x(0)=x_0\quad\text{and}\quad\lim_{t\to\infty}(x(t),\lambda(t))=(\hat x,\hat\lambda).
\end{equation}
The dimension of the according eigenspace
\begin{equation*}
	0<\dim\eigspace{s}(\hat x,\hat\lambda)=n_s\le n.
\end{equation*}
Then, the boundary condition at infinity in \cref{eq:bdstabpath} can be approximated by the so called asymptotic boundary condition \citep[cf.~][]{lentini1978,lentinikeller1980}
\begin{equation}
\label{eq:bdstabpathas}
	\asymbcmat^\transp\left(\begin{pmatrix}
	\hat x\\
	\hat\lambda
\end{pmatrix}-\begin{pmatrix}
	x(T)\\
	\lambda(T)
\end{pmatrix}\right)=\zerovec\in\R^{2n-n_s},\quad\asymbcmat\in\R^{2n\times(2n-n_s)}\quad\text{and}\quad\asymbcmat\bot\eigspace{s}(\hat x,\hat\lambda)
\end{equation}
with $T>0$ large enough.

For a compact notation we introduce
\begin{equation*}
	X\defin\begin{pmatrix}
	x\\
	\lambda
\end{pmatrix}\quad\text{and \cref{eq:dslocmcansys} is written as}\quad\dot X(t)=F(X(t)).
\end{equation*}
Then the previous \BVP{} writes as
\begin{subequations}
\label{eq:contbvp}
\begin{align}
	&\dot X(t)=TF(X(t)),\quad t\in[0,1]\label{eq:contbvp1}\\
	&X^{(i_1,\ldots,i_{n_s})}(0)=x_0^{(i_1,\ldots,i_{n_s})}\label{eq:contbvp2}\\
	&\asymbcmat^\transp(\hat X-X(1))=0.
\intertext{If $\hat X$ satisfies \spp{} then $n_s=n$ and \cref{eq:contbvp2} simplifies to $X^{(1,\ldots,n)}(0)=x_0$. To keep notation simple we assume that the coordinates of $x$ are sorted, such that $(i_1,\ldots,i_{n_s})=(1,\ldots,n_s)\definr(n_s)$. Then \cref{eq:contbvp2} can be rewritten as}
&X^{(n_s)}(0)=x_0^{(n_s)}\tag{\theparentequation b'}\label{eq:contbvp2p}
\end{align}
\end{subequations}
In general \bvpcref{eq:contbvp} cannot be solved analytically, hence numerical methods have to be applied. These numerical methods request some initial function $\tilde X(\cdot)$. Such an initial function need not satisfy the \bvpcref{eq:contbvp} but, depending on the problems properties, it has to a more or less good approximation. What can we do if such an initial function is not at hand?

\subsection{Embedding into a homotopy problem}
\label{sec:EmbeddingIntoAHomotopyProblem}
Given that a solution $Y(\cdot)$ of \cref{eq:contbvp} with $Y^{(n)}(0)=x_1^{(n)}\ne x_0^{(n)}$ is available we can embed \cref{eq:contbvp} into the according homotopy problem
\begin{subequations}
\label{eq:homcontbvp}
\begin{align}
	&\dot X(t)=TF(X(t),\modelpar),\quad t\in[0,1]\label{eq:homcontbvp1}\\
	&X^{(n)}(0)=x_0^{(n)}+(1-\contpar)(x_1^{(n)}-x_0^{(n)})\label{eq:homcontbvp2}\\
	&\asymbcmat^\transp(\hat X-X(1))=0.
\end{align}
Then $Y(\cdot)$ solves \cref{eq:homcontbvp} for $\contpar=0$ and $\contpar=1$ yields a solution of \bvpcref{eq:contbvp}. \OCMAT{} solves the homotopy \bvpcref{eq:homcontbvp} using arclength continuation \citep{kuznetsov1998,allgowergeorg2003}. Thus, at each homotopy step $i>0$ with previous solution $(X_{(i-1)}(\cdot),\contpar_{i-1})$ the \bvpcref{eq:homcontbvp} is solved for $(X_{(i)}(\cdot),\contpar_{i})$ together with the additional equation
\begin{align}
	&\int_0^1 (X_{(i)}(t)-X_{(i-1)}(t))^\transp V_{(i-1)}(t)\!\Dt+(\contpar_i-\contpar_{i-1})V_{(i-1),\contpar}=0\label{eq:homcontbvp3}
	\intertext{where $(V_{(i-1)}(\cdot),V_{(i-1),\contpar})$ satisfies the linearized \BVP}
	&\dot V_{(i-1)}(t)=TF_X(X_{(i-1)}(t),\modelpar)V_{(i-1)}(t),\quad t\in[0,1]\label{eq:homcontbvp4}\\
	&V_{(i-1)}^{(n)}(0)=V_{(i-1),\contpar}(x_1^{(n)}-x_0^{(n)})\\
	&\asymbcmat^\transp V(1)=0.\label{eq:homcontbvp6}
\end{align}
\end{subequations}
The solution $(V_{(i-1)}(\cdot),V_{(i-1),\contpar})$ of \bvpcrefrange{eq:homcontbvp4}{eq:homcontbvp6} is called the \emph{tangent} solution at step $i-1$. In the actual \OCMAT{} implementation the \bvpcrefrange{eq:homcontbvp1}{eq:homcontbvp6} is discretized providing different discretization schemes, relying on the two native \MATL{} \BVP{} solvers \lstinline+bvp4c+, \lstinline+bvp5c+ \citep[cf.][]{kierzenkashampine2001,kierzenkashampine2008}, and the adapted solver \lstinline+bvp6c+ \citep[cf.][]{hale2006,halemoore2008}. The discretized tangent is computed at each Newton step. This is a specific arclength continuation, called Moore-Penrose continuation, cf.~\citet{kuznetsov1998}.

Subsequently we introduce some terminology to make a clear distinction between the stable paths and the set of initial points computed during a continuation process.
\begin{definition}[Slice Manifold]
\label{def:slicemanifold}
Let $X(\cdot,\contpar(s)),\ s\in I\subset\R$ with $I$ a non-empty interval and $\contpar(\cdot)\in\Cset[0](I,\R)$ be a solution of \cref{eq:homcontbvp} for every $s\in I$. Then 
\begin{equation}
\label{eq:slicemanifold}
	\smf(\hat X,x_0^{(n_s)},x_1^{(n_s)},\contpar(\cdot))\defin\{X(0,\contpar(s)):\ s\in I\}
\end{equation}
is called the \emph{slice manifold} along $x_0^{(n_s)},x_1^{(n_s)}$ for $\hat X$ and $\contpar(\cdot)$.
\end{definition}
\begin{definition}[Comparable Slice Manifolds]
\label{def:consslicemanifold}
Let $\hat X$ satisfy \spp{} and $\smf(\hat X_j,x_0^j,x_1^j,\contpar_j(s_j)),\ s_j\in I_j,\ j=1,2$ be two slice manifolds along $x_0^j,x_1^j$ for $\hat X_j$ and $\contpar_j(\cdot)$ with $x_0^j\neq x_1^j,\ j=1,2$. Then $\smf(\hat X_j,x_0^j,x_1^j,\contpar_j(s_j)),\ j=1,2$ are called \emph{comparable} $\iff$
\begin{equation}
\label{eq:intslicemanifold}
	\{x_0^1+(1-\alpha_1)(x_1^1-x_0^1):\ \alpha_1\in\R\}=\{x_0^2+(1-\alpha_2)(x_1^2-x_0^2):\ \alpha_2\in\R\}
\end{equation}
holds. 

If comparable slice manifolds satisfy
\begin{equation}
\label{eq:consslicemanifold}
	\{x_0^1+(1-\contpar_1(s))(x_1^1-x_0^1):\ s\in I_1\}\cap\{x_0^2+(1-\contpar_2(s))(x_1^2-x_0^2):\ s\in I_2\}\neq\emptyset
\end{equation}
it is said that the slice manifolds are \emph{intersecting}.
\end{definition}
\begin{remark}
A slice manifold is a linear cut through the stable manifold. At the intersection of two different slice manifolds the cuts are given for the same (initial) states $x_0$. Hence the according paths are (different) solution candidates for the optimal control problem with $x(0)=x_0$. For one state autonomous optimal control problems with the stable path $(x(\cdot),\lambda(\cdot))$ converging to $(\hat x,\hat\lambda)$
\begin{equation*}
	\smf(x_0,x(T),\hat x,\id_{[0,1]})=\{(x(t),\lambda(t)):\ t\in[0,T]\}.
\end{equation*}
Thus, the orbit of the stable path coincides with the stable manifold. Moreover two slice manifolds for different saddles are trivially comparable.
\end{remark}
There are good reasons to consider \bvpcref{eq:homcontbvp} instead of \bvpcref{eq:contbvp}. For an arbitrary initial point $x_0$ it is often hard to provide a ``good'' guess of an initial function for \bvpcref{eq:contbvp}. Since in general the solution of an \BVP{} is not unique it may not be guaranteed that a computed solution is the searched for solution. 

On the other hand the equilibrium solution trivially satisfies \bvpcref{eq:contbvp} for the initial point $\hat x$. Hence the homotopy \bvpcref{eq:homcontbvp} can be started with an exact solution. Then the existence of a unique solution is guaranteed by the \implicth{} as long as some rank condition is satisfied. A careful inspection of the linearization of \bvpcref{eq:homcontbvp}, which is a byproduct of the arclength continuation, then yields important information about the behavior of the solution paths. Moreover, these intermediate solution paths can be used to find, e.g. an indifference threshold point. 

\subsubsection{The stable manifold for an equilibrium not satisfying \spp}
\label{sec:TheStableManifoldForAnEquilibriumNotSatisfyingSpp}
In the previous section we assumed that $\hat X$ satisfies \spp, i.e., $\dim\eigspace{s}(\hat J)=n$. Next we adapt \bvpcref{eq:homcontbvp} for the case $\dim\eigspace{s}(\hat J)=n_s<n$. Therefore, we assume that $\hat X$ is hyperbolic with $\dim\eigspace{u}(\hat J)=n_u$ and hence $2n=n_s+n_u$. Furthermore, two points $x_i^{(n)}$ in the state space are fixed and $n-n_s$ vectors $v_i\in\R^n$ are chosen. Then, the according \BVP{} for the calculation of stable paths converging to $\hat X$ becomes
\begin{subequations}
\label{eq:homcontbvpnspp}
\begin{align}
	&\dot X(t)=TF(X(t),\modelpar),\quad t\in[0,1]\\
	&X^{(n)}(0)=x_1^{(n)}+(1-\contpar_0)(x_0^{(n)}-x_1^{(n)})+\sum_{i=1}^{n-n_s}\contpar_iv_i\label{eq:homcontbvpnspp2}\\
	&\asymbcmat^\transp(\hat X-X(1))=0\in\R^{n_u}
	\shortintertext{with}
	&\rank\begin{pmatrix}
	(x_0^{(n)}-x_1^{(n)}) & v_1 & \cdots & v_{n-n_s}
	\end{pmatrix}=n-n_s+1\label{eq:homcontbvpnspp4}\\
	&\asymbcmat\bot\eigspace{s}(\hat J)=0\quad\text{and}\quad\asymbcmat\in\R^{2n\times n_u}.\notag
\end{align}
\end{subequations}
Let us assume that a solution $Y(\cdot)$ for $\contpar_j=0,\ j=0,\ldots,n-n_s$ is known. Then \cref{eq:homcontbvpnspp2} can be interpreted as the condition that we search for a solution along the direction $x_0^{(n)}-x_1^{(n)}$. Additionally we have to take care  letting enough freedom, guaranteed by \ref{eq:homcontbvpnspp4}, to start at the stable manifold.
\subsubsection{Continuation of an indifference threshold point}
\label{sec:ContinuationOfAnIndifferenceThresholdPoint}
A useful relation between the Hamiltonian and objective value is given for a \modelcref{eq:gen_0Dmodel} with $\rho>0$ and finite objective value \citep[cf.][]{michel1982}. Then we find for any solution $x(\cdot),\lambda(\cdot)$ of the canonical system \cref{eq:cansys0D}
\begin{equation}
\label{eq:hamobj}
	J(x_0)=\frac{1}{\rho}\Ha(x(0),\lambda(0))
\end{equation}
where $\Ha$ is defined according to \cref{eq:cansys0D5} and the bar is omitted.

Let $\hat Y_i,\ i=1,2$ be two \css{} of \cref{eq:cansys0D} and $x^1_I$ and $x^2_I$ be two distinct indifference threshold points of \modelcref{eq:gen_0Dmodel}. Furthermore, let $Z_{1,2}(\cdot)$ be two solutions corresponding to $x^1_I$ and $Z_{3,4}(\cdot)$ two solutions corresponding to $x^2_I$. To continue the indifference threshold point from $x^1_I$ to $x^2_I$ we solve the following homotopy problem
\begin{subequations}
\label{eq:homconitptbvp}
\begin{align}
	&\dot X_1(t)=T_1F(X_1(t)),\quad t\in[0,1]\label{eq:homconitptbvp1}\\
	&\dot X_2(t)=T_2F(X_2(t)),\quad t\in[0,1]\label{eq:homconitptbvp2}\\
	&X_1^{(n)}(0)=X_2^{(n)}(0)\in\R^n\label{eq:homconitptbvp3}\\
	&\Ha(X_1(0))-\Ha(X_2(0))=0\in\R\label{eq:homconitptbvp4}\\
	&\asymbcmat_1^\transp(\hat Y_1-X_1(1))=0\in\R^n\label{eq:homconitptbvp5}\\
	&\asymbcmat_2^\transp(\hat Y_2-X_2(1))=0\in\R^n\label{eq:homconitptbvp6}\\
	&X_1^{(n)}(0)=x^2_I+(1-\contpar_1)(x^1_I-x^2_I)+\contpar_2V\in\R^n\label{eq:homconitptbvp7}
	\shortintertext{with}
	&a_1V+a_2(x^1_I-x^2_I)=0\quad\text{and}\quad \norm{a_1}+\norm{a_2}\ne0\notag\\
	&\asymbcmat_i\bot\eigspace{s}(\hat J_i),\ i=1,2.\notag
\end{align}
\end{subequations}
\Cref{eq:homconitptbvp1,eq:homconitptbvp2} denote the dynamics for the two distinct paths, starting at the same initial states, \cref{eq:homconitptbvp3}. The truncation times $T_1>0$ and $T_2>0$ may be chosen differently. The two paths $X_1(\cdot)$ and $X_2(\cdot)$ yield the same objective value which according to \cref{eq:hamobj} can be stated as \cref{eq:homconitptbvp4}. \Cref{eq:homconitptbvp5,eq:homconitptbvp6} denote the asymptotic boundary conditions for the path $X_1(\cdot)$ converging to $\hat Y_1$ and $X_2(\cdot)$ converging to $\hat Y_2$, respectively. Finally \cref{eq:homconitptbvp7} specifies the continuation in the state space, where the first part $x^2_I+(1-\contpar_1)(x^1_I-x^2_I)$ describes the change into the direction to the target $x^1_I$ and $\contpar_2V$ is a correction term, since the stable manifold has one dimension less than the state space.
 
Counting the number of unknowns and equations, we find $4n+2$ unknowns, two times the states and costates $X_i(\cdot)$ and the two free parameters $\contpar_i,\ i=1,2$ and $4n+1$ equations. Moreover \cref{eq:homconitptbvp} is solved for $(Z_{1,2}(\cdot),0,0)$ and $(Z_{3,4}(\cdot),1,0)$. Thus we can start a continuation process starting with one of these previously detected solutions.

\section{The usage of \OCMAT}
\label{sec:TheUsageOfOCMAT}
In this section the basic steps for the numerical analysis of model \labelcref{eq:dslocm} are explained in detail. This enables the user to reproduce the presented results and learn the basic commands and structure of \OCMAT.

\subsection{The initialization file}
\label{sec:TheInitializationFile}
To get results that are comparable smooth in the spatial dimension as the discretization used in \citet{grassuecker2015} we choose $N=51$. In the syntax logic of \OCMAT\ we have to provide an initialization file consisting of $N+1$ \ODE s, the entry of $N+1$ state and control variables $P_i, u_i, i=0,\ldots, N$ and the appropriate objective function. Doing that by hand can become a boring and error-prone task. We therefore wrote a \MATL\ file \lstinline+'makeinitfile'+ that generates the initialization file, providing $N$\footnote{If this \FDM\ approach turns out to be an important tool by itself, this step could be directly implemented within \OCMAT.}
\begin{matlab}
function makeinitfile(N,modelname)

if nargin==1
    modelname='shallowlakeline';
end
dotPxi='uxi-b*Pxi+Pxi^2/(1+Pxi^2)+D*(N/2/L)^2*(Pxim1-2*Pxi+Pxip1)';
intvar='';
for ii=0:N
    ode{ii+1}=['DPx' num2str(ii) '=' strrep(strrep(strrep(strrep(strrep(dotPxi,'i',num2str(ii)),['x' num2str(ii) 'm1'],['x' num2str(ii-1)]),['x' num2str(ii) 'p1'],['x' num2str(ii+1)]),'Px-1','Px1'),['Px' num2str(N+1)],['Px' num2str(N-1)])];
    ode{ii+1}=char(simple(sym(ode{ii+1})));
    if ii>0
        controlvar=[controlvar ',ux' num2str(ii)];
        statevar=[statevar ',Px' num2str(ii)];
        if ii<N
            intvar=[intvar '+log(ux' num2str(ii) ')-c*Px' num2str(ii) '^2' ];
        else
            intvar=[intvar '+(log(ux0)-c*Px0^2+log(ux' num2str(ii) ')-c*Px' num2str(ii) '^2)/2' ];
        end
    else
        controlvar=['ux' num2str(ii)];
        statevar=['Px' num2str(ii)];
    end
end
intvar(1)=[];
intvar=['(' char(collect(simple(sym(intvar)),'c')) ')'];

par={'rho::0.03','b::0.65','c::0.5','D::0.5','L::2*pi/0.44',['N::' num2str(N)]};
initfilefid=fopen([modelname '.ocm'],'w');
fprintf(initfilefid,'Type\nstandardmodel\n\nVariable\nstate::%s\ncontrol::%s\n\nStatedynamics\n',statevar,controlvar);
for ii=1:length(ode)
    fprintf(initfilefid,'ode::%s\n',ode{ii});
end
fprintf(initfilefid,'\nObjective\nexpdisc::rho\nint::%s\n\nParameter\n',intvar);
for ii=1:length(par)
    fprintf(initfilefid,'%s\n',par{ii});
end
fclose(initfilefid);
\end{matlab}
As an example we call
\begin{matlab}
>> makeinitfile(5,'shallowlakelinetest')
\end{matlab}
at the \MATL\ workspace and the file \lstinline+'shallowlakelinetest.ocm'+ is generated
\begin{matlab}
Type
standardmodel

Variable
state::Px0,Px1,Px2,Px3,Px4,Px5
control::ux0,ux1,ux2,ux3,ux4,ux5

Statedynamics
ode::DPx0=ux0-b*Px0+Px0^2/(1+Px0^2)+D*(N/2/L)^2*(Px1-2*Px0+Px1)
ode::DPx1=ux1-b*Px1+Px1^2/(1+Px1^2)+D*(N/2/L)^2*(Px0-2*Px1+Px2)
ode::DPx2=ux2-b*Px2+Px2^2/(1+Px2^2)+D*(N/2/L)^2*(Px1-2*Px2+Px3)
ode::DPx3=ux3-b*Px3+Px3^2/(1+Px3^2)+D*(N/2/L)^2*(Px2-2*Px3+Px4)
ode::DPx4=ux4-b*Px4+Px4^2/(1+Px4^2)+D*(N/2/L)^2*(Px3-2*Px4+Px5)
ode::DPx5=ux5-b*Px5+Px5^2/(1+Px5^2)+D*(N/2/L)^2*(Px4-2*Px5+Px4)

Objective
expdisc::rho
int::((-Px1^2-Px2^2-Px3^2-Px4^2-1/2*Px0^2-1/2*Px5^2)*c+log(ux1)+log(ux2)+log(ux3)+log(ux4)+1/2*log(ux0)+1/2*log(ux5))

Parameter
rho::0.03
b::0.65
c::0.5
D::0.5
L::2*pi/0.44
N::5
\end{matlab}
This file is placed at the actual \MATL\ directory, to make it visible for \OCMAT\ it has to be moved to the folder \lstinline+ocmat\model\initfiles+. After that the initialization process of \OCMAT\ can be started
\begin{matlab}
>> ocStruct=processinitfile('shallowlakelinetest');
ocmat\model\usermodel\shallowlakelinetest does not exist. Create it?  (y)/n: y
ocmat\model\usermodel\shallowlakelinetest\data does not exist. Create it?  (y)/n: y
ocmat\model\usermodel\shallowlakelinetest\data is not on MATLAB path. Add it?  (y)/n: y
ocmat\model\usermodel\shallowlakelinetest is not on MATLAB path. Add it?  (y)/n: y
>> modelfiles=makefile4ocmat(ocStruct);
>> moveocmatfiles(ocStruct,modelfiles)
\end{matlab}
For the following analysis we use the initialization file \lstinline+shallowlakelinecoarse+ with $N=51$. Thus, the previous initialization commands have to be repeated with \lstinline+shallowlakelinecoarse+ instead of \lstinline+shallowlakelinetest+. Note that even though the discretization parameter $N$ appears as a parameter of the model \lstinline+shallowlakelinecoarse+ it must not be changed. The initialization steps can be very time consuming, depending on the chosen value of $N$ and the computer capacity. For $N=51$ the initialization, on a PC with the specifications Intel Core i7-3820 CPU@3.60GHz, Windows 7, \MATL\ 7.6.0, took around half an hour.

Subsequently we will also make use of the 0D shallow lake model. Therefore, we have to run through the initialization process for the model \lstinline+shallowlake+\footnote{The according initialization file \lstinline+shallowlake.ocm+ can be found in the folder \lstinline+ocmat/mode/initfiles+.} as well.
\begin{matlab}
>> ocStruct=processinitfile('shallowlake');
ocmat\model\usermodel\shallowlake does not exist. Create it?  (y)/n: y
ocmat\model\usermodel\shallowlake\data does not exist. Create it?  (y)/n: y
ocmat\model\usermodel\shallowlake\data is not on MATLAB path. Add it?  (y)/n: y
ocmat\model\usermodel\shallowlake is not on MATLAB path. Add it?  (y)/n: y
>> modelfiles=makefile4ocmat(ocStruct);
>> moveocmatfiles(ocStruct,modelfiles)
\end{matlab}

\subsection{Detection and continuation of equilibria}
\label{sec:cont_equi}
For the calculation of the \fcss{} we use the equilibria of the \lstinline+shallowlake+ model. A detailed explanation of the commands can be found in the \OCMAT{} manual.
\begin{matlab}
>> m0=stdocmodel('shallowlake');
>> m0=changeparametervalue(m0,'b,c',[0.65 0.5]);
>> ocEP0=calcep(m0);b=isadmissible(ocEP0,m0,[],'UserAdmissible');ocEP0(~b)=[];
\end{matlab}
The values of the equilibria are used to generate the values of \fcss.
\begin{matlab}
>> m=stdocmodel('shallowlakelinecoarse');
>> N=parametervalue(m,'N');
>> y0=[ocEP0{1}.y([1 end]) ocEP0{2}.y([1 end]) ocEP0{3}.y([1 end])];
>> Y=[y0([ones(N+1,1);2*ones(N+1,1)],1) y0([ones(N+1,1);2*ones(N+1,1)],2) y0([ones(N+1,1);2*ones(N+1,1)],3)];
>> Y([N+2 2*N+2],:)=Y([N+2 2*N+2],:)/2;
>> opt=setocoptions('EQ','TolFun',1e-12,'MaxFunEvals',50000,'MaxIter',50000);
>> ocEP=calcep(m,Y,[],opt);b=isadmissible(ocEP,m);ocEP(~b)=[];
>> [b dfct]=isspp(m,ocEP{:})
b =
     1     0     1
dfct =
     0    -5     0
\end{matlab}
Thus, the first and third equilibrium satisfy \spp, the second equilibrium has defect $-5$. For the subsequent step an adapted version of \CLMATCONT{} has been used.
\begin{matlab}
>> opt0=setocoptions('MATCONT','MaxNumPoints',750,'MaxStepsize',1e-1,'Backward',0,'IgnoreSingularity',2,'OCCONTARG','CheckAdmissibility','off');
>> opt1=setocoptions(opt0,'MATCONT','Backward',1);
>> contpar='b';epidx=1;[x0 v0 s0 f0 h0]=contep(m,ocEP{epidx},contpar,opt0);
first point found
tangent vector to first point found
elapsed time  = 107.2 secs
npoints curve = 750
>> store(m,'modelequilibrium')
\end{matlab}
During the continuation of the first equilibrium four branching points are detected. These branching points are the initial solutions for heterogeneous equilibria. Repeating the previous steps for the third equilibrium \lstinline+epidx=3+ in both continuation directions yields the eutrophic arc. On this arc no branching point exists. With \lstinline+store(m,'modelequilibrium')+ the results of the continuation process are stored into the results of model \lstinline+m+.
\begin{matlab}
>> contpar='b';epidx=3;[x0 v0 s0 f0 h0]=contep(m,ocEP{epidx},contpar,opt0);
first point found
tangent vector to first point found
elapsed time  = 72.9 secs
npoints curve = 750
>> store(m,'modelequilibrium')
>> contpar='b';epidx=3;[x1 v1 s1 f1 h1]=contep(m,ocEP{epidx},contpar,opt1);
first point found
tangent vector to first point found
elapsed time  = 76.2 secs
npoints curve = 750
>> store(m,'modelequilibrium')
\end{matlab}
The calculation of the arcs emanating from the branching points are exemplified for the second branching point.
\begin{matlab}
>> n=3;m1=changeparametervalue(m,'b',[x0(end,s0(n).index)]);
>> ocEP1=calcep(m1,x0(1:end-1,s0(n).index),0,opt);
>> opt0=setocoptions(opt0,'MATCONT','MaxNumPoints',1500,'MaxStepsize',2e-1,'CheckClosed',50);
>> epidx=1;[xbp0 vbp0 sbp0 fbp0]=contbp(m1,ocEP1{epidx},s0(n),0.01,opt0);
first point found
tangent vector to first point found
label = BP, x = ( 0.619422 0.619661 0.620389 ... -5.917192 -5.917341 -2.958687 0.721718 )
label = BP, x = ( 0.554176 0.554239 0.554455 ... -6.412585 -6.407783 -3.203096 0.706011 )
label = BP, x = ( 0.398031 0.398863 0.401801 ... -9.309244 -9.141784 -4.542734 0.626785 )

elapsed time  = 249.5 secs
npoints curve = 1500
>> store(m,'modelequilibrium')
\end{matlab}
During the continuation starting from the second branching point, i.e. the bifurcation point stored in the structure \lstinline+s0+ at index \lstinline+n=3+, further branching points are detected. The branches emanating from these \hcss{} can be computed in an analogous way. Repeating the previous steps for \lstinline+n=2,4,5+ finally yields all branches of \fcss{} and \hcss, see \cref{fig:sld_bifdiag1}.

Using the solutions of the previous calculations can now be used to find the all \fcss{} and \hcss{} equilibria for a specific value of $b$, e.g. $b=0.65$
\begin{matlab}
>> matRes=matcontresult(m);counter=0;b=0.65;
>> for ii=1:length(matRes), ...
	x0=[matRes{ii}.ContinuationSolution.y;matRes{ii}.ContinuationSolution.userinfo.varyparametervalue]; ...
	idx0=cont2idx(x0(end,:),b); ...
	for jj=1:length(idx0); ...
	counter=counter+1;ocEP=calcep(m,x0(1:end-1,idx0(jj)),0,opt); ...
	store(m,ocEP);end,end
\end{matlab}
For $b=0.65$ there exist $17$ equilibria.\footnote{With \lstinline+load(m,'\%1.2f',[],'b,c')+ the model where these equilibria and other numerical results are already stored can be loaded into the \MATL{} workspace.}

We will give a two examples for the calculation of a stable path to an equilibrium that satisfies \spp{} and that does not satisfy \spp.
\subsection{Saddle path calculation}
\label{sec:saddle_path}
One of the main tasks of \OCMAT{} is the calculation of a stable path converging to an equilibrium of saddle-type. The computation is done by solving the homotopy problem \cref{eq:homcontbvp} or \cref{eq:homcontbvpnspp}. We start with a problem of the first type.

\subsubsection{Stable path when \spp{} is satisfied}
\label{sec:StablePathWhenSppIsSatisfied}
As an example we compute, for the parameter values $b=0.65$ and $c=0.5$, the stable path that starts at the states of a defective \hcss{} (\lstinline+ocEP{9}+) and converge to the eutrophic \fcss{} (\lstinline+ocEP{17}+).\footnote{The order refers to the results stored in the file \lstinline+shallowlakelinecoarse_b_0.65_c_0.50.mat+.} First we load the model with the stored data of the equilibria.
\begin{matlab}
>> m=stdocmodel('shallowlakelinecoarse');m=changeparametervalue(m,'b,c',[0.65 0.5]);
>> load(m,'%1.2f',1,'b,c');ocEP=equilibrium(m);
\end{matlab}
Next we determine the flat equilibria satisfying \spp{} and their size of the states. 
\begin{matlab}
>> idx=find(isflat(m,ocEP{:}) & isspp(m,ocEP{:}))
idx =
     1    17
>> P=state(m,ocEP{1});P(1)
ans =
    0.4530
>> P=state(m,ocEP{17});P(1)
ans =
    1.4370
\end{matlab}
To start the continuation process we change some of the default options and call the initialization function \lstinline+initocmat_AE_EP+. The truncation time \lstinline+T+ is determined by
\begin{equation*}
	T=\frac{T_0}{\min_{\eigval\in\eigspace{s}}\norm{\Re\eigval}},
\end{equation*}
where $T_0$ usually is set to $10$.
\begin{matlab}
>> opt=setocoptions('OCCONTARG','MaxStepWidth',1,'InitStepWidth',5e-1,'CheckAdmissibility','off','SBVPOC','MeshAdaptAbsTol',1e-4,'MeshAdaptRelTol',1e-3,'GENERAL','TrivialArcMeshNum',20);
>> eval=real(eig(ocEP{17}));eval(eval>0)=[];T=10/min(abs(eval));
>> sol=initocmat_AE_EP(m,ocEP{17},1:N+1,ocEP{9}.y(1:N+1),opt,'TruncationTime',T);
\end{matlab}
After the initialization process the continuation process is started calling \lstinline+bvpcont+.
\begin{matlab}
>> c=bvpcont('extremal2ep',sol,[],opt);
first solution found
tangent vector to first solution found

 Continuation step No.: 1
 stepwidth: 0.5
 Newton Iterations: 1
 Mesh size: 21
 Continuation parameter: 0.0952864

 Continuation step No.: 15
 stepwidth: 1
 Newton Iterations: 1
 Mesh size: 27
 Continuation parameter: 1.01767

 Target value hit.
 label=HTV
 Continuation parameter=1

elapsed time  = 38.4 secs
>> store(m,'extremal2ep');
>> save(m)
\end{matlab}
Finally the result of the continuation is stored in the model-object \lstinline+m+, for further use. The \lstinline+save+ command allows to store the entire object, i.e. with the stored results, as a \MATL\ data file.

\subsubsection{Stable path when \spp{} is not satisfied}
\label{sec:StablePathWhenSppIsNotSatisfied}
The next example was used for the computation of the second and third homotopy process in \cref{sec:HOSSSatisfyingSPP}, cf.~\cref{fig:fossnhoss300}. The parameter values are $b=0.55$ and $c=3.0825$. Thus, in a first step we try to find a solution that starts at the states of the flat oligotrophic equilibrium (\lstinline+ocEP{1}+) and converge to the heterogeneous equilibrium satisfying \spp{} (\lstinline+ocEP{7}+). For that reason we repeat the steps of the previous \cref{sec:StablePathWhenSppIsSatisfied}.
\begin{matlab}
>> opt=setocoptions(opt,'OCCONTARG','MaxStepWidth',1e1,'InitStepWidth',5e-1,'MaxContinuationSteps',30,'SBVPOC','MeshAdaptAbsTol',1e-3,'MeshAdaptRelTol',1e-2);
>> m=changeparametervalue(m,'b,c',[0.55 3.0825]);
>> load(m,'%1.4f',[],'b,c')
The results in the actual model 'm' are overwritten. Proceed? y/(n) : y
>> ocEP=equilibrium(m);
>> eval=real(eig(ocEP{7}));eval(eval>0)=[];T=10/min(abs(eval))
>> sol=initocmat_AE_EP(m,ocEP{7},1:N+1,ocEP{1}.y(1:N+1),opt,'TruncationTime',T);
>> c=bvpcont('extremal2ep',sol,[],opt);
>> store(m,'extremal2ep');ocEx=extremalsolution(m);n=length(ocEx);
\end{matlab}
During the continuation process we encounter that the detected solution does not end ``near'' the equilibrium, i.e. the used truncation time becomes too short. The reason becomes obvious when we have a look on \cref{fig:fossnhoss3}. The computed solution path approaches a stable path of the defective \hcssnspp. Therefore, the time it takes to stay in the vicinity of this equilibrium increases. To overcome this problem we extend the homotopy problem \cref{eq:homcontbvp} by letting the truncation time $T$ be a free parameter value and adding a further constraint, that guarantees that the solution $X(\cdot)$ not only ends at the the (linearized) stable manifold, but also satisfies
\begin{equation}
\label{eq:nearequilib}
	\Norm{X(1)-\hat X}=\varepsilon,
\end{equation}
with some fixed $\varepsilon>0$. To start this extended continuation process we use the solution after $30$ continuation steps\footnote{We therefore set the option \lstinline+'MaxContinuationSteps',30+.} and use the initialization argument \lstinline+'movinghorizon'+
\begin{matlab}
>> sol=initocmat_AE_AE(m,ocEx{n},[1:N+1],ocEP{1}.y(1:N+1),opt,'movinghorizon',1)
>> opt=setocoptions(opt,'OCCONTARG','MaxStepWidth',1e2,'InitStepWidth',5e0,'MaxContinuationSteps',70);
>> c=bvpcont('extremal2ep',sol,[],opt);
>> store(m,'extremal2ep');ocEx=extremalsolution(m);n=length(ocEx);
\end{matlab}
Next we compute a stable path that converges to the defective equilibrium \hcssnspp. The stable manifold of the defective equilibrium (it has defect $-1$) is of dimension $(N+1)-1$. Taking the initial states $P^d(0)$ of the solution of the last continuation process, in \OCMAT{} notation \lstinline+ocEx{n}.y(1:N+1,1)+, there exists $\contpar_1$ such that $P_0=P^d(0)+\contpar_1v_1$ with $v_1=(1,\ldots,1)^\transp\in\R^{N+1}$ is lying in the $N$-dimensional stable manifold. Thus, we solve the according homotopy problem.
\begin{matlab}
>> opt=setocoptions(opt,'OCCONTARG','MaxStepWidth',1e1,'InitStepWidth',1e-1,'SBVPOC','MeshAdaptAbsTol',1e-4,'MeshAdaptRelTol',1e-3);
>> V1=ones(N+1,1);eval=real(eig(ocEP{8}));eval(eval>0)=[];T=10/min(abs(eval))
>> sol=initocmat_AE_EP(m,ocEP{8},[1:N+1],ocEx{n}.y(1:N+1,1),opt,'TruncationTime',T,'freevector',V1);
>> c=bvpcont('extremal2ep',sol,[],opt);
\end{matlab}
The result of the computations can be plotted using \OCMAT{\ plotting commands.
\begin{matlab}
>> clf;xcoord=1;ycoord=2;xvar='spatialnorm';yvar='spatialnorm';
>> plotcont(m,xvar,xcoord,yvar,ycoord,'contfield','ExtremalSolution','Index',[2 4 5]);hold on,
>> plotlimitset(m,xvar,xcoord,yvar,ycoord,'Index',[1 7 8],'Marker','.','MarkerSize',10,'showspp',1,'showflat',1);
>> hold off;figure(gcf)
\end{matlab}

\subsubsection{Indifference threshold point and manifold}
\label{sec:SkibaPointAndManifold}
This example presents in detail the computation of the results from \cref{sec:DetectionOfAnIndifferenceThresholdPoint}. First we load the models data, retrieve the equilibria and remove the already stored continuation results.
\begin{matlab}
>> m=stdocmodel('shallowlakelinecoarse');m=changeparametervalue(m,'b,c',[0.65 0.5]);
>> load(m,'%1.2f',1,'b,c');ocEP=equilibrium(m);
>> removeresult(m,'Continuation');
\end{matlab}
The solutions that start at the states of the seventh equilibrium, a \hcssspp, and converge to the eutrophic and oligotrophic equilibrium are computed.
\begin{matlab}
>> idx=find(~isflat(m,ocEP{:}) & isspp(m,ocEP{:}))
idx =
     4     7
>> opt=setocoptions('OCCONTARG','MaxStepWidth',1e1,'InitStepWidth',5e-1,'CheckAdmissibility','off','SBVPOC','MeshAdaptAbsTol',1e-4,'MeshAdaptRelTol',1e-3,'GENERAL','TrivialArcMeshNum',20);		
>> eval=real(eig(ocEP{1}));eval(eval>0)=[];T=10/min(abs(eval));
>> sol=initocmat_AE_EP(m,ocEP{1},1:N+1,ocEP{7}.y(1:N+1),opt,'TruncationTime',T);
>> c=bvpcont('extremal2ep',sol,[],opt);
>> store(m,'extremal2ep');ocEx=extremalsolution(m);n=length(ocEx);
>> eval=real(eig(ocEP{17}));eval(eval>0)=[];T=10/min(abs(eval));
>> sol=initocmat_AE_EP(m,ocEP{17},1:N+1,ocEP{7}.y(1:N+1),opt,'TruncationTime',T);
>> c=bvpcont('extremal2ep',sol,[],opt);
>> store(m,'extremal2ep');ocEx=extremalsolution(m);n=length(ocEx);
\end{matlab}
Plotting the objective value (Hamiltonian) shows that the eutrophic solution is the optimal solution. Next the last eutrophic solution is continued (for ten steps) in direction of the oligotrophic equilibrium. 
\begin{matlab}
>> clf;xcoord=1;ycoord=1;xvar='spatialnorm';yvar='hamiltonian';
>> plotcont(m,xvar,xcoord,yvar,ycoord,'contfield','SliceManifold','Index',[1 2]);figure(gcf)
>> opt=setocoptions(opt,'OCCONTARG','InitStepWidth',2.5e0,'MaxContinuationSteps',10);
>> sol=initocmat_AE_AE(m,ocEx{2},1:N+1,ocEP{1}.y(1:N+1),opt);
>> c=bvpcont('extremal2ep',sol,[],opt);
>> store(m,'extremal2ep');ocEx=extremalsolution(m);n=length(ocEx);
\end{matlab}
Comparing the objective value of the solutions for the first and third continuation process yields an (heterogeneous) indifference threshold point. Finally the solutions starting at the indifference threshold point and converging to the \foss{} are computed.
\begin{matlab}
>> ipt0=findindifferencepoint(m,1,3);
>> opt=setocoptions(opt,'OCCONTARG','InitStepWidth',1e1,'MaxContinuationSteps',inf);
>> eval=real(eig(ocEP{1}));eval(eval>0)=[];T=10/min(abs(eval));
>> sol=initocmat_AE_EP(m,ocEP{1},1:N+1,ipt0,opt,'TruncationTime',T);
>> c=bvpcont('extremal2ep',sol,[],opt);
>> store(m,'extremal2ep');ocEx=extremalsolution(m);n=length(ocEx);
>> sol=initocmat_AE_AE(m,ocEx{2},1:N+1,ipt0,opt);
>> c=bvpcont('extremal2ep',sol,[],opt);
>> store(m,'extremal2ep');ocEx=extremalsolution(m);n=length(ocEx);
\end{matlab}
The analogous computations are done for the second \hcssspp{} (\lstinline+ocEP{4}+), yielding a second (heterogeneous) indifference threshold point. 
\begin{matlab}
>> opt=setocoptions(opt,'OCCONTARG','MaxStepWidth',1e1,'InitStepWidth',5e-1,'MaxContinuationSteps',30);
>> eval=real(eig(ocEP{1}));eval(eval>0)=[];T=10/min(abs(eval));
>> sol=initocmat_AE_EP(m,ocEP{1},1:N+1,ocEP{4}.y(1:N+1),opt,'TruncationTime',T);
>> c=bvpcont('extremal2ep',sol,[],opt);
>> store(m,'extremal2ep');ocEx=extremalsolution(m);n=length(ocEx);
>> opt=setocoptions(opt,'OCCONTARG','MaxContinuationSteps',inf);
>> eval=real(eig(ocEP{17}));eval(eval>0)=[];T=10/min(abs(eval));
>> sol=initocmat_AE_EP(m,ocEP{17},1:N+1,ocEP{4}.y(1:N+1),opt,'TruncationTime',T);
>> c=bvpcont('extremal2ep',sol,[],opt);
>> store(m,'extremal2ep');ocEx=extremalsolution(m);n=length(ocEx);
>> clf;xcoord=1;ycoord=1;xvar='spatialnorm';yvar='hamiltonian';
>> plotcont(m,xvar,xcoord,yvar,ycoord,'contfield','SliceManifold','Index',[6 7]);figure(gcf)
>> opt=setocoptions(opt,'OCCONTARG','InitStepWidth',2.5e0,'MaxContinuationSteps',10);
>> sol=initocmat_AE_AE(m,ocEx{7},1:N+1,ocEP{1}.y(1:N+1),opt);
>> c=bvpcont('extremal2ep',sol,[],opt);
>> store(m,'extremal2ep');ocEx=extremalsolution(m);n=length(ocEx);
>> ipt=findindifferencepoint(m,6,8);
>> opt=setocoptions(opt,'OCCONTARG','InitStepWidth',1e1,'MaxContinuationSteps',inf);
>> eval=real(eig(ocEP{1}));eval(eval>0)=[];T=10/min(abs(eval));
>> sol=initocmat_AE_EP(m,ocEP{1},1:N+1,ipt,opt,'TruncationTime',T);
>> c=bvpcont('extremal2ep',sol,[],opt);
>> store(m,'extremal2ep');ocEx=extremalsolution(m);n=length(ocEx);
>> sol=initocmat_AE_AE(m,ocEx{7},1:N+1,ipt,opt);
>> c=bvpcont('extremal2ep',sol,[],opt);
>> store(m,'extremal2ep');ocEx=extremalsolution(m);n=length(ocEx);
\end{matlab}
Finally, we use continuation to find the intermediate indifference threshold points from the transformation of the first to the second indifference threshold point.
\begin{matlab}
>> opt=setocoptions(opt,'OCCONTARG','MaxStepWidth',1e1,'InitStepWidth',5e-1,'SBVPOC','BCJacobian',0);
>> v=ones(N+1,1);
>> sol=initocmat_AE_IDS(m,ocEx(9:10),v,ipt0,opt);
>> c=bvpcont('indifferencedistribution',sol,[],opt);
>> store(m,'indifferencedistribution');
\end{matlab}

\bibliographystyle{plainnat}
\bibliography{diffusion_instability}
%\if\reduced\TRUE
%\execute{echo copy diffusion_instability.pdf diffusion_instability_reduced.pdf > do.bat} 
%\else
%\execute{echo copy diffusion_instability.pdf diffusion_instability_lossless.pdf > do.bat} 
%\fi
\end{document}